\newtheorem{lemma}{Lemma}
\newtheorem{theorem}{Theorem}
\newtheorem{proposition}{Proposition}
\newtheorem*{note}{Notations}
\theoremstyle{definition}
\newtheorem{definition}{Definition}
\theoremstyle{remark}
\newtheorem*{remark}{Remark}
\newtheorem{example}{Example}
\newtheorem{hypothesis}{Assumption}
\def\1{\mathbbm{1}}
\newcommand{\Y}{\mathsf{Y}}
\newcommand{\y}{\mathsf{Y}}
\newcommand{\Yvect}{\boldsymbol{\mathsf{Y}}}
\newcommand{\yvect}{\boldsymbol{\mathsf{Y}}}
\newcommand{\kk}{\mathsf{k}}
\newcommand{\kvect}{\boldsymbol{\mathsf{k}}}
\newcommand{\Psivect}{\boldsymbol{\Psi}}
\newcommand{\Phivect}{\boldsymbol{\Phi}}
\newcommand{\X}{\mathsf{X}}
\newcommand{\Xvect}{\boldsymbol{\mathsf{X}}}
\newcommand{\mc}{\boldsymbol{\zeta}}
\newcommand{\mcc}{\zeta}
\newcommand{\U}{\mathsf{U}}
\newcommand{\alphavect}{\boldsymbol{\alpha}}
\newcommand{\paramset}{\Lambda}
\newcommand{\paramvect}{\boldsymbol{\lambda}}
\newcommand{\param}{\lambda}
\newcommand{\mmse}{\mathsf{MMSE}}
\newcommand{\vari}{\mathsf{Variance}}
\newcommand{\bias}{\mathsf{Bias}}
\newcommand{\R}{\mathsf{R}}
\newcommand{\Z}{\mathsf{Z}}
\newcommand{\Rvect}{\boldsymbol{\mathsf{R}}}
\newcommand{\Zvect}{\boldsymbol{\mathsf{Z}}}
\newcommand{\D}{\textbf{D}_2}
\begin{document}

\title{Risk Estimate under a Time-Varying Autoregressive Model for Data-Driven Reproduction Number Estimation
\thanks{B. Pascal is supported by ANR-23-CE48-0009 ``\textit{OptiMoCSI}'' and S. Vaiter by ANR-18-CE40-0005 ``\textit{GraVa}''.
The authors thank Patrice Abry for insightful discussions.}}

\author{Barbara Pascal and Samuel Vaiter
\thanks{B. Pascal is with Nantes Université, École Centrale Nantes, CNRS, LS2N, UMR 6004, F-44000 Nantes, France, (e-mail: barbara.pascal@cnrs.fr, \textit{corresponding author}); 
S. Vaiter is with CNRS, Université Côte d’Azur, LJAD, Nice,  France (\underline{e-mail:} samuel.vaiter@cnrs.fr).}}

\maketitle

\abstract{
COVID-19 pandemic has brought to the fore epidemiological models which, though describing a wealth of behaviors, have previously received little attention in signal processing literature. In this work,  a generalized time-varying autoregressive model is considered, encompassing, but not reducing to, a state-of-the-art model of viral epidemics propagation. The time-varying parameter of this model is estimated via the minimization of a penalized likelihood estimator. A major challenge is that the estimation accuracy strongly depends on hyperparameters fine-tuning. Without available ground truth, hyperparameters are selected by minimizing specifically designed data-driven oracles, used as proxy for the estimation error. Focusing on the time-varying autoregressive Poisson model, Stein's Unbiased Risk Estimate formalism is generalized to construct asymptotically unbiased risk estimators based on the derivation of an original autoregressive counterpart of Stein's lemma. The accuracy of these oracles and of the resulting estimates are assessed through intensive Monte Carlo simulations on synthetic data. Then, elaborating on recent epidemiological models, a novel weekly scaled Poisson model is proposed, better accounting for intrinsic variability of the contaminations while being robust to reporting errors. Finally, the data-driven procedure is particularized to the estimation of COVID-19 reproduction number from weekly infection counts demonstrating its ability to tackle real-world applications.
}

\tableofcontents

\section{Introduction}
\label{sec:introduction}

\noindent \textbf{Context.} Inverse problems are ubiquitous in signal and image processing~\cite{idier2013bayesian,giovannelli2015regularization,wei2022deep,melidonis2023efficient}, with a wealth of domains of application as diverse as nonlinear physics~\cite{colas2019nonlinear,pascal2020parameter}, astronomy~\cite{denneulin2021rhapsodie}, hyperspectral imaging~\cite{borsoi2021spectral}, tomography~\cite{savanier2023deep}, cardiology~\cite{bear2018effects} and epidemiology~\cite{pascal2022nonsmooth}.
A general inverse problem consists in estimating underlying quantities of interest from direct or indirect observations.
The intricate measurement processes necessary to obtain \emph{physical}\footref{fn:physical} observations can be the source of several difficulties in estimating the quantities of interest, among which: acquisition performed in a transformed domain requiring backward transformation to access the quantity of interest~\cite{savanier2023deep}, linear or nonlinear deformation of observations through the measurement process~\cite{cachia2023fluorescence,repetti2014euclid}, and corruption by stochastic perturbations, either stemming from the physical\footnote{\label{fn:physical}\emph{Physical} is to be understood in a very broad sense, encompassing biological, epidemiological as well as human data.} phenomenon at stake or from the measurement device~\cite{giovannelli2015regularization}.
Mathematically, without loss of generality, the quantities of interest and observations can be represented by real-valued vectors $\Xvect \in \mathbb{R}^T$ and $\Yvect \in \mathbb{R}^S$ respectively, with $T, S \in \mathbb{N}^*$,  and a generic inverse problem writes
\begin{align}
\label{eq:gen-pb}
\Yvect \sim \mathcal{B}(\textbf{A}(\Xvect))
\end{align}
where $\textbf{A} : \mathbb{R}^T \rightarrow \mathbb{R}^S$ is a possibly nonlinear and non-invertible continuous transformation and $\mathcal{B}$ is a stochastic degradation, possibly \emph{data-dependent}, i.e., neither additive nor multiplicative.
For example, in low-photon imaging~\cite{melidonis2023efficient},  $\textbf{A}$ is a singular blur operator and the observations are corrupted by Poisson noise,  i.e.,  for $s \in \lbrace 1, \hdots, S\rbrace$,  the  random variable $\Y_s $ follows a Poisson distribution of intensity $(\textbf{A}\Xvect)_s$.

Most inverse problems of the form~\eqref{eq:gen-pb} are \emph{ill-posed}, e.g., when $S < T$ leading to non-injective $\textbf{A}$, hence compromising uniqueness of the quantity of interest given observations under Model~\eqref{eq:gen-pb},  or \emph{ill-conditioned}, e.g., when $\textbf{A}$ is a linear operator with a large conditioning number, inducing numerical instability which, in the presence of noise, translates into prohibitively large estimation variance.
Numerous strategies have been designed to address these challenging limitations,  including variational approaches computing a deterministic solution~\cite{Tikhonov_A_1963_j-sov-mat-dok_tikhonov_ripp,wahba1990spline,engl1996regularization,chouzenoux2015convex,foare2019semi,vacar2019unsupervised}, Bayesian techniques well-suited to probability distribution probing~\cite{tarantola2005inverse,giovannelli2015regularization,melidonis2023efficient,fort2023covid19}, and the recently developed supervised machine learning strategies relying on abundant annotated data~\cite{chan2016plug,vu2021unrolling,laumont2022bayesian}.
The present work focuses on \emph{unsupervised} variational approaches, which are particularly adapted when tackling fundamental research problems~\cite{colas2019nonlinear,pascal2020parameter} or newly emerging phenomena~\cite{pascal2022nonsmooth}, for which complex physical models have been proposed but with no annotated data available.
While the negative log-likelihood associated to Model~\eqref{eq:gen-pb} provides a measure of the fidelity of the observations to the model,  a regularization term is designed to lift the ambiguity in the solution of ill-posed and ill-conditioned problems,  e.g., leveraging negative log-prior on the quantity of interest~\cite{robert2007bayesian,gribonval2011should}.
Balancing the fidelity to the data and the regularity constraints then amounts to compute the maximum a posteriori estimate of the quantity of interest as
\begin{align}
\label{eq:MAP}
\widehat{\Xvect}(\Yvect;\paramvect) \in \underset{\Xvect \in \mathbb{R}^T}{\mathrm{Argmin}} \, \, \, \mathcal{D}(\Yvect, \textbf{A}\Xvect) + \mathcal{R}(\Xvect;\paramvect),
\end{align}
where $\mathcal{D}$ is the negative log-likelihood used as a measure of discrepancy between the observations and the model,  $\mathcal{R}$ is the negative log-prior penalizing highly irregular solutions, and $\paramvect \in \paramset$ is a set of hyperparameters controlling the level of regularity enforced in the estimate.
For example, in the study of solid friction targeting the characterization of the stick-slip regime~\cite{colas2019nonlinear,pascal2020parameter}, observations consist in a force signal measured across time corrupted by independent identically distributed (i.i.d.) additive Gaussian noise of variance $\sigma^2$. 
Hence, $T = S$ and $\textbf{A} = \textbf{I}_T$,  with the data fidelity term in~\eqref{eq:MAP} reducing to the negative log-likelihood of a Gaussian random vector of mean $\Xvect$ and scalar covariance matrix $\sigma^2 \textbf{I}_T$, that is $\mathcal{D}(\Yvect, \textbf{A}\Xvect)  = \lVert \Yvect - \Xvect \rVert_2^2/\sigma^2$.
Physicists expect the stick-slip regime to induce an almost piecewise linear force signal,  a behavior that is favored in the estimate through the penalization $\mathcal{R}(\Xvect;\param) = \param \lVert \D \Xvect \rVert_1$, where $\D : \mathbb{R}^T \rightarrow \mathbb{R}^{T-2}$ is the discrete Laplacian operator,\footnote{$\D$ consists in the discrete second order derivative defined such that $\left( \D \Xvect\right)_t = \X_{t+2} - 2 \X_{t+1} + \X_t$,   $\forall t \in \lbrace 1, \hdots, T-2\rbrace$.} the $\ell_1$-norm enforces sparsity of the second order derivative of the estimate,  and the level of sparsity is controlled by the regularization parameter $\param > 0$.
As in most ill-posed inverse problems,  the tuning of $\param$ is key to obtain an accurate estimate: for small $\param$, some noise remains, while large $\param$ might cause significant information loss due to over-regularization.

\noindent \textbf{Related works.}  Ideally, the hyperparameters would be selected by minimizing the \emph{estimation error} choosing
\begin{align}
\label{eq:true_risk}
\paramvect^\dagger \in \underset{\paramvect \in \paramset}{\mathrm{Argmin}} \, \, \left\lVert \widehat{\Xvect}(\Yvect;\paramvect) - \overline{\Xvect} \right\rVert^2_2
\end{align}
where $\overline{\Xvect}$ denotes the ground truth quantity of interest.
Though, in practice, the lack of ground truth impairs evaluation of the estimation error.
To tackle this issue, several classes of hyperparameter tuning strategies have been developed during the past decades.
For additive Gaussian noise models, leading to quadratic data-fidelity terms, regularized with quadratic penalizations, the L-curve criterion consists in selecting the regularization level by close inspection of the plot of the residual against the regularity of the solution as the regularization parameter is varied~\cite{hansen1993use}.
Not only the L-curve method is restricted to Gaussian models under Tikhonov regularization, but also it has been shown to be inaccurate when the targeted ground truth is very smooth~\cite{hanke1996limitations} and to behave inconsistently as the size of the problem increases~\cite{vogel1996non}.
An alternative way is to model the hyperparameters as random variables and to estimate them via hierarchical Bayesian techniques~\cite{robert2007bayesian,giovannelli2015regularization}. 
Such strategies come at the price of an extra complexity, as they require to specify the a priori hyperparameters distribution, and a significant computational cost, as Monte Carlo sampling is often necessary.
Recent deep learning methods take advantage of large training databases to learn the hyperparameters of variational estimators~\cite{bertocchi2020deep,nguyen2023map,gharbi2024unrolled}.
While being very accurate as soon as enough annotated data are available, these methods are not adapted to tackle fundamentally new problems for which the generation of training database would either be too costly or even not possible due to lack of available expert knowledge.
A widely used class of unsupervised methods consists in the construction of an \emph{oracle} $\mathcal{O}$ not depending explicitly on the ground truth, which can thus be evaluated in practice, and whose minimization yields an approximation of the optimal hyperparameter $\paramvect^\dagger$ defined as
\begin{align}
\label{eq:orcl_lambda}
\paramvect_{\mathcal{O}} \in \underset{\paramvect \in \paramset}{\mathrm{Argmin}} \, \, \mathcal{O}(\Yvect ; \paramvect)
\end{align}
where $\paramset \subset \mathbb{R}^L$ denotes the set of admissible hyperparameters.
For linear parametric estimates under additive Gaussian noise hypothesis, the Generalized Cross Validation strategy~\cite{golub1979generalized} consists in minimizing a data-dependent criterion constructed as the ratio between the residual sum of squares and the trace of a model-dependent linear operator.
Several extensions have been proposed to extend it to sparsity-inducing estimators~\cite{tibshirani1996regression,jansen2015generalized}.
Though, it has been shown that the Generalized Cross Validation function might not have a unique properly defined minimizer~\cite{thompson1989cautionary}, leading the Generalized Cross Validation method to fail catastrophically by producing grossly underestimated regularization parameters.
Furthermore and even more importantly, Generalized Cross Validation requires that the degradation $\mathcal{B}$ in~\eqref{eq:gen-pb} is \emph{independent} and \emph{stationary} across observations~\cite{golub1979generalized} which is a very restrictive hypothesis excluding, e.g., heteroscedastic~\cite{hooper1993iterative} and correlated noise~\cite{pascal2021automated}.
The recently proposed Approximate Leave-One-Out (ALO) Cross Validation procedure, while particularly efficient and scaling to high-dimensional problems, also relies on independence and stationarity assumptions~\cite{auddy2024approximate,nobel2024randalo}.
Among the methods relying on the design of a tractable oracle,   \emph{Stein's Unbiased Risk Estimate} based strategies, elaborating on the seminal work~\cite{stein1981estimation} to construct an approximation of the estimation risk,  were initially formulated for i.i.d. Gaussian noise model,  but have then been extended to more general noise models~\cite{eldar2008generalized}, notably including a data-dependent Poisson contribution~\cite{luisier2010image,le2014unbiased,li2017pure}.
In the past decades, Stein-based strategies have demonstrated their ability to provide accurate hyperparameter selection in numerous applications, reaching state-of-the-art performance in inverse problem resolution,  both in the variational framework, e.g.,  for multispectral image deconvolution~\cite{ammanouil2019parallel} or denoising of force signal in nonlinear physics~\cite{pascal2020parameter}, and more recently in unsupervised deep learning for image denoising~\cite{chen2022robust}.

\noindent \textbf{Contributions and outline.}
The recent COVID-19 pandemic crisis has triggered massive research efforts on epidemic modeling and surveillance, way beyond the scientific community of epidemiologists~\cite{abry2020spatial, team2020covid,flahault2020covid,oustaloup2021non,nash2022real,paireau2022ensemble,bhatia2023extending,nash2023estimating}.
Notably, the challenging estimation of the COVID-19 transmissibility in real-time, and with high accuracy despite the low quality of data collected day-by-day by health agencies, has been reformulated as a look-alike inverse problem, enabling to leverage the state-of-the-art variational estimators to get very accurate estimates of the \emph{reproduction number}, a crucial indicator quantifying the intensity of an epidemic~\cite{abry2020spatial,pascal2022nonsmooth}.
The popular model for viral epidemics proposed in~\cite{cori2013new} states that the number of new infections at time $t$, $\Z_t$, follows a Poisson distribution whose time-varying intensity is the product of the \emph{global infectiousness}, defined as a weighted sum of past infection counts $\Phi_t(\Zvect) = \sum_{s \geq 1} \varphi_s \Z_{t-s}$, and of the effective reproduction number at time $t$, $\R_t$,
\begin{align}
\label{eq:model_epi}
\Z_t \mid \Z_1, \hdots, \Z_{t-1} \sim \mathcal{P}\left( \Phi_t(\Zvect)  \R_t \right).
\end{align}
Model~\eqref{eq:model_epi} is very reminiscent of Problem~\eqref{eq:gen-pb},  the unknown quantity $\Xvect$ being $(\R_1, \hdots, \R_T)$, the role of $\textbf{A}$ being played by the linear operator $
(\R_1, \hdots, \R_T) \mapsto \left(\Phi_1(\Zvect)  \R_1, \hdots, \Phi_T(\Zvect)  \R_T\right)$
and the degradation $\mathcal{B}$ consisting in data-dependent Poisson noise.
Elaborating on this formal resemblance, the variational framework~\eqref{eq:MAP} has been fruitfully leveraged to design COVID-19 effective reproduction number $\R_t$ estimators~\cite{abry2020spatial,pascal2022nonsmooth}.
Up to now, the fine-tuning of the regularization parameters of these variational estimators has been done manually, based on expert knowledge, which not only impairs the analysis of huge amount of data, but also might reflect subjective biases of the users.
To derive a fully data-driven hyperparameter selection strategy, the main challenge lies in the design of an adapted oracle $\mathcal{O}$, for example of a Stein estimator.
Indeed,  meticulous comparison of Models~\eqref{eq:gen-pb} and~\eqref{eq:model_epi} shows that the autoregressive nature of the epidemiological model~\eqref{eq:model_epi}, which induce a dependency of the linear operator $(\R_1, \hdots, \R_T) \mapsto \left(\Phi_1(\Zvect)  \R_1, \hdots, \Phi_T(\Zvect)  \R_T\right)$ in the observation vector $\Zvect$, definitively excludes direct use of generalized Stein Unbiased Risk Estimators, requiring  $\textbf{A}$ to be statistically independent of $\Zvect$, if not deterministic~\cite{stein1981estimation,eldar2008generalized,pascal2020parameter,pascal2021automated}.
Developing a novel Stein paradigm, adapted to autoregressive models, is a challenging, though crucial, step toward the design of fully data-driven, hence objective,  strategies for reproduction number estimation.

Section~\ref{sec:model} first proposes a formal description of the generalized \emph{time-varying autoregressive models} considered, providing a general framework encompassing Model~\eqref{eq:model_epi}; then the variational framework is leveraged to design estimators of generalized time-varying autoregressive models unknown parameters under several commonly encountered noise distributions.
The proposed original autoregressive Stein paradigm is developed in Section~\ref{sec:cure}; a novel Stein's-type lemma is first derived in Section~\ref{ssec:A-lemma} for  generalized time-varying autoregressive models involving Poisson noise,  and then used to derive a prediction and an estimation unbiased risk estimators in Section~\ref{ssec:apure}; finally Finite Differences and Monte Carlo strategies are implemented to yield tractable \emph{Autoregressive Poisson Unbiased Risk Estimates}.
The accuracy of the derived risk estimates is supported by intensive numerical simulations on synthetic data, presented in Section~\ref{sec:numerical}.
Then,  in Section~\ref{sec:covid},  a novel \emph{weekly scaled Poisson} epidemiological model, accounting more precisely for the intrinsic variability of the pathogen propagation while being robust to administrative noise, is introduced.
Finally, the Autoregressive Poisson Unbiased Risk Estimate is particularized to this new model to design an original data-driven COVID-19 reproduction number estimator, which is exemplified on real data from different countries worldwide and at different pandemic stages.

\begin{note} 
$\mathbb{R}$ denotes the set of real numbers, $\mathbb{R}_+$ the nonnegative real numbers and $\mathbb{R}_+^*$ the positive real numbers. 
$\mathbb{N}$ denotes the set of nonnegative integers and $\mathbb{N}^*$ the positive integers. 
Matrices are denoted in roman bold characters, e.g., $\mathbf{L}$, 
vectors in upper case sans serif bold characters, e.g.,  $\Yvect$,  and scalars in upper case sans serif plain characters, e.g.,  $\Y$. 
To avoid unnecessary complications, deterministic and random variables are denoted in the same font; explanations are provided in case the context induces any ambiguity.
For $\yvect \in \mathbb{R}^T$ a vector of length $T$, $\mathrm{diag}(\yvect) \in \mathbb{R}^{T\times T }$ denotes the diagonal square matrix of size $T$, with diagonal coefficients consisting in the components of $\yvect$.
The entrywise product (resp. division) between two vectors is denoted by $\odot$ (resp. $\centerdot /$).
\end{note}

\section{Generalized time-varying autoregressive models}
\label{sec:model}

\subsection{Observation model}
\label{ssec:driven-AR}

The present work considers a  \emph{generalized time-varying autoregressive} mo\-del encompassing both the standard autoregressive model of finite order and the epidemiological model introduced in~\cite{cori2013new} while generalizing both of them in several directions. 
The first major originality is that the model is externally controlled by an unknown deterministic time-varying \emph{reproduction coefficient} containing the information of interest. 
This setup drastically contrasts both with the standard time series framework, mostly interested in predicting the behavior of the observations rather than inferring the coefficients~\cite{davis2021count},  and with random coefficient autoregressive models in which the time-varying coefficients are random variables~\cite{hill2014unified,regis2022random}.
Second, observations are no longer restricted to follow an independent Gaussian distribution with constant variance, but instead any distribution with prescribed mean,  possibly depending on additional time-varying parameters. 
Importantly the noise is not necessary additive but might be data-dependent or multiplicative.
And finally,  the memory term $\Psi_t$ is not necessarily a linear function, it is only assumed to be \emph{causal}, that is depending only on past observations, and \emph{smooth}. 

\begin{definition}[Generalized time-varying
autoregressive model]
Let $T \in \mathbb{N}^*$ be a time horizon,  $\overline{\Xvect} = (\overline{\X}_1, \hdots, \overline{\X}_T) \in \mathbb{R}_+^T$ a time-varying reproduction coefficient, $\Y_0 \in \mathbb{R}_+^*$ an initial state, and for each $t \in \lbrace 1, \hdots, T \rbrace$,  $\Psi_t : \mathbb{R}^{t-1} \rightarrow \mathbb{R}$ a smooth function,  with by convention $\Psi_1 = \Y_0$.
Observations $\Yvect = (\Y_1,\hdots, \Y_T)$ follow a \textit{generalized  time-varying
autoregressive} model with reproduction coefficient $\overline{\Xvect}$ and memory functions $\lbrace \Psi_t, \, t = 1, \hdots, T \rbrace$ if and only if
\begin{align}
\label{eq:model}
\forall t\in\lbrace 1, \hdots, T\rbrace, \quad \Y_t  \sim \mathcal{B}_{\alpha_t}\left( \overline{\X}_t \Psi_t(\Y_1, \hdots, \Y_{t-1}) \right), 
\end{align}
where $\mathcal{B}_{\alpha}(\mathsf{U})$ denotes a probability distribution of mean $\mathsf{U} \in \mathbb{R}$ depending on an additional parameter $\alpha\in \mathbb{R}$.
\label{def:model}
\end{definition}

The memory functions $\lbrace \Psi_t, \, t = 1, \hdots, T \rbrace$ encapsulate how the memory of past observations impacts the process  at time $t$.
In this work, they are assumed to be perfectly known.
As an external source of nonstationarity, the parameter of the probability distribution $\alpha_t$ is allowed to vary with time.

\begin{example}[Linear time-varying autoregressive model]
\label{ex:linear}
The class of \emph{linear} time-varying autoregressive models corresponds to \emph{linear} functions $\lbrace \Psi_t, \, t = 1, \hdots, T\rbrace$ defined as
\begin{align}
\label{eq:linear}
\Psi_t(\Y_1, \hdots, \Y_{t-1}) = \sum_{s = 1}^{\min(\tau,t-1)} \psi_s \Y_{t-s}
\end{align}
where $\tau \in \mathbb{N}^*$ is a finite memory horizon.
The sequence $\left\lbrace\psi_s\right\rbrace_{s=1}^\tau$, encoding the dynamical characteristics of the system, then fully characterizes the entire family of memory functions.
If, moreover,  $\left\lbrace\psi_s\right\rbrace_{s=1}^\tau$ is normalized, that is if $\sum_{s=1}^\tau \psi_s = 1$, then the global trend in the behavior of $\Y_t$ is governed by $\overline{\X}_t$: if $\overline{\X}_t >1$, $\Y_t$ is exponentially growing, while if $\overline{\X}_t<1$, $\Y_t$ decreases exponentially fast.
\end{example}

Linear time-varying autoregressive models are widely used in epidemiology,  where the sequence $\lbrace \psi_s\rbrace_{s = 1}^\tau$ corresponds to the \emph{serial interval distribution}, accounting for the randomness of the time delay between primary and secondary infections, and $\overline{\X}_t$ corresponds to the \emph{effective reproduction number}~\cite{cori2013new,abry2020spatial}. 
Figure~\ref{sfig:poiss} provides a synthetic example of linear time-varying autoregressive observations under the Poisson model~\eqref{eq:poiss_model}, with $ \psi$ corresponding to a discretized Gamma distribution of mean $6.6$ and standard deviation $3.5$ truncated at $\tau = 25$, mimicking the serial interval function of COVID-19~\cite{guzzetta2020,riccardo2020} used in~\cite{cori2013new,abry2020spatial,pascal2022nonsmooth}.
The three time periods corresponding to $\overline{\X}_t > 1$ are represented as light blue areas (first row), and results in three temporally separated bumps in $\Y_t$ (second row).

\begin{remark}[Autoregressive model of order $\tau$] The generalized time-varying
autoregressive model of Definition~\ref{def:model} encompasses the \emph{standard} autoregressive model  of order $\tau \in \mathbb{N}^*$ defined as
\begin{align}
\Y_t = \sum_{s=1}^\tau \psi_s \Y_{t-s} + \Xi_t, \quad \Xi_t \sim \mathcal{N}(0,\alpha^2)
\end{align}
where $\left( \Xi_t \right)_{t\in \mathbb{N}^*}$ is a sequence of i.i.d. Gaussian variables of  zero mean and variance $\alpha^2$, for some $\alpha > 0$~\cite[Section 2.2]{shumway2000time}.
This standard autogressive model indeed corresponds to the \emph{linear} model described in Example~\ref{ex:linear} with \emph{constant} reproduction coefficient $\overline{\X}_t = 1$, and Gaussian noise with constant variance $\alpha_t^2 = \alpha^2$.
\end{remark}

It is worth noting that the present work focuses on time-varying
autoregressive processes which are \emph{driven} by an external \emph{unknown and time-varying} reproduction coefficient, which constitutes a paradigm drastically different from the thoroughly studied autoregressive processes, preventing from using the standard tool described, e.g., in~\cite{shumway2000time,davis2021count}.

\begin{example}[Noise models]
\label{ex:noises}
Numerous probability distributions $\mathcal{B}$ are encountered in the inverse problem literature~\cite{fevotte2009nonnegative,durand2010multiplicative,jezierska2011approach,le2014unbiased,melidonis2023efficient}. 
Three major representative examples adapted to the generalized time-varying
autoregressive model introduced in Definition~\ref{def:model} are: \textit{i)} the additive Gaussian noise of variance $\alpha_t^2$
\begin{align}
\label{eq:gauss_model}
\Y_t \mid \Y_1, \hdots, \Y_{t-1}  \sim \mathcal{N}(\overline{\X}_t \Psi_t(\Yvect) ,\alpha_t^2)
\end{align}
where $\mathcal{N}(\mathsf{U} ,\alpha^2)$ denotes the Gaussian distribution of mean $\mathsf{U}$ and variance $\alpha^2$; \\
\noindent \textit{ii)} the scaled Poisson distribution with scale parameter $\alpha_t > 0$ 
\begin{align}
\label{eq:poiss_model}
\Y_t \mid \Y_1, \hdots, \Y_{t-1} \sim \alpha_t \mathcal{P} \left(\frac{ \overline{\X}_t \Psi_t(\Yvect) }{\alpha_t}  \right),
\end{align}
where $\mathcal{P}(\mathsf{U})$ denotes the Poisson distribution of intensity $\mathsf{U} $, which has mean and variance both equal to $\mathsf{U}$; \footref{fn:U}\\
\noindent \textit{iii)} the multiplicative Gamma noise of shape parameter $\alpha_t > 0$ 
\begin{align}
\label{eq:gamma_model}
\Y_t \mid \Y_1, \hdots, \Y_{t-1} \sim \mathcal{G}\left(\alpha_t, \frac{ \overline{\X}_t \Psi_t(\Yvect) }{\alpha_t}  \right),
\end{align}
where $\mathcal{G}\left(\alpha, \U  \right)$ refers to the Gamma distribution of shape parameter $\alpha$ and scale parameter $\U$.\footnote{\label{fn:U}By convention, whatever $\alpha > 0$, if $\mathsf{U} \leq 0$,  $\mathcal{P}(\mathsf{U})$ and $\mathcal{G}\left(\alpha, \U  \right)$ are Dirac distributions,  i.e., $\Y_t = 0$  deterministically.}
Both the Gaussian and Gamma distributions are well-suited to described continuous quantities.
The seminal additive Gaussian noise model is much used in biomedical signal analysis, for example to analyze electroencephalogram (EEG) recordings~\cite{zhang2010local,chang2012multivariate}, and in communications~\cite{yusuf2021autoregressive,vinogradova2022estimating}, while the multiplicative Gamma noise model has been used in financial econometrics~\cite{lanne2006mixture}.
Data-dependent Poisson noise is mostly used for count time series in which the observations can take only discrete, typically integer, values~\cite{regis2022random}, with applications in epidemiology~\cite{wallinga2004,cori2013new} and in climate studies, for example to analyze changepoints in the North Atlantic tropical cyclone record~\cite{robbins2011changepoints}.
Figure~\ref{fig:synthetic_ex} shows on the top row generalized time-varying autoregressive data under Model~\eqref{eq:model} with respectively additive Gaussian noise~\ref{sfig:gauss}, data-dependent Poisson noise~\ref{sfig:poiss}, and multiplicative Gamma noise~\ref{sfig:gamma} with prescribed time-varying reproduction coefficient, displayed in dark blue on the bottom row.
For all three examples, the function $ \psi$ shaping the memory term corresponds to a discretized Gamma distribution of mean $6.6$ and standard deviation $3.5$ truncated at $\tau = 25$, mimicking the serial interval function of COVID-19~\cite{guzzetta2020,riccardo2020}.
\end{example}

\subsection{Variational estimators}
\label{ssec:penal_like}

\begin{note}
From now on, for the sake of compactness, the driving term at $t$ will be denoted $\Psi_t(\Yvect) := \Psi_t(\Y_1, \hdots, \Y_{t-1})$ and the collection of all terms will be referred to as $\Psivect(\Yvect) = \left( \Psi_1(\Yvect), \hdots, \Psi_T(\Yvect) \right)$.
\end{note}

Given some observations $\Yvect = (\Y_1, \hdots, \Y_T)$, the most straightforward way to estimate the time-varying reproduction coefficient $\overline{\Xvect} = (\overline{\X}_1, \hdots, \overline{\X}_T)$ consists in maximizing the likelihood associated with Model~\eqref{eq:model}, yielding the Maximum Likelihood (ML) estimator:
\begin{align}
\label{eq:ML-est}
\widehat{\Xvect}^{\mathsf{ML}} = \underset{\Xvect \in \mathbb{R}^T}{\mathrm{argmin}} \, \, \mathcal{D}_{\alphavect}\left(\Yvect , \Xvect \odot\Psivect(\Yvect)\right)
\end{align}
where the discrepancy function $\mathcal{D}_{\alphavect}(\Yvect ,  \Xvect \odot \Psivect(\Yvect) )= - \ln \left( \mathbb{P}(\Yvect \lvert   \Xvect;\alphavect )\right)$ is the opposite log-likelihood of Model~\eqref{eq:model}.\footnote{Note that,  according to Model~\eqref{eq:model}, the probability distribution of the random vector $\Yvect = (\Y_1, \hdots, \Y_T)$ depends on the memory functions and on the initialization $\Y_0$ which are assumed known and deterministic, hence are not mentioned in the conditional probability. }

\begin{example}[Discrepancies, Example \ref{ex:noises} continued]
Under the additive Gaussian noise Model~\eqref{eq:gauss_model} the discrepancy is quadratic
\begin{align}
- \ln \left( \mathbb{P}(\Yvect \lvert  \Xvect ; \alphavect )\right) = \sum_{t=1}^T \frac{1}{\alpha_t^2}  \left( \Y_t -  \X_t \Psi_t(\Yvect) \right)^2 .
\end{align}
Under the scaled Poisson noise Model~\eqref{eq:poiss_model}, the discrepancy coincides with the so-called \textit{Kullback-Leibler divergence}
\begin{align}
\label{eq:DKL_gen}
- \ln \left( \mathbb{P}(\Yvect \lvert  \Xvect ; \alphavect )\right) = \sum_{t=1}^T \mathsf{d}_{\mathsf{KL}}  \left( \frac{\Y_t}{\alpha_t} \left\lvert  \frac{\X_t \Psi_t(\Yvect)}{\alpha_t} \right.\right),  \quad  \mathsf{d}_{\mathsf{KL}}(\Y\lvert \U) = \left \lbrace
\begin{array}{ll}
 \Y \ln\left( \frac{\Y}{\U}\right) + \U - \Y & \text{if } \, \Y>0, \, \U>0 \\
 \U & \text{if }\, \Y=0, \, \U \geq 0 \\
 \infty & \text{otherwise.}
\end{array}
\right.
\end{align}
Finally, under the multiplicative Gamma noise Model~\eqref{eq:gamma_model}, the discrepancy is, up to a term independent of $\U$, the Itakura-Saito divergence~\cite{fevotte2009nonnegative}
\begin{align}
- \ln \left( \mathbb{P}(\Yvect \lvert  \Xvect ; \alphavect )\right) = \sum_{t = 1}^T \mathsf{d}_{\mathsf{IS}}\left(\Y_t \lvert \alpha_t, \frac{\X_t \Psi_t(\Yvect)}{\alpha_t} \right), \quad \mathsf{d}_{\mathsf{IS}}(\Y\lvert \alpha, \U) = \left \lbrace
\begin{array}{ll}
 \frac{\Y}{\U} - \alpha \ln\left( \frac{\Y}{\U}\right) + \ln \left( \Gamma(\alpha) \Y\right) & \text{if } \, \Y>0, \, \U>0 \\
 \infty & \text{otherwise,}
\end{array}
\right.
\end{align}
where $\Gamma$ denotes the Euler gamma function.
\end{example}
For all three models, Gaussian~\eqref{eq:gauss_model}, Poisson~\eqref{eq:poiss_model} and Gamma~\eqref{eq:gamma_model}, provided that $\Psi_t(\Yvect)$ is nonzero for all  $t \in \lbrace 1, \hdots, T \rbrace$, the maximum likelihood estimator~\eqref{eq:ML-est} writes
\begin{align}
\label{eq:ML-eq}
\widehat{\X}_t^{\mathrm{ML}} = \Y_t \left/ \Psi_t(\Yvect)\right. .
\end{align}

The Maximum Likelihood estimates~\eqref{eq:ML-est} of the underlying reproduction number under all three models are displayed in light blue on the bottom row of Figure~\ref{fig:synthetic_ex}.
Due to the presence of noise in the observations (top plot,  black solid curve) the straightforward Maximum Likelihood estimate of the instantaneous reproduction coefficient (bottom plot, light blue curve) suffers from erratic fluctuations compared to ground truth (bottom plot, deep blue curve).
Such noisy estimate severely impairs the diagnostic of exponential growth based on $\X_t > 1$;  e.g.,  in Figure~\ref{sfig:poiss}, bottom plot, around $t = 50$, although ground truth is clearly above one, some values $\widehat{\X}^{\mathrm{ML}}_t < 1$ are observed.

Obtaining an accurate estimate of the reproduction coefficient thus requires to use additional information.
Widely used strategies consists in enforcing a priori constraints, such as, e.g., piecewise linearity~\cite{pascal2020parameter,abry2020spatial} and/or sparsity~\cite{pascal2022nonsmooth,gharbi2021gpu}.
To that aim, the variational framework consists in augmenting the negative log-likelihood objective of Equation~\eqref{eq:ML-est} with a regularization term enforcing a priori desirable properties on the estimate leading to parametric estimators of the form
\begin{align}
\label{eq:est_var}
\widehat{\Xvect}(\Yvect ; \paramvect) \in \underset{\Xvect \in \mathbb{R}^T}{\mathrm{Argmin}}  \, \,  \mathcal{D}_{\alphavect}\left(\Yvect ,  \Xvect \odot \Psivect(\Yvect) \right) +  \mathcal{R}( \Xvect ; \paramvect),
\end{align}
where $\paramvect = (\param_1, \hdots, \param_L) \in \paramset \subseteq \mathbb{R}_+^L$  is a vector of regularization parameters,  balancing the overall regularization level as well as the relative importance of the different constraints encoded in the penalization.
Commonly used regularization terms are composite~\cite{jin2009elastic,repetti2014euclid,repetti2021variable} and are expressed as
\begin{align}
\label{eq:regularization}
\mathcal{R}( \Xvect ; \paramvect) = \sum_{\ell = 1}^L \param_\ell \lVert \textbf{L}_{\ell} \Xvect \rVert_{q_\ell}^{q_\ell}
\end{align}
where for each $\ell \in \lbrace 1, \hdots, L\rbrace$,  $ \textbf{L}_{\ell}$ is a linear operator, $q_\ell \geq 1$ a positive exponent, and $\param_{\ell}\geq0$ is a regularization parameter balancing the importance of the $\ell$th constraint with respect to the other constraints in~\eqref{eq:regularization} and to the data-fidelity term of Equation~\eqref{eq:est_var}.
Each term of the functional enforces a specific constraint, hence enabling to take into account several regularity and sparsity properties simultaneously.
For example,  when choosing the discrete Laplacian $\textbf{L} = \textbf{D}_2$, $q_{\ell} = 1$, favors sparsity of the second order temporal derivative, and hence results in \emph{piecewise linear} estimates, while $q=2$ yields smooth estimates.
The penalized likelihood strategy sketched in Equation~\eqref{eq:est_var} is highly flexible and adapts to a large collection of noise models and constraints, hence, by favoring a priori behavior, it has the ability to provide consistent and accurate regularized estimates.
The excellent performance of variational estimators comes at the price of a cautious fine-tuning of the regularization parameters associated to each term of the penalization.
Not only is this task very tough to perform manually but also, and more importantly, in practice ground truth is not available, and it is necessary to resort to data-driven oracles to approach optimal hyperparameters.

\begin{figure}[t!]
\centering
\begin{subfigure}{0.325\linewidth}
\centering
\includegraphics[trim =  2.85cm 0mm 2.5cm 0mm, clip, width = \linewidth]{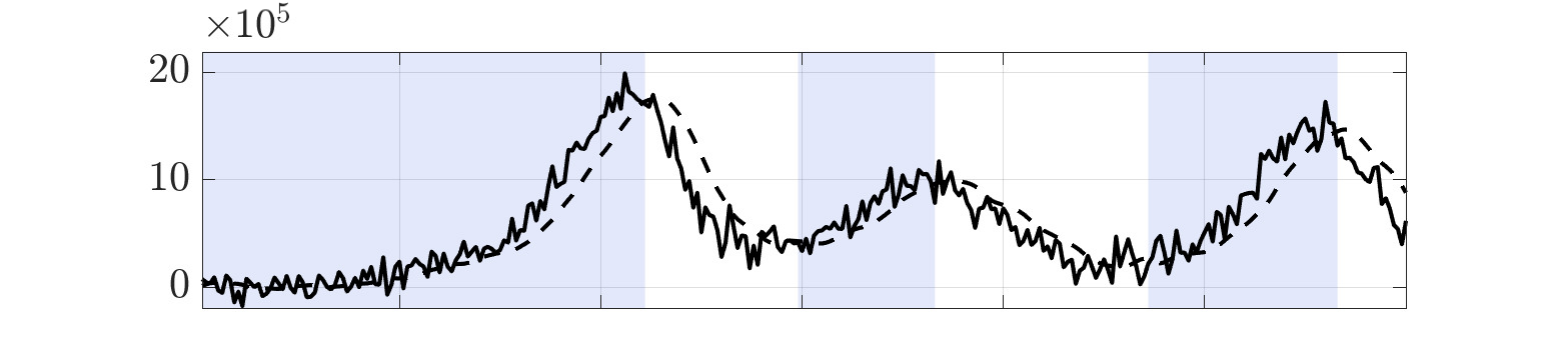}
\includegraphics[trim =  2.85cm 0mm 2.5cm 0mm, clip, width = \linewidth]{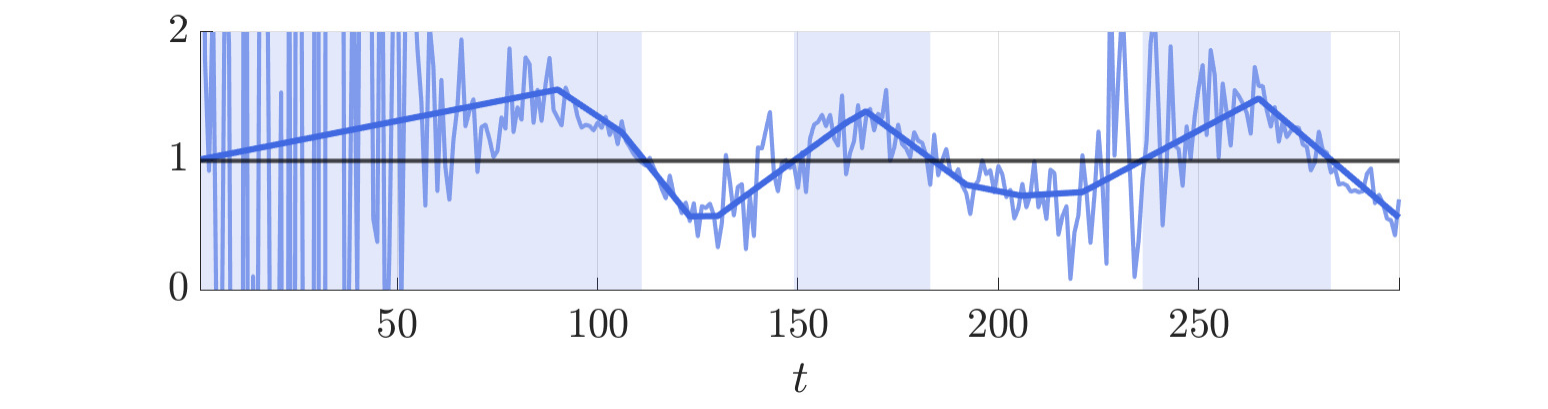}

\vspace{-2mm}
\subcaption{\label{sfig:gauss}Additive Gaussian Model~\eqref{eq:gauss_model}}
\end{subfigure}
\begin{subfigure}{0.325\linewidth}
\centering
\includegraphics[trim =  2.85cm 0mm 2.5cm 0mm, clip, width = \linewidth]{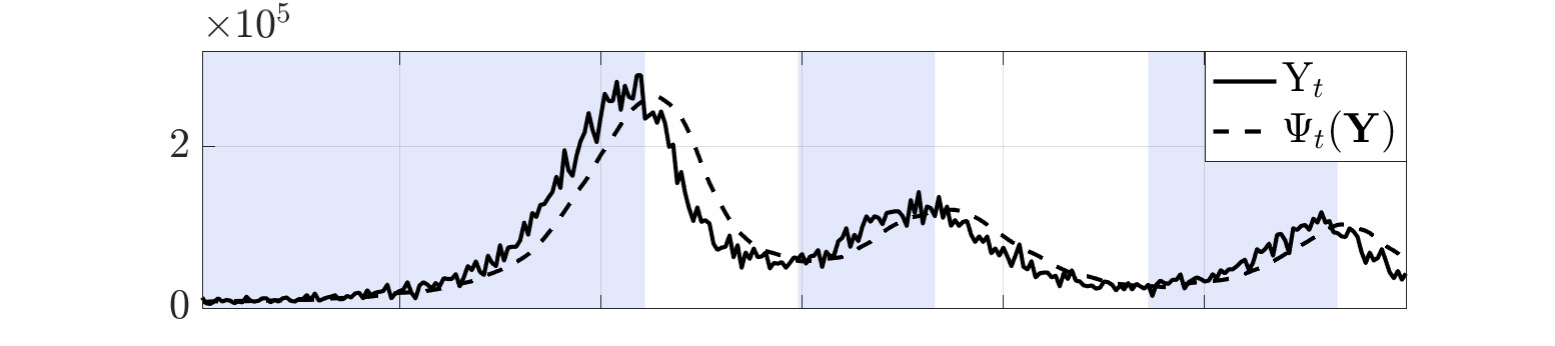}\\
\includegraphics[trim =  2.85cm 0mm 2.5cm 0mm, clip, width = \linewidth]{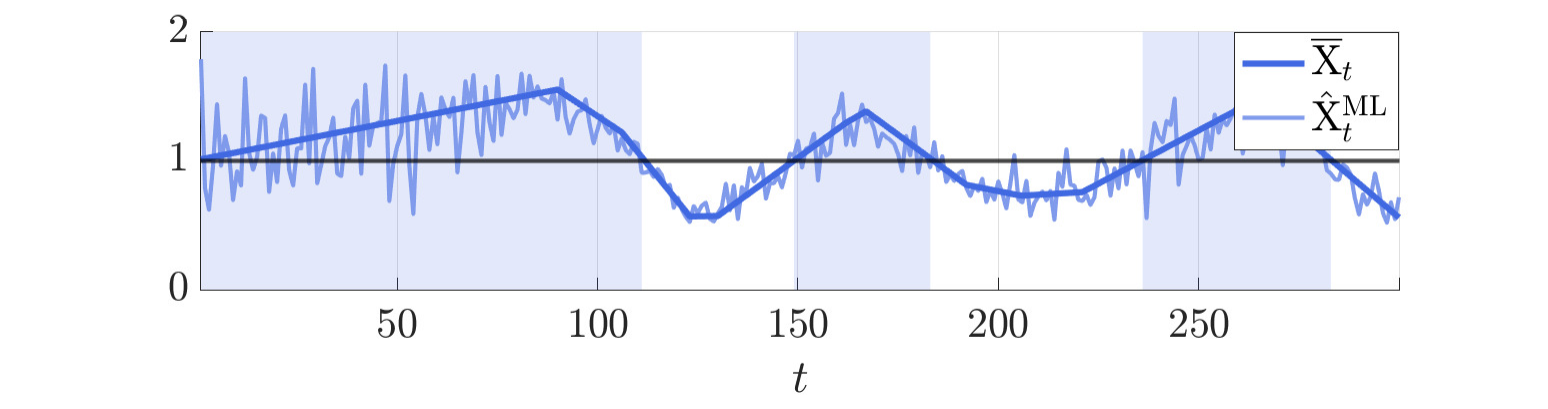}

\vspace{-2mm}
\subcaption{\label{sfig:poiss}Data-dependent Poisson Model~\eqref{eq:poiss_model}}
\end{subfigure}
\begin{subfigure}{0.325\linewidth}
\centering
\includegraphics[trim =  2.85cm 0mm 2.5cm 0mm, clip, width = \linewidth]{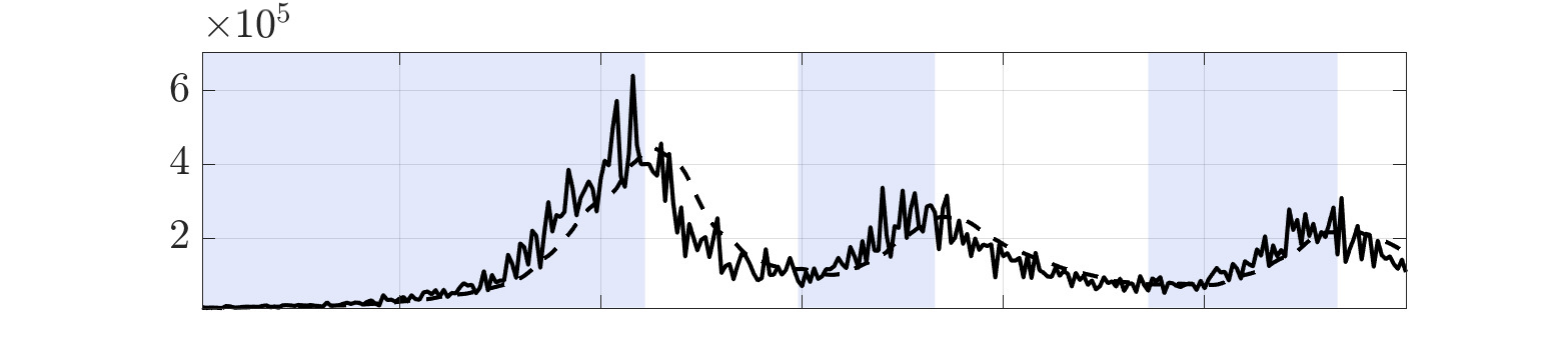}
\includegraphics[trim =  2.85cm 0mm 2.5cm 0mm, clip, width = \linewidth]{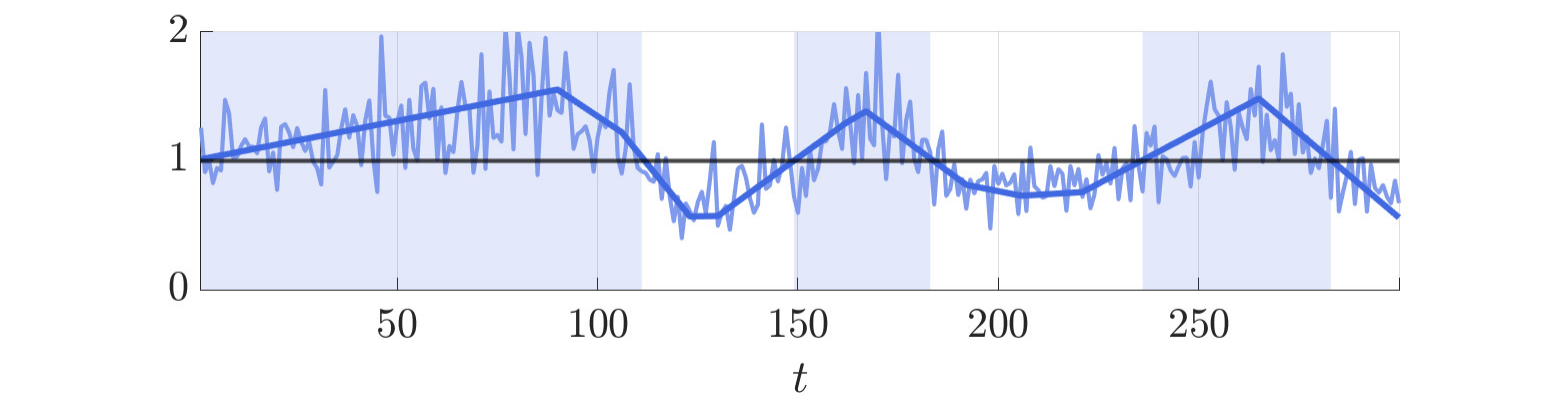}

\vspace{-2mm}
\subcaption{\label{sfig:gamma}Multiplicative Gamma Model~\eqref{eq:gamma_model}}
\end{subfigure}
\vspace{-2mm}
\caption{\textbf{Generalized time-varying autoregressive data and Maximum Likelihood reproduction coefficient estimates.}
The processes $\Y_t$, displayed as solid black curves on top row, follow Model~\eqref{eq:model} with piecewise linear reproduction coefficient $\X_t$, displayed as dark blue curve on the bottom row; linear memory functions~\eqref{eq:linear} with horizon $\tau = 25$; initialized at $\Y_0 = 10^4$ and run for $T = 300$ steps.  
The associated memory terms $\Psi_t(\Yvect)$ are displayed as dashed black curves on the top row.
Blue areas on the bottom row correspond to $\X_t \geq 1$ inducing exponential growth of $\Y_t$. 
The Maximum likelihood estimates~\eqref{eq:ML-est} of the underlying reproduction coefficient under each model are displayed as light blue curves on the bottom row.
(a) Additive Gaussian noise~\eqref{eq:gauss_model} with \emph{constant} standard deviation $\alpha_t \equiv 10^5$. 
(b) Poisson degradation~\eqref{eq:poiss_model}, with \emph{constant} scale parameter  $\alpha_t \equiv 10^3$.
(c) Gamma degradation~\eqref{eq:gamma_model}, with \emph{constant} shape parameter  $\alpha_t \equiv 20$.
\label{fig:synthetic_ex}}
\vspace{-3.5mm}
\end{figure}

\section{Unbiased risk estimators}
\label{sec:cure}

\subsection{General framework}

Given an observation model and a parametric estimator, e.g., a variational estimator of the form~\eqref{eq:est_var}, the ideal hyperparameter selection strategy would consists in minimizing the \emph{estimation} risk, defined as
\begin{align}
\label{eq:est_risk}
\mathcal{E}\left(\widehat{\Xvect}(\, \boldsymbol{\cdot} \,;\paramvect), \overline{\Xvect}\right) := \mathbb{E}_{\Yvect} \left[ \left\lVert \widehat{\Xvect}(\Yvect;\paramvect) - \overline{\Xvect}  \right\rVert^2_2\right]
\end{align}
where $\mathbb{E}_{\Yvect}$ denotes the expectation over realizations of $\Yvect$ and $\overline{\Xvect}$ the ground truth.
For inverse problems of the form~\eqref{eq:gen-pb} with an ill-conditioned or non-injective operator $\textbf{A}$,  the estimation risk is potentially numerically instable; an alternative is to shift the emphasis on the reconstruction error and
 to consider the
 \emph{prediction} risk
\begin{align}
\label{eq:pred_risk}
\mathcal{P}\left(\widehat{\Xvect}(\, \boldsymbol{\cdot}\,;\paramvect), \overline{\Xvect}\right) := \mathbb{E}_{\Yvect} \left[ \left\lVert \widehat{\Xvect}(\Yvect;\paramvect) \odot \Psivect(\Yvect) - \overline{\Xvect} \odot \Psivect(\Yvect) \right\rVert^2_2\right].
\end{align}
Both the \emph{estimation} and \emph{prediction} risks depends on the ground truth $\overline{\Xvect}$, which in practice is not available.
The purpose of this section is thus to devise \emph{oracles} for the quality of an estimate $\widehat{\Xvect}$, which are independent of the unknown ground truth, and whose  minimization provides approximately \emph{optimal} hyperparameters, where \emph{optimal} is to be understood as reaching low estimation~\eqref{eq:est_risk} or prediction~\eqref{eq:pred_risk} risk.

\subsection{A novel autoregressive Poisson lemma}
\label{ssec:A-lemma}

The remaining of the paper focuses on the Poisson model~\eqref{eq:poiss_model}, which is commonly used for modeling the pathogen spread during epidemics,  with the aim of applying the developed tools to the estimation of COVID-19 reproduction number from real infection counts in Section~\ref{sec:covid}.
It is worth noting that similar proof mechanisms could be applied to extend the analysis to Gaussian~\eqref{eq:gauss_model} and Gamma~\eqref{eq:gamma_model} noise distributions.

The cornerstone of the design of the Stein's Unbiased Risk Estimate is the seminal Stein's lemma~\cite{stein1981estimation},  turning an expectation explicitly involving ground truth, into another one in which the explicit dependency is completely removed.
Under the Poisson noise model~\eqref{eq:poiss_model}, the standard Stein's lemma cannot be used.
Further, due to the dependency of the memory term in past observations stated in Equation~\eqref{eq:model}, none of the Stein's lemmas generalized to Poisson noise~\cite{eldar2008generalized,luisier2010image,li2017pure} apply.
The major challenge in designing oracles adapted to the \emph{generalized time-varying
autoregressive Poisson} model is thus to derive a new \emph{autoregressive Poisson} Stein's lemma counterpart.

Proper definition of the estimation and prediction risks and formal derivation of the proposed \emph{autoregressive Poisson} lemma require further hypotheses: Assumption~\ref{hyp:Poisson} ensures integrability of all quantities involved, while the technical Assumption~\ref{hyp:psi-alpha},  easily checked in the practical application of Section~\ref{ssec:data-driven-Rhat},  is required to handle autoregressive models.

\begin{hypothesis}
\label{hyp:Poisson}
Let $\Theta : \mathbb{R}^T \rightarrow \mathbb{R}$, the real-valued function defined on $\mathbb{N}^T$ as
$\kvect   \mapsto  \Theta(\alpha_1 \kk_1, \hdots, \alpha_T \kk_T) \overline{\X}_t \Psi_t(\alpha_1 \kk_1, \hdots, \alpha_{t-1} \kk_{t-1}) $
is summable with respect to the autoregressive Poisson distribution obtained plugging the generalized time-varying autoregressive model~\eqref{eq:model} into the Poisson distribution~\eqref{eq:poiss_model}
\begin{align}
\label{eq:APoisson}
\mathbb{P}(\kk_1, \hdots, \kk_T) = \prod_{s = 1}^T \frac{(\overline{\X}_s \Psi_s(\yvect))^{ \kk_s}}{\kk_s!}\mathrm{e}^{-\overline{\X}_s \Psi_s(\yvect)},
\end{align}
where for all $s \in \lbrace 1, \hdots, T \rbrace$ $\y_s = \alpha_s \kk_s$.
\end{hypothesis}

\begin{hypothesis}
\label{hyp:psi-alpha}
For all $ t,s \in \lbrace 1, \hdots, T\rbrace,$
\begin{align}
\forall \yvect \in \mathbb{R}_+^T, \quad  \lvert \partial_{\y_t} \Psi_s(\yvect) \times  \alpha_t \rvert \ \ll \lvert \Psi_s(\yvect) \rvert .
\end{align}
\end{hypothesis}

It is worth insisting on the fact that, due to the dependency of the memory functions in the past observations, the components of $\Yvect$ are not independent Poisson random variables.
This is visible in Equation~\eqref{eq:APoisson} where, because $\Psi_s(\yvect)$ depends on $\y_1, \hdots, \y_{s-1}$, the right-hand side is not a separable product of independent Poisson distribution and cannot be reframed as such.
This is precisely this major difference with standard inverse problems of the form~\eqref{eq:gen-pb} which impairs the application of the standard Poisson counterpart of Stein's lemma~\cite{luisier2010image,le2014unbiased,li2017pure}.\footnote{This remark on data-dependent Poisson noise applies as well for the additive Gaussian and multiplicative Gamma noises of Example~\ref{ex:noises}.}

\begin{note} 
Let $\Theta : \mathbb{R}^T \rightarrow \mathbb{R}$ and $\alphavect \in \mathbb{R}^T$, for $t \in \lbrace 1, \hdots, T\rbrace$ the function $\Theta^{-t}$ is defined as
\begin{align}
\label{eq:def_fun_minus}
\Theta^{-t} (\yvect)= \Theta(\y_1, \hdots, \y_t - \alpha_t, \hdots, \y_T), \quad \yvect \in \mathbb{R}^T.
\end{align}
\end{note}

\begin{lemma}[Autoregressive Poisson lemma]
\label{lem:Poisson-Stein}
Let $\Yvect = (\Y_1, \hdots, \Y_T)$ observations following the generalized time-varying
autoregressive model~\eqref{eq:model} with ground truth time-varying reproduction coefficient~$\overline{\Xvect} = (\overline{\X}_1, \hdots, \overline{\X}_T) \in \mathbb{R}_+^T$ and memory functions $\Psi_t$ satisfying Assumption~\ref{hyp:psi-alpha}, corrupted by scaled Poisson noise~\eqref{eq:poiss_model} of time-varying scale parameter $\alphavect = (\alpha_1 , \hdots, \alpha_T)\in \mathbb{R}_+^T$.
Then, for $\Theta : \mathbb{R}^T \rightarrow\mathbb{R}$ satisfying Assumption~\ref{hyp:Poisson} $\forall t \in \lbrace 1, \hdots, T\rbrace$,
\begin{align}
\label{eq:poisson-lemma}
 \mathbb{E}_{\Yvect} \left[ \Theta(\Yvect) \overline{\X}_t \Psi_t(\Yvect) \right] \underset{\alphavect \rightarrow \boldsymbol{0}}{=} \mathbb{E}_{\Yvect} \left[  \Theta^{-t}(\Yvect)\Y_t\right].
\end{align}
\end{lemma}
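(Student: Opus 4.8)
The plan is to exploit the causal structure of Model~\eqref{eq:model} by conditioning on the past, reducing the $T$-dimensional expectation to a one-dimensional scaled Poisson integration-by-parts in the single variable $\Y_t$, and then absorbing the coupling between $\Y_t$ and the future observations $\Y_{t+1}, \hdots, \Y_T$ into the $\alphavect \rightarrow \boldsymbol{0}$ limit. The engine is the elementary scaled Poisson identity: if $\Y = \alpha \mathsf{K}$ with $\mathsf{K} \sim \mathcal{P}(\mu)$ and $\alpha \mu = m$, then a one-line index shift $\mathsf{k} \mapsto \mathsf{k}-1$ in the series $\sum_{\mathsf{k}} g(\alpha \mathsf{k}) \frac{\mu^{\mathsf{k}}}{\mathsf{k}!} \e^{-\mu}$ gives $\mathbb{E}[g(\Y)\,m] = \mathbb{E}[g(\Y-\alpha)\,\Y]$, which is exactly the shift encoded by $(\,\cdot\,)^{-t}$ in~\eqref{eq:def_fun_minus}.

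First I would introduce the conditional expectation $G_t(\y_1, \hdots, \y_t) := \mathbb{E}_{\Yvect}[\Theta(\Yvect) \mid \Y_1 = \y_1, \hdots, \Y_t = \y_t]$, obtained by integrating $\Theta$ against the conditional laws of $\Y_{t+1}, \hdots, \Y_T$ prescribed by~\eqref{eq:model} and~\eqref{eq:poiss_model}. Since $\Psi_t(\Yvect)$ depends only on $\Y_1, \hdots, \Y_{t-1}$, the factor $\overline{\X}_t \Psi_t(\Yvect)$ is measurable with respect to the past, so the tower property together with the scaled identity applied in the variable $\Y_t$ (with $g = G_t(\Y_1, \hdots, \Y_{t-1}, \,\cdot\,)$ and $m = \overline{\X}_t \Psi_t(\Yvect)$, the conditional mean of $\Y_t$) yields, \emph{exactly} and for every $\alphavect$,
\begin{equation*}
\mathbb{E}_{\Yvect}\left[\Theta(\Yvect)\,\overline{\X}_t \Psi_t(\Yvect)\right] = \mathbb{E}_{\Yvect}\left[G_t(\Y_1, \hdots, \Y_{t-1}, \Y_t - \alpha_t)\,\Y_t\right].
\end{equation*}

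It then remains to compare $G_t(\Y_1, \hdots, \Y_{t-1}, \Y_t - \alpha_t)$ with $\Theta^{-t}(\Yvect)$, and this is where the autoregressive coupling, and the main difficulty, appears. Both objects evaluate $\Theta$ with its $t$-th argument shifted to $\Y_t - \alpha_t$, but they treat the future differently: in $\Theta^{-t}(\Yvect)$ the realized $\Y_{t+1}, \hdots, \Y_T$ are retained, hence generated through memory functions $\Psi_s$, $s>t$, evaluated at the \emph{unshifted} $\Y_t$, whereas in $G_t(\hdots, \Y_t - \alpha_t)$ the future is re-integrated against conditional laws whose Poisson intensities $\overline{\X}_s \Psi_s(\hdots, \Y_t - \alpha_t, \hdots)/\alpha_s$ use the \emph{shifted} value. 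The gap is therefore governed solely by the sensitivity of the memory to a perturbation of $\Y_t$ by $\alpha_t$: the \emph{relative} change of each future intensity is $\lvert \Psi_s(\hdots, \Y_t-\alpha_t, \hdots) - \Psi_s(\hdots, \Y_t, \hdots)\rvert / \lvert \Psi_s \rvert \approx \lvert \alpha_t\, \partial_{\y_t}\Psi_s\rvert / \lvert \Psi_s\rvert$, which is precisely the quantity that Assumption~\ref{hyp:psi-alpha} forces to be negligible. Consequently the two conditional laws of $(\Y_{t+1}, \hdots, \Y_T)$ merge as $\alphavect \rightarrow \boldsymbol{0}$ and $G_t(\hdots, \Y_t - \alpha_t) \rightarrow \Theta^{-t}(\Yvect)$ inside the expectation, giving~\eqref{eq:poisson-lemma}.

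I expect this comparison to be the crux: it is what makes the identity asymptotic rather than exact, and a rigorous argument requires quantifying the discrepancy between the two future laws (for instance via a first-order expansion of the log-intensities, or an explicit coupling of the two autoregressions driven by the same Poisson seeds) and bounding the resulting error uniformly in the summation. This is exactly where Assumption~\ref{hyp:Poisson} earns its place: absolute summability guarantees that the series defining $G_t$, the index shift, the tower property, and the passage to the limit $\alphavect \rightarrow \boldsymbol{0}$ are all legitimate, so that the error terms controlled by Assumption~\ref{hyp:psi-alpha} can indeed be sent to zero under the expectation.
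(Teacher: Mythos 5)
Your proposal is correct and follows essentially the same route as the paper's proof: your conditional expectation $G_t$ is exactly the paper's marginalized function $\mathsf{G}$, your scaled-Poisson index shift in $\Y_t$ is the discrete integration by parts of Equations~\eqref{eq:sum_ktp1}--\eqref{eq:sum_ktp2}, and you correctly locate the crux — and the reason the identity is only asymptotic — in replacing the shifted future intensities $\Psi_s(\hdots,\Y_t-\alpha_t,\hdots)$ by the unshifted ones via Assumption~\ref{hyp:psi-alpha}, which is precisely the step performed in Equation~\eqref{eq:sum_yt}. Your remark that a fully rigorous treatment would require a quantitative bound on the discrepancy between the two future laws is a fair observation that applies equally to the paper's argument, which also proceeds by first-order approximation.
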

\begin{proof}
Proof of lemma~\ref{lem:Poisson-Stein} is detailed in~\ref{ssec:proof-lemma}.
\end{proof}

Thanks to Lemma~\ref{lem:Poisson-Stein}, the ground truth-dependent expectation in the left-hand side of~\eqref{eq:poisson-lemma} is approached by a fully data-dependent one, which will turn out key in the derivation of unbiased risk estimates.

\subsection{Autoregressive Poisson Unbiased Risk Estimators}
\label{ssec:apure}

Expanding the estimation and prediction risks of Equations~\eqref{eq:est_risk} and~\eqref{eq:pred_risk} respectively, and applying Lemma~\ref{lem:Poisson-Stein} to remove the explicit dependency in the ground truth, Theorem~\ref{thm:pure-poisson} yields novel estimation and prediction Autoregressive Poisson Unbiased Risk Estimates.

\begin{theorem}
\label{thm:pure-poisson}
Let $\Yvect = (\Y_1, \hdots, \Y_T)$ be observations satisfying the requirements enunciated in Lemma~\ref{lem:Poisson-Stein}.
Let $\widehat{\Xvect}(\Yvect ;\paramvect)$ be a parametric estimator of $\overline{\Xvect}$,  such that $\forall t \in \lbrace 1, \hdots, T \rbrace$,  $\forall \paramvect \in \paramset$, $\yvect \mapsto \widehat{\X}_t(\yvect;\paramvect)$ satisfies Assumption~\ref{hyp:Poisson}.
Define the data-dependent prediction risk estimate  
\begin{align}
\begin{split}
\label{eq:apure}
\mathsf{APURE}^{\mathcal{P}}(\Yvect ; \paramvect \, \lvert \, \boldsymbol{\alpha})= \lVert  \widehat{\Xvect}(\Yvect;\paramvect) \odot \Psivect(\Yvect) \rVert^2_2    - 2 \sum_{t=1}^T \widehat{\X}_t^{-t} (\Yvect;\paramvect) \Psi_t(\Yvect) \Y_t+  \sum_{t=1}^T \left( \Y_t^2 - \alpha_t \Y_t\right)
\end{split}
\end{align}
where $\widehat{\X}_t^{-t}(\Yvect;\paramvect) = \widehat{\X}_t(\Y_1, \hdots, \Y_t - \alpha_t, \hdots, \Y_T;\paramvect)$.
Then, $\mathsf{APURE}^{\mathcal{P}}$ is an asymptotically unbiased estimate of the prediction risk in the small scale parameters limit, that is
\begin{align}
\label{eq:apure-thm1}
\mathbb{E}_{\Yvect} \left[ \mathsf{APURE}^{\mathcal{P}}(\Yvect ; \paramvect \, \lvert \, \boldsymbol{\alpha})  \right] \underset{\alphavect \rightarrow \boldsymbol{0}}{=}\mathcal{P}(\widehat{\Xvect},\overline{\Xvect}).
\end{align}
Further, assuming that $\forall t \in \lbrace 1, \hdots, T \rbrace,$  $\Psi_t(\Yvect) \neq 0$, define the data-dependent estimation risk estimate 
\begin{align}
\label{eq:apure-est}
\mathsf{APURE}^{\mathcal{E}}(\Yvect ; \paramvect \, \lvert \, \boldsymbol{\alpha})  = \lVert \widehat{\Xvect}(\Yvect;\paramvect)  \rVert^2_2
 - 2 \sum_{t=1}^T \frac{\widehat{\X}_t^{-t}(\Yvect;\paramvect)}{ \Psi_t(\Yvect)} \Y_t  + \sum_{t=1}^T \left( \frac{ \Y_t^2}{\Psi_t(\Yvect)^2} - \frac{\alpha_t \Y_t}{\Psi_t(\Yvect)^2} \right).
\end{align}
Then, $\mathsf{APURE}^{\mathcal{E}}$ is an asymptotically unbiased estimate of the prediction risk in the small scale parameters limit, that is
\begin{align}
\label{eq:apure-est-thm1}
\mathbb{E}_{\Yvect} \left[ \mathsf{APURE}^{\mathcal{E}}(\Yvect ; \paramvect \, \lvert \, \boldsymbol{\alpha})  \right] \underset{\alphavect \rightarrow \boldsymbol{0}}{=}\mathcal{E}(\widehat{\Xvect},\overline{\Xvect}).
\end{align}
\end{theorem}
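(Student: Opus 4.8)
The plan is to expand each risk as a sum of three expectations, notice that one of them already coincides term-for-term with the corresponding $\mathsf{APURE}$ functional, and then strip the explicit dependence on $\overline{\Xvect}$ from the other two using Lemma~\ref{lem:Poisson-Stein} together with the conditional moments of the scaled Poisson law~\eqref{eq:poiss_model}. For the prediction risk I would develop the square inside~\eqref{eq:pred_risk} componentwise as
$$\mathcal{P}(\widehat{\Xvect},\overline{\Xvect}) = \mathbb{E}_{\Yvect}\bigl[\lVert \widehat{\Xvect}(\Yvect;\paramvect)\odot\Psivect(\Yvect)\rVert_2^2\bigr] - 2\sum_{t=1}^T \mathbb{E}_{\Yvect}\bigl[\widehat{\X}_t(\Yvect;\paramvect)\Psi_t(\Yvect)\,\overline{\X}_t\Psi_t(\Yvect)\bigr] + \sum_{t=1}^T \mathbb{E}_{\Yvect}\bigl[(\overline{\X}_t\Psi_t(\Yvect))^2\bigr].$$
The first expectation is already the leading term of~\eqref{eq:apure}. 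For the cross term I would apply Lemma~\ref{lem:Poisson-Stein} with $\Theta(\Yvect) = \widehat{\X}_t(\Yvect;\paramvect)\Psi_t(\Yvect)$; the crucial observation is that $\Psi_t$ depends only on $\Y_1,\hdots,\Y_{t-1}$, so the shift $\Theta\mapsto\Theta^{-t}$ of~\eqref{eq:def_fun_minus} leaves $\Psi_t$ untouched and acts solely on $\widehat{\X}_t$, giving $\Theta^{-t}(\Yvect) = \widehat{\X}_t^{-t}(\Yvect;\paramvect)\Psi_t(\Yvect)$ and hence exactly the middle term of~\eqref{eq:apure}.

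For the ground-truth-squared term I would invoke the conditional moments of~\eqref{eq:poiss_model}: conditionally on $\Y_1,\hdots,\Y_{t-1}$, the variable $\Y_t$ has mean $\overline{\X}_t\Psi_t(\Yvect)$ and variance $\alpha_t\overline{\X}_t\Psi_t(\Yvect)$, so that $\mathbb{E}[\Y_t^2\mid\Y_1,\hdots,\Y_{t-1}] = (\overline{\X}_t\Psi_t(\Yvect))^2 + \alpha_t\,\overline{\X}_t\Psi_t(\Yvect)$. Rearranging and taking the full expectation through the tower property, which in particular yields $\mathbb{E}_{\Yvect}[\overline{\X}_t\Psi_t(\Yvect)]=\mathbb{E}_{\Yvect}[\Y_t]$ exactly, converts $\mathbb{E}_{\Yvect}[(\overline{\X}_t\Psi_t(\Yvect))^2]$ into $\mathbb{E}_{\Yvect}[\Y_t^2-\alpha_t\Y_t]$, which is the final term of~\eqref{eq:apure} and closes~\eqref{eq:apure-thm1}.

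The estimation risk~\eqref{eq:apure-est-thm1} is handled along identical lines after dividing through by $\Psi_t(\Yvect)$, legitimate under the extra hypothesis $\Psi_t(\Yvect)\neq0$. Expanding~\eqref{eq:est_risk}, the cross term is treated by Lemma~\ref{lem:Poisson-Stein} with $\Theta(\Yvect)=\widehat{\X}_t(\Yvect;\paramvect)/\Psi_t(\Yvect)$ (again $\Psi_t$ is inert under the shift, producing $\widehat{\X}_t^{-t}(\Yvect;\paramvect)/\Psi_t(\Yvect)$), and the ground-truth term $\overline{\X}_t^2 = (\overline{\X}_t\Psi_t(\Yvect))^2/\Psi_t(\Yvect)^2$ is rewritten through the same conditional-variance identity divided by $\Psi_t(\Yvect)^2$; the residual correction $\alpha_t\,\overline{\X}_t/\Psi_t(\Yvect)$ is then replaced by $\alpha_t\,\Y_t/\Psi_t(\Yvect)^2$ via one further application of the lemma with $\Theta(\Yvect)=1/\Psi_t(\Yvect)^2$, recovering the last term of~\eqref{eq:apure-est}.

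\textbf{Main obstacle.} I expect the delicate point to be the ground-truth-squared term, where the autoregressive structure forbids the naive independent-Poisson computation: one must condition on the past to use the exact Poisson mean and variance, keep track of the fact that $\Psi_t(\yvect)$ is $\Y_t$-independent so that it commutes both with the shift operation of~\eqref{eq:def_fun_minus} and with the conditioning, and carefully distinguish the equalities that hold exactly (those coming from the tower property) from those holding only asymptotically as $\alphavect\to\boldsymbol0$ (the invocations of Lemma~\ref{lem:Poisson-Stein}), so that the residual $o(1)$ and $o(\alpha_t)$ errors remain controlled under Assumptions~\ref{hyp:Poisson} and~\ref{hyp:psi-alpha}.
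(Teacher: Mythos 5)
Your proposal is correct, and its overall architecture --- expand each risk into three terms, keep the purely data-dependent first term, and strip $\overline{\Xvect}$ from the cross term via Lemma~\ref{lem:Poisson-Stein} with $\Theta=\widehat{\X}_t\Psi_t$ (resp.\ $\widehat{\X}_t/\Psi_t$), exploiting $\Psi_t^{-t}=\Psi_t$ --- is exactly the paper's. Where you genuinely diverge is the ground-truth-squared term. The paper handles $\mathbb{E}_{\Yvect}[(\overline{\X}_t\Psi_t(\Yvect))^2]$ by two successive applications of Lemma~\ref{lem:Poisson-Stein} (first with $\Theta=\overline{\X}_t\Psi_t$, then with $\Theta(\Yvect)=\Y_t$, whose shift produces the $-\alpha_t\Y_t$ correction), each step holding only asymptotically as $\alphavect\to\boldsymbol{0}$. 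You instead use the exact conditional moments of the scaled Poisson law, $\mathbb{E}[\Y_t\mid\text{past}]=\overline{\X}_t\Psi_t(\Yvect)$ and $\mathbb{E}[\Y_t^2\mid\text{past}]=(\overline{\X}_t\Psi_t(\Yvect))^2+\alpha_t\overline{\X}_t\Psi_t(\Yvect)$, together with the tower property and the past-measurability of $\Psi_t$. This buys you an \emph{exact} identity $\mathbb{E}_{\Yvect}[(\overline{\X}_t\Psi_t(\Yvect))^2]=\mathbb{E}_{\Yvect}[\Y_t^2-\alpha_t\Y_t]$ for that term (and the analogous one divided by $\Psi_t^2$ for the estimation risk), so the only asymptotic approximations left in your argument are the unavoidable ones attached to the cross terms; the paper's route is more uniform (everything goes through the one lemma) but accumulates an extra layer of $\alphavect\to\boldsymbol{0}$ approximation. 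The one thing to make explicit in your version is integrability of $\Y_t^2$ (resp.\ $\Y_t^2/\Psi_t(\Yvect)^2$), which the paper obtains implicitly by requiring $\Theta(\Yvect)=\Y_t$ to satisfy Assumption~\ref{hyp:Poisson}; with that noted, your argument closes both \eqref{eq:apure-thm1} and \eqref{eq:apure-est-thm1}.
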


\begin{proof}
Proof of Theorem~\ref{thm:pure-poisson} is developed in~\ref{ssec:apure-proof}.
\end{proof}

\subsection{Finite Difference Monte Carlo estimators}
\label{sec:fdmc}

Although fully data-driven with an explicit formula, both the estimation and prediction risk estimates $\mathsf{APURE}^{\mathcal{E}}$ and $\mathsf{APURE}^{\mathcal{P}}$ turn out to be complicated to evaluate in practice. 
Indeed, they both involve all $T$ functions $\Yvect \mapsto \widehat{\X}_t^{-t}(\Yvect)$.
Since,  in general, the estimator $\widehat{\Xvect}(\Yvect;\paramvect)$ is not separable in $t$, it is thus necessary to evaluate the estimator $T$ times.
For parametric estimators designed using the variational framework of Equation~\eqref{eq:est_var}, the involved minimization can be very costly.
Consequently, as $T$ is growing, the computational burden of the direct evaluation of $\mathsf{APURE}^{\mathcal{E}}$ and $\mathsf{APURE}^{\mathcal{P}}$ from~\eqref{eq:est_risk} and~\eqref{eq:pred_risk} rapidly becomes prohibitive.
To circumvent this difficulty,  the Finite Difference and Monte Carlo strategies~\cite{girard1989fast,ramani2008monte,deledalle2014stein,ammanouil2019parallel} are combined to yield tractable asymptotically unbiased estimation and prediction risk estimates, requiring further assumptions on the parametric estimator $\widehat{\Xvect}(\Yvect;\paramvect)$.

\begin{hypothesis}
\label{hyp:diff}
For any hyperparameters $\paramvect \in \paramset$,  the function $\yvect \mapsto \widehat{\Xvect}(\yvect ; \paramvect)$ is continuously differentiable on $\mathbb{R}_+^T$.
\end{hypothesis}

\begin{theorem}
\label{thm:fdmc}
Let $\Yvect = (\Y_1, \hdots, \Y_T)$ be observations satisfying the requirements of Lemma~\ref{lem:Poisson-Stein} and $\widehat{\Xvect}(\Yvect ; \paramvect)$ be a parametric estimator of $\overline{\Xvect}$ whose components satisfy Assumption~\ref{hyp:Poisson} as stated in Theorem~\ref{thm:pure-poisson} and satisfying Assumption~\ref{hyp:diff}.
Let $\mc \sim \mathcal{N}(\boldsymbol{0}, \textbf{I})$ a zero-mean Gaussian vector with covariance matrix the identity in dimension $T$.
Define the data-dependent Finite Difference Monte Carlo prediction risk estimate $\mathsf{APURE}_{ \mc}^{\mathcal{P}}(\Yvect ; \paramvect \, \lvert \, \boldsymbol{\alpha}) 
 =$
\begin{align}
\label{eq:fdmc-apure}
 \lVert  \widehat{\Xvect}(\Yvect;\paramvect) \odot \Psivect(\Yvect) \rVert^2_2  - 2 \sum_{t=1}^T  \widehat{\X}_t(\Yvect;\paramvect) \Psi_t(\Yvect)  \Y_t  
  + 2\left \langle   \mathrm{diag}(\boldsymbol{\alpha} \odot \Psivect(\Yvect) )  \partial_{\Yvect} \widehat{\Xvect} [\mc] , \mathrm{diag}(\Yvect)\mc \right\rangle+ \sum_{t=1}^T \left( \Y_t^2 - \alpha_t \Y_t\right)
\end{align}
where $\partial_{\Yvect} \widehat{\Xvect} [\mc]$ denotes the differential of $\Yvect \mapsto \widehat{\Xvect}(\Yvect;\paramvect)$ at the current point $(\Yvect; \paramvect)$ applied to the random vector $\mc$. Then, $\mathsf{APURE}_{\mc}^{\mathcal{P}}$ is an asymptotically unbiased estimate of the prediction risk in the small scale parameters limit:
\begin{align}
\label{eq:fdmc-pred}
\mathbb{E}_{\Yvect,\mc} \left[ \mathsf{APURE}_{\mc}^{\mathcal{P}}(\Yvect ; \paramvect \, \lvert \, \boldsymbol{\alpha})  \right] \underset{\alphavect \rightarrow \boldsymbol{0}}{=}\mathcal{P}(\widehat{\Xvect},\overline{\Xvect}).
\end{align} 
Further assuming that $\forall t \in \lbrace 1, \hdots, T \rbrace,$  $\Psi_t(\Yvect) \neq 0$, define the data-dependent Finite Difference Monte Carlo estimation risk estimate $\mathsf{APURE}_{ \mc}^{\mathcal{E}}(\Yvect ; \paramvect \, \lvert \, \boldsymbol{\alpha}) = $
\begin{align}
\label{eq:fdmc-apure-est }
 \lVert \widehat{\Xvect}(\Yvect;\paramvect)  \rVert^2_2 - 2 \sum_{t=1}^T \frac{ \widehat{\X}_t(\Yvect;\paramvect) }{ \Psi_t(\Yvect)}  \Y_t    + 2\left \langle \mathrm{diag}(\alphavect \centerdot/ \Psivect(\Yvect)) \, \partial_{\Yvect} \widehat{\Xvect} [\mc],  \mathrm{diag}(\Yvect)\mc \right\rangle  + \sum_{t=1}^T \left( \frac{ \Y_t^2}{\Psi_t(\Yvect)^2} - \frac{\alpha_t \Y_t}{\Psi_t(\Yvect)^2} \right).
\end{align}
Then, $\mathsf{APURE}_{\mc}^{\mathcal{E}}$ is an asymptotically unbiased estimate of the estimation risk in the small scale parameters limit:
\begin{align}
\label{eq:fdmc-est}
\mathbb{E}_{\Yvect,\mc} \left[ \mathsf{APURE}_{\mc}^{\mathcal{E}}(\Yvect ; \paramvect \, \lvert \, \boldsymbol{\alpha})  \right] \underset{\alphavect \rightarrow \boldsymbol{0}}{=}\mathcal{E}(\widehat{\Xvect},\overline{\Xvect}).
\end{align} 
\end{theorem}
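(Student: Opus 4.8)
The plan is to show that each Finite Difference Monte Carlo estimate agrees, up to a term that vanishes in the $\alphavect\to\boldsymbol{0}$ limit, with its exact counterpart from Theorem~\ref{thm:pure-poisson}, and then to conclude by invoking the asymptotic unbiasedness~\eqref{eq:apure-thm1}--\eqref{eq:apure-est-thm1} already established there. Comparing $\mathsf{APURE}^{\mathcal{P}}$ in~\eqref{eq:apure} with $\mathsf{APURE}_{\mc}^{\mathcal{P}}$ in~\eqref{eq:fdmc-apure}, the quadratic term $\lVert\widehat{\Xvect}\odot\Psivect(\Yvect)\rVert_2^2$ and the data term $\sum_t(\Y_t^2-\alpha_t\Y_t)$ are identical; only the cross term differs, the exact version using the shifted estimator $\widehat{\X}_t^{-t}$ while the tractable version uses $\widehat{\X}_t$ supplemented by a Monte Carlo correction. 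Two ingredients bridge the gap: a first-order finite difference expansion of the shift, and Hutchinson's stochastic trace identity for the probe vector $\mc$.

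First I would expand the shift. Since $\Psi_t(\Yvect)$ depends only on $\Y_1,\dots,\Y_{t-1}$, it is left unchanged by the perturbation of the $t$-th coordinate defining $\widehat{\X}_t^{-t}$ in~\eqref{eq:def_fun_minus}; only the estimator is affected. Under Assumption~\ref{hyp:diff}, a Taylor expansion in the small scale parameter gives, for each $t$,
\begin{align*}
\widehat{\X}_t^{-t}(\Yvect;\paramvect)=\widehat{\X}_t(\Yvect;\paramvect)-\alpha_t\,\partial_{\Y_t}\widehat{\X}_t(\Yvect;\paramvect)+o(\alpha_t),
\end{align*}
so that the exact cross term rewrites as $-2\sum_t\widehat{\X}_t\Psi_t(\Yvect)\Y_t+2\sum_t\alpha_t\,\partial_{\Y_t}\widehat{\X}_t\,\Psi_t(\Yvect)\Y_t$ up to a termwise $o(\alpha_t)$ remainder. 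This isolates the explicit, computable leading term of~\eqref{eq:fdmc-apure} and a weighted diagonal of the Jacobian $\partial_{\Yvect}\widehat{\Xvect}$.

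Next I would replace that diagonal sum by its Monte Carlo estimate. Integrating out $\mc\sim\mathcal{N}(\boldsymbol{0},\textbf{I})$ and using $\mathbb{E}_{\mc}[\mc^\top\textbf{M}\mc]=\mathrm{tr}(\textbf{M})$, the bilinear form in~\eqref{eq:fdmc-apure} satisfies
\begin{align*}
\mathbb{E}_{\mc}\!\left[\left\langle\mathrm{diag}(\alphavect\odot\Psivect(\Yvect))\,\partial_{\Yvect}\widehat{\Xvect}[\mc],\mathrm{diag}(\Yvect)\mc\right\rangle\right]=\sum_{t=1}^T\Y_t\,\alpha_t\,\Psi_t(\Yvect)\,\partial_{\Y_t}\widehat{\X}_t(\Yvect;\paramvect),
\end{align*}
because $\mathrm{diag}(\Yvect)\mathrm{diag}(\alphavect\odot\Psivect(\Yvect))$ is diagonal, so the trace of its product with the Jacobian extracts only the diagonal entries $\partial_{\Y_t}\widehat{\X}_t$. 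This is exactly the leading correction isolated above, hence $\mathbb{E}_{\mc}[\mathsf{APURE}_{\mc}^{\mathcal{P}}]$ and $\mathsf{APURE}^{\mathcal{P}}$ coincide up to $o(\alpha_t)$ pointwise in $\Yvect$. Taking $\mathbb{E}_{\Yvect}$ and letting $\alphavect\to\boldsymbol{0}$ then yields~\eqref{eq:fdmc-pred} through~\eqref{eq:apure-thm1}. The estimation case~\eqref{eq:fdmc-est} is verbatim the same argument: the weight $\Psi_t(\Yvect)$ is replaced by $1/\Psi_t(\Yvect)$ throughout, the probe matrix $\mathrm{diag}(\alphavect\odot\Psivect(\Yvect))$ by $\mathrm{diag}(\alphavect\centerdot/\Psivect(\Yvect))$, and one invokes~\eqref{eq:apure-est-thm1} instead.

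The hard part is the rigorous control of the remainder: the $o(\alpha_t)$ in the Taylor expansion is a pointwise statement in $\Yvect$, yet it must survive integration against the scaled Poisson law~\eqref{eq:APoisson}, whose support and moments themselves depend on $\alphavect$ (recall $\Y_t=\alpha_t\kk_t$, the variance collapsing as $\alphavect\to\boldsymbol{0}$). Showing that $\mathbb{E}_{\Yvect}[o(\alpha_t)\,\Psi_t(\Yvect)\Y_t]$ is negligible compared with the retained terms requires a uniform domination of the remainder, for which the summability Assumption~\ref{hyp:Poisson} applied to $\widehat{\X}_t$ and its derivative, together with Assumption~\ref{hyp:psi-alpha}, are the natural tools; this interchange of limit and expectation is the only delicate point, the algebraic identities above being otherwise exact.
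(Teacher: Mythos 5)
Your proposal is correct and follows essentially the same route as the paper's own proof: a first-order finite-difference expansion $\widehat{\X}_t^{-t}=\widehat{\X}_t-\alpha_t\,\partial_{\Y_t}\widehat{\X}_t$ in the small-$\alphavect$ limit, combined with the Gaussian trace identity $\mathbb{E}_{\mc}[\mcc_s\mcc_t]=\delta_{s,t}$ to recover the diagonal of the Jacobian from the bilinear Monte Carlo form, followed by an appeal to Theorem~\ref{thm:pure-poisson}. Your closing remark on dominating the Taylor remainder uniformly so that it survives integration against the $\alphavect$-dependent law~\eqref{eq:APoisson} flags a genuine subtlety that the paper's proof passes over silently, but it does not change the argument.
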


\begin{proof}
Proof of Theorem~\ref{thm:fdmc} is developed in~\ref{sec:fdmc-proof}.
\end{proof}

Using only one realization of the Monte Carlo vector $\mc$ in the evaluation of $\mathsf{APURE}_{\mc}^{\mathcal{E}}$ and $\mathsf{APURE}_{\mc}^{\mathcal{P}}$ might lead to noisy estimates of $\mathcal{E}$ and $\mathcal{P}$; using them directly as oracles for hyperparameters selection according to~\eqref{eq:orcl_lambda} hence might result in suboptimal and unstable hyperparameter choices~\cite{lucas2022hyperparameter}.
To circumvent this issue,  it is possible to average over several independent realizations of $\mc$ to stabilize both the risk estimates and the resulting hyperparameter choice~\cite{deledalle2014stein,lucas2022hyperparameter}.

\begin{proposition}
\label{prop:robust}
Let $\Yvect = (\Y_1, \hdots, \Y_T)$ be observations satisfying the requirements enunciated in Lemma~\ref{lem:Poisson-Stein} and $\widehat{\Xvect}(\Yvect ; \paramvect)$ be a parametric estimator of $\overline{\Xvect}$  satisfying the assumptions listed in Theorem~\ref{thm:fdmc}.
Let $N \in \mathbb{N}^*$ and $(\mc^{(1)}, \hdots, \mc^{(N)})$ be independent realizations of the Monte Carlo vector $\mc \sim \mathcal{N}(\boldsymbol{0}, \textbf{I})$.
The robustified risk estimates defined as
\begin{align}
\begin{split}
\label{eq:robust}
\overline{\mathsf{APURE}_{\mc}^{\mathcal{E}}}^N = \frac{1}{N} \sum_{n = 1}^N \mathsf{APURE}_{\mc^{(n)}}^{\mathcal{E}}\quad \text{and} \quad
\overline{\mathsf{APURE}_{\mc}^{\mathcal{P}}}^N = \frac{1}{N} \sum_{n = 1}^N \mathsf{APURE}_{\mc^{(n)}}^{\mathcal{P}}
\end{split}
\end{align}
are asymptotically unbiased estimation (resp. prediction) risk estimates.
\end{proposition}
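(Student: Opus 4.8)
The plan is to deduce the result directly from Theorem~\ref{thm:fdmc}, using nothing beyond the linearity of the expectation together with the fact that the Monte Carlo vectors $(\mc^{(1)}, \hdots, \mc^{(N)})$ are independent copies of the single reference vector $\mc \sim \mathcal{N}(\boldsymbol{0}, \textbf{I})$. No new probabilistic input is needed: the robustification is a finite empirical average, and finite averages commute with expectation, so the asymptotic unbiasedness of each single-realization estimate should transfer verbatim.

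First I would take the expectation of the robustified estimate $\overline{\mathsf{APURE}_{\mc}^{\mathcal{E}}}^N$ over the joint law of $\Yvect$ and of all $N$ draws, and push it through the finite average defining it in~\eqref{eq:robust}. Since each term $\mathsf{APURE}_{\mc^{(n)}}^{\mathcal{E}}$ depends on the draws only through $\mc^{(n)}$, the expectation over the remaining draws is vacuous, giving
\begin{align*}
\mathbb{E}_{\Yvect, \mc^{(1)}, \hdots, \mc^{(N)}} \left[ \overline{\mathsf{APURE}_{\mc}^{\mathcal{E}}}^N \right] = \frac{1}{N} \sum_{n=1}^N \mathbb{E}_{\Yvect, \mc^{(n)}} \left[ \mathsf{APURE}_{\mc^{(n)}}^{\mathcal{E}}(\Yvect ; \paramvect \, \lvert \, \boldsymbol{\alpha}) \right].
\end{align*}
Next, because every $\mc^{(n)}$ shares the law of $\mc$, each summand coincides with the expectation of the single-realization estimate controlled by Theorem~\ref{thm:fdmc}; invoking~\eqref{eq:fdmc-est}, each summand equals $\mathcal{E}(\widehat{\Xvect},\overline{\Xvect})$ in the small scale parameters limit $\alphavect \rightarrow \boldsymbol{0}$, so the $N$ identical contributions and the normalizing factor $1/N$ cancel. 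The identical chain of reasoning applied to~\eqref{eq:fdmc-pred} then handles $\overline{\mathsf{APURE}_{\mc}^{\mathcal{P}}}^N$ and the prediction risk $\mathcal{P}(\widehat{\Xvect},\overline{\Xvect})$.

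I do not expect any genuine obstacle here, since the statement is essentially a corollary of Theorem~\ref{thm:fdmc}. The only point I would treat with a moment's care is the interchange of the limit $\alphavect \rightarrow \boldsymbol{0}$ with the finite average: this is immediate, as a finite sum of quantities each converging to the same value converges to that value, so the asymptotic equality survives the averaging. I would state this remark explicitly to keep the argument rigorous, and otherwise the proof reduces to the two displayed lines above for the estimation risk and their verbatim analogues for the prediction risk.
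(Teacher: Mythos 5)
Your proposal is correct and matches the paper's own argument, which likewise invokes only the linearity of the expectation and of the limit $\alphavect \rightarrow \boldsymbol{0}$ applied to Equations~\eqref{eq:fdmc-pred} and~\eqref{eq:fdmc-est}. The extra remark on the identical distribution of the $\mc^{(n)}$ and on commuting the finite average with the limit is a welcome, if routine, precision.
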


\begin{proof}
By linearity of the expectations $\mathbb{E}_{\Yvect, \mc}$ in Equation~\eqref{eq:robust} and of the limit $\alphavect \rightarrow \boldsymbol{0}$ in Equations~\eqref{eq:fdmc-pred} and \eqref{eq:fdmc-est}
\begin{align}
\begin{split}
\mathbb{E}_{\Yvect,\mc} \left[ \overline{\mathsf{APURE}_{\mc}^{\mathcal{E}}}^N \right] \underset{\alphavect \rightarrow \boldsymbol{0}}{=}\mathcal{E}(\widehat{\Xvect},\overline{\Xvect})\quad  \text{and} \quad
\mathbb{E}_{\Yvect,\mc} \left[ \overline{\mathsf{APURE}_{\mc}^{\mathcal{P}}}^N \right] \underset{\alphavect \rightarrow \boldsymbol{0}}{=}\mathcal{P}(\widehat{\Xvect},\overline{\Xvect}).
\end{split}
\end{align} 
\end{proof}

\section{Application to piecewise linear estimation}
\label{sec:numerical}

The purpose of this section is twofold: first,  to assess the ability of the robustified Finite Difference Monte Carlo risk estimates derived from Theorem~\ref{thm:fdmc} and Proposition~\ref{prop:robust},  to approximate faithfully the true estimation and prediction risks; second, to demonstrate numerically that the hyperparameters selected by minimizing these risk estimates yield accurate estimates of the reproduction coefficients from observations following the generalized time-varying autoregressive model with data-dependent Poisson noise~\eqref{eq:poiss_model}. To that aim, intensive simulations are run on synthetic data generated according to~\eqref{eq:model} and~\eqref{eq:poiss_model}.

\subsection{Synthetic data}
\label{ssec:synthetic-data}

To prepare for the application to epidemiological indicator estimation, developed in Section~\ref{sec:covid}, the ground truth reproduction coefficient $\overline{\Xvect}$ is designed piecewise linear~\cite{abry2020spatial,pascal2022nonsmooth}, imitating temporal evolution of the reproduction number of COVID-19 observed from real-world data~\cite{du2023compared}. 
All synthetic data in this section are of length $T = 300$ and share the same ground truth, represented by the deep blue curve in Figure~\ref{fig:synthetic_ex}, bottom plot, alternating expansion and recession phases,  as represented by the blue areas indicating exponential growth period characterized by $\overline{\X}_t > 1$.
The initial state is set according to the observed real COVID-19 infection counts during the imitated period at $\Y_0 = 3395$.
Mimicking the epidemiological model proposed in~\cite{cori2013new} particularized to COVID-19 pandemic, the memory functions are chosen \emph{linear}, as described in Example~\ref{ex:linear}, with a constant memory horizon of $\tau = 25$ and a sequence $\lbrace \psi_s\rbrace_{s = 1}^\tau$ chosen as the daily discretization of the serial interval distribution of COVID-19 modeled as a Gamma distribution of mean $6.6$~days and standard deviation $3.5$~days~\cite{guzzetta2020,riccardo2020}.
The scale parameter of the data-dependent Poisson noise is constant through time, i.e., $\forall t, \, \alpha_t = \alpha >0$.
Seven values of $\alpha$ logarithmically spaced between $\alpha = 10^2$, corresponding to low noise level, to a very high noise level of $\alpha = 10^{5}$,  are explored.
Figure~\ref{fig:pure} provides in top row examples of synthetic observations with the same underlying ground truth reproduction coefficient, displayed in blue in the second row plots, for three scale parameter values, corresponding,  from first to third columns,  to low $\alpha = 10^2$, medium $\alpha = 10^3$ and high $\alpha = 10^4$ noise levels.

\begin{figure}[t!]
\centering
\begin{subfigure}{0.32\linewidth}

\includegraphics[width = 0.97\linewidth]{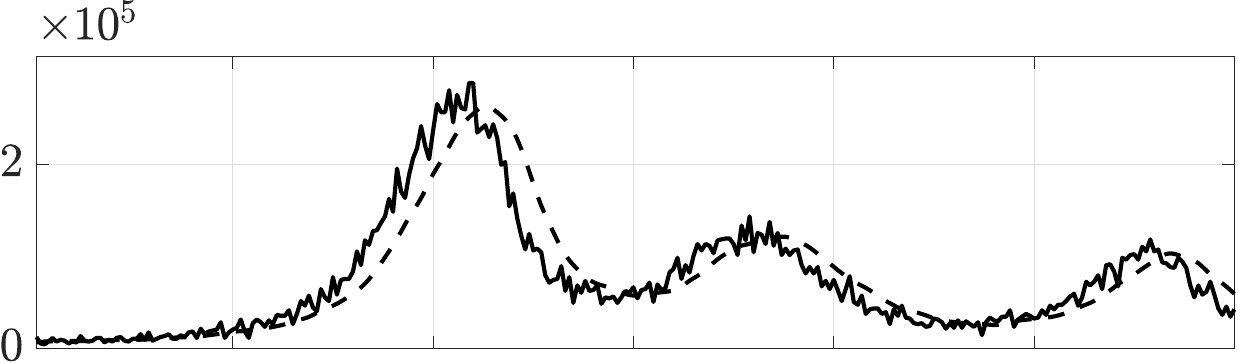}

\vspace{0.25mm}

\includegraphics[width = \linewidth]{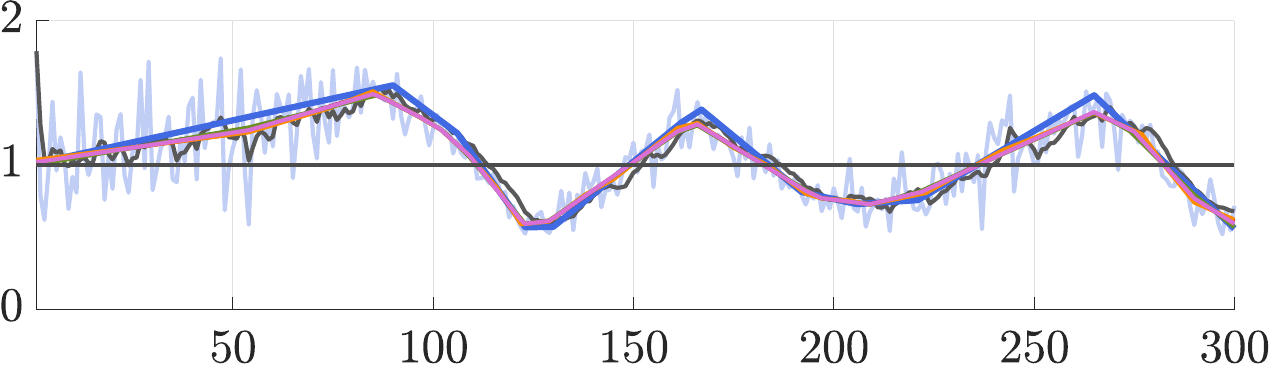}

\includegraphics[trim =  2.05cm 0mm 2.1cm 0mm, clip,width = \linewidth]{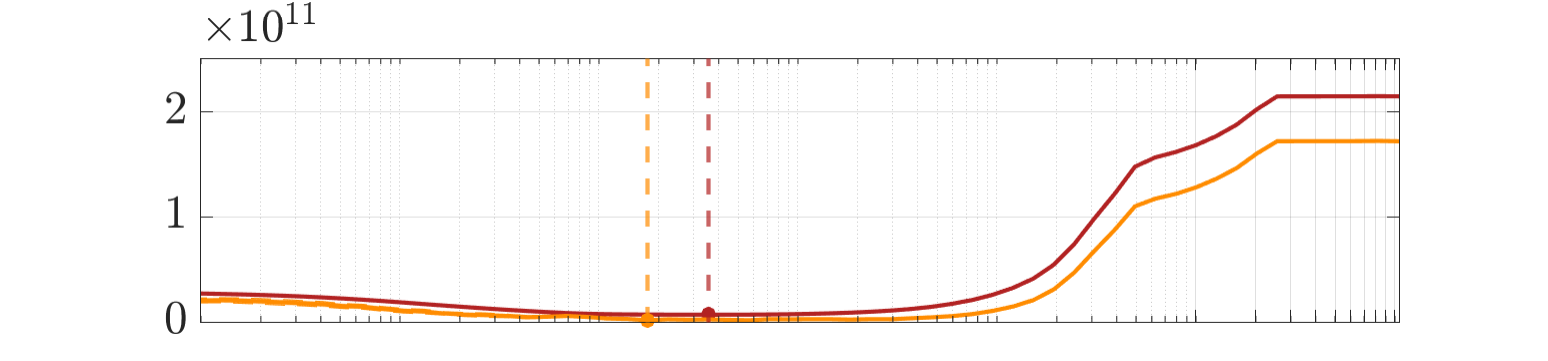}

\vspace{-1.225mm}
\includegraphics[trim =  2.05cm 0mm 2.1cm 0mm, clip,width = \linewidth]{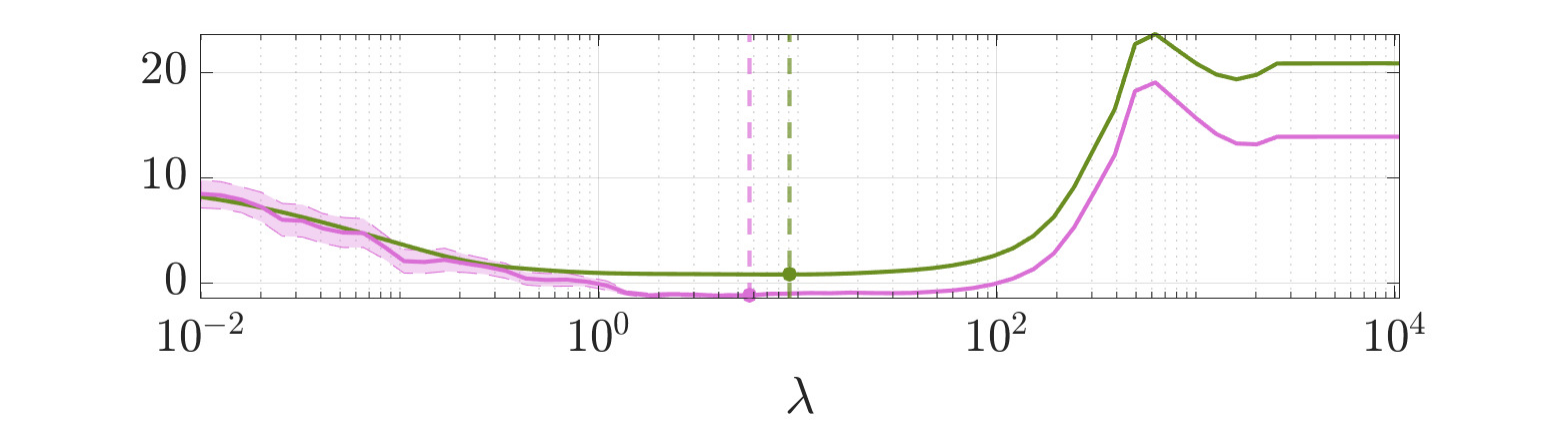}
\subcaption{$\log_{10} \alpha = 3$}
\end{subfigure}\hfill
\begin{subfigure}{0.32\linewidth}

\includegraphics[width = 0.97\linewidth]{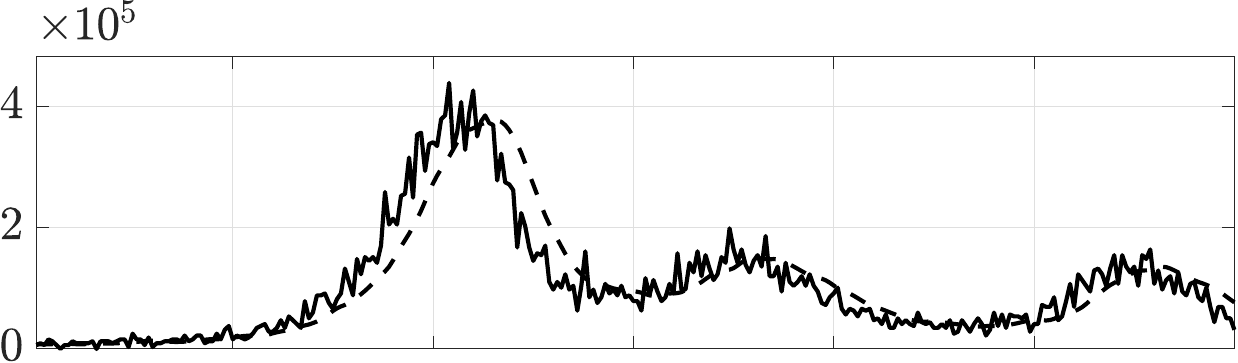}

\vspace{0.25mm}

\includegraphics[width = \linewidth]{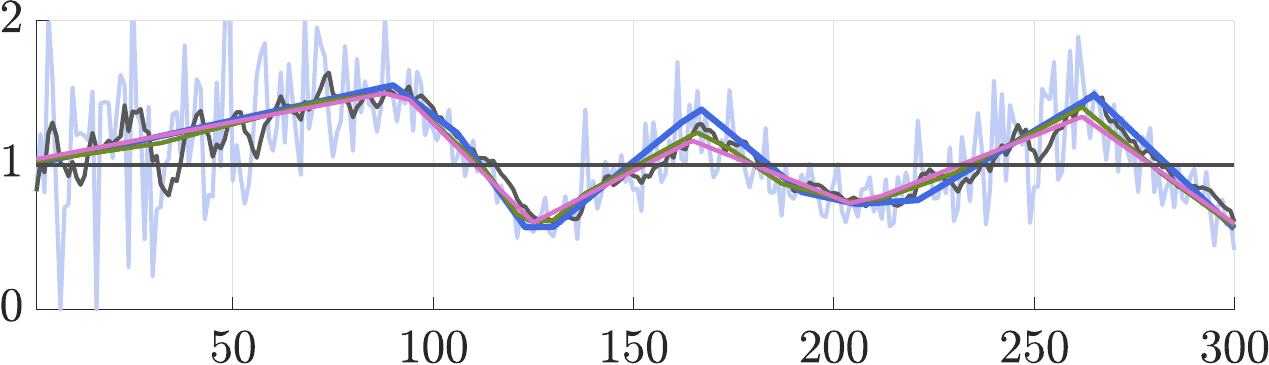}

\includegraphics[trim =  2.05cm 0mm 2.1cm 0mm, clip,width = \linewidth]{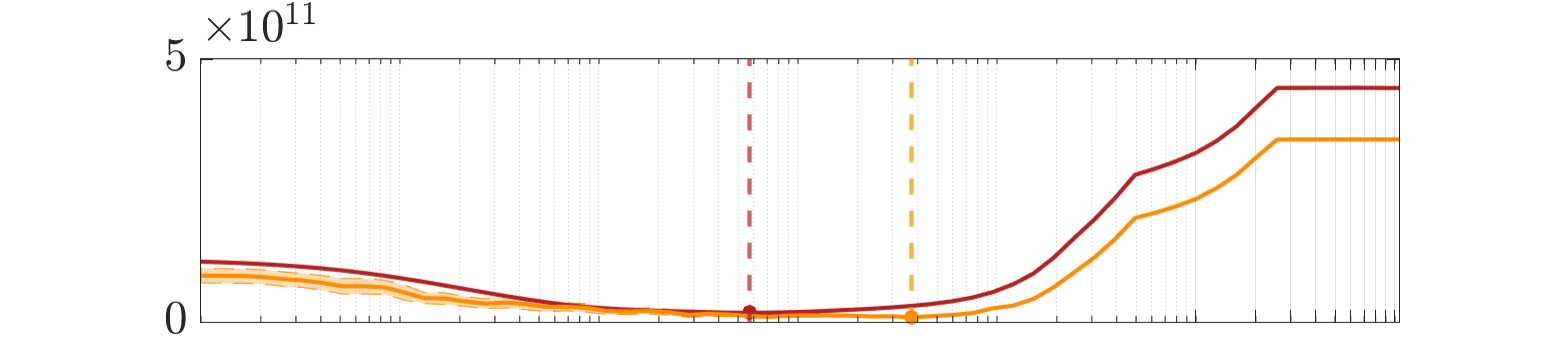}

\vspace{-1.225mm}
\includegraphics[trim =  2.05cm 0mm 2.1cm 0mm, clip,width = \linewidth]{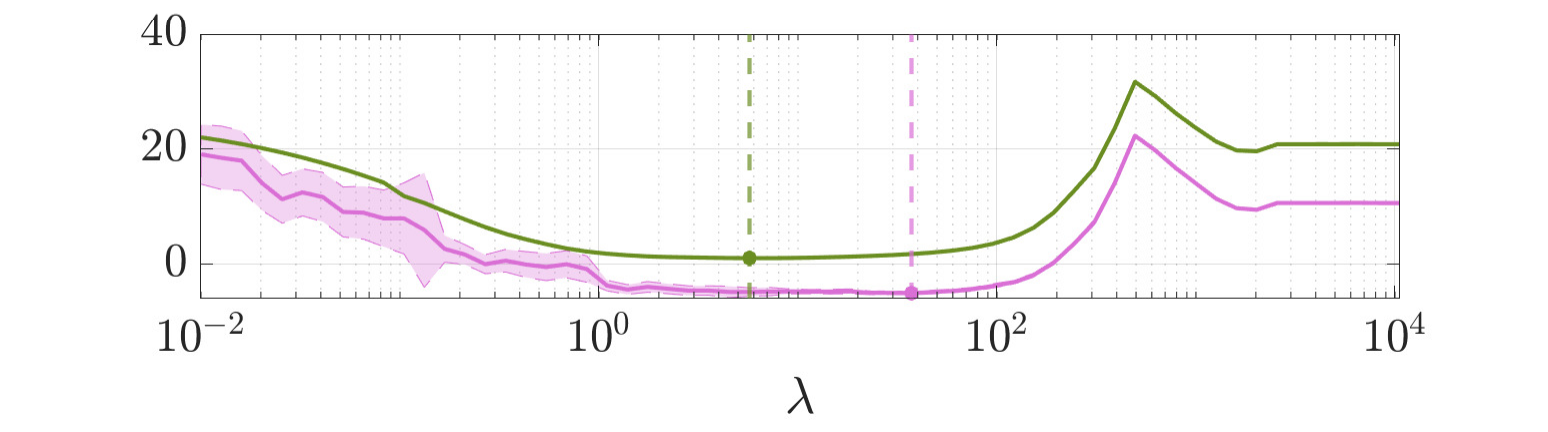}
\subcaption{$\log_{10} \alpha = 3.5$}
\end{subfigure}\hfill
\begin{subfigure}{0.32\linewidth}

\includegraphics[width = 0.97\linewidth]{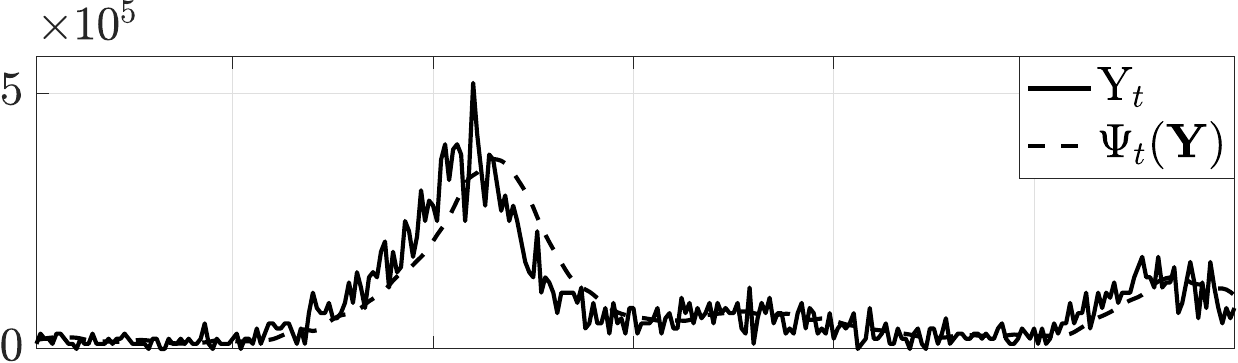}

\vspace{0.25mm}

\includegraphics[width = \linewidth]{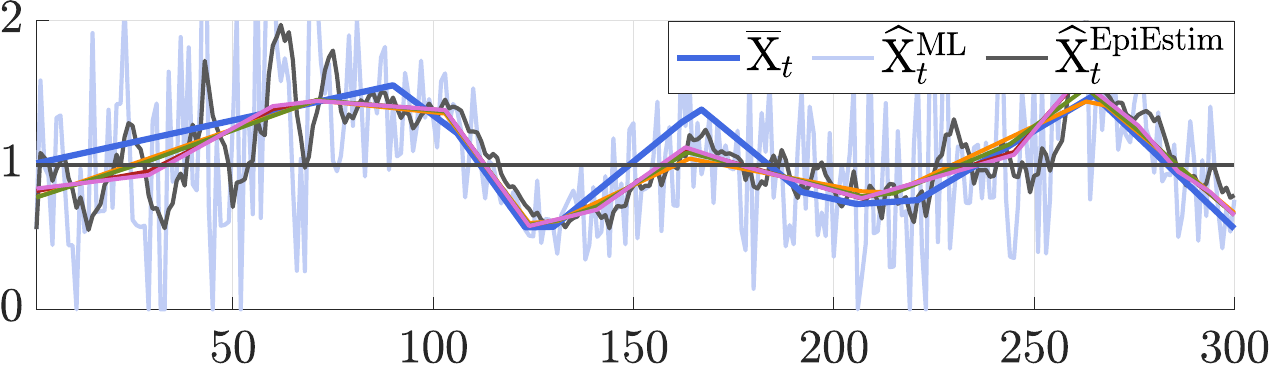}

\includegraphics[trim =  2.05cm 0mm 2.1cm 0mm, clip,width = \linewidth]{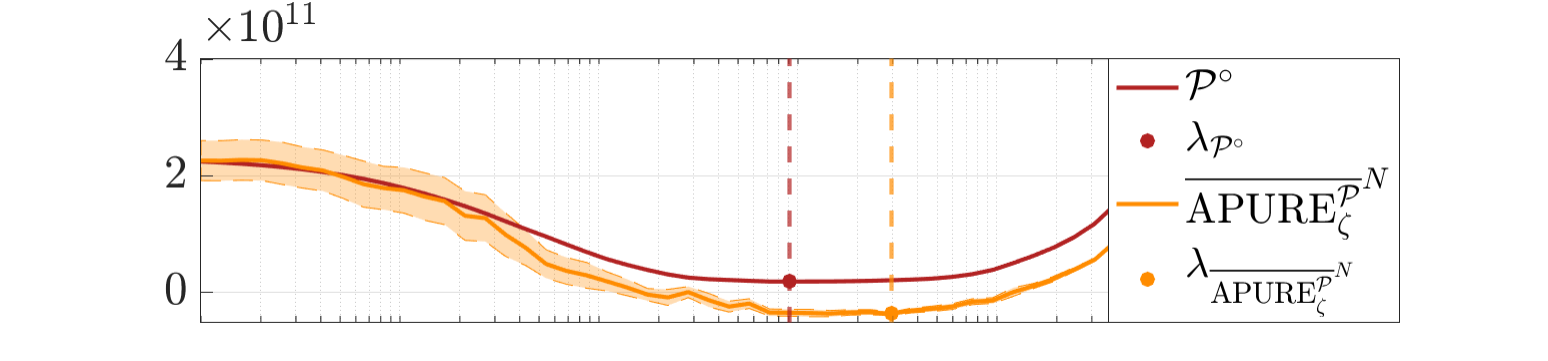}

\vspace{-1.225mm}
\includegraphics[trim =  2.05cm 0mm 2.1cm 0mm, clip,width = \linewidth]{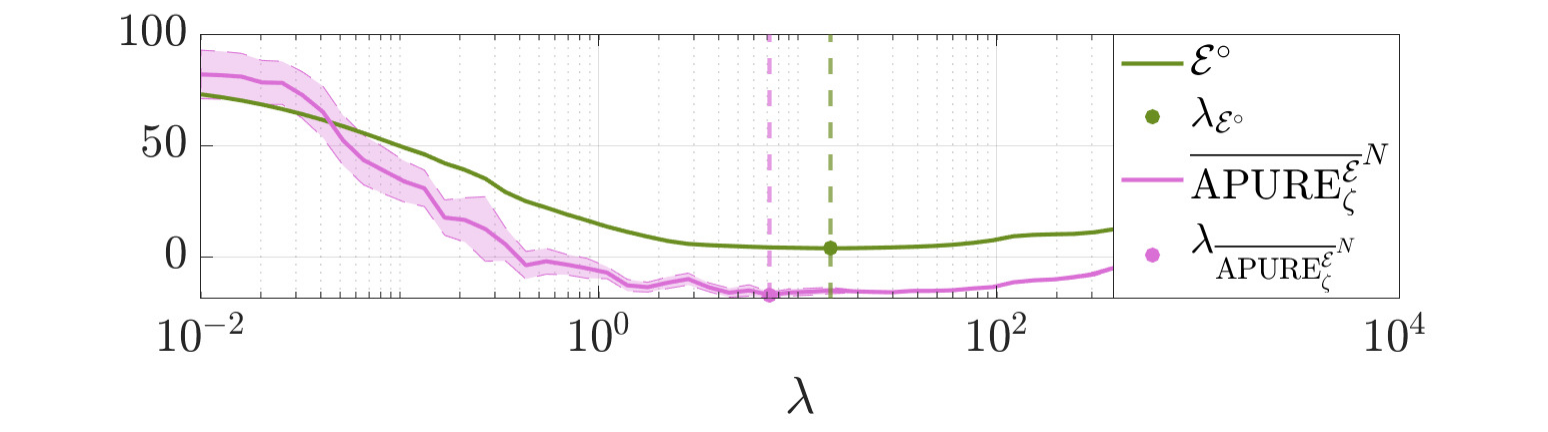}
\subcaption{$\log_{10} \alpha = 4$}
\end{subfigure}

\vspace{-2mm}
\caption{\label{fig:pure}\textbf{Estimation of the reproduction coefficient from synthetic generalized time-varying autoregressive Poisson observations using the variational estimator~\eqref{eq:practical-est} coupled with the oracle-based hyperparameter selection strategy~\eqref{eq:orcl_lambda}.}
\underline{First row:} One realization of synthetic observations drawn following the setup described in Section~\ref{ssec:synthetic-data}. 
\underline{Second row:} Underlying ground truth reproduction coefficient in blue,  Maximum Likelihood estimate~\eqref{eq:ML-eq} in light blue, state-of-the-art estimator EpiEstim~\eqref{eq:epiestim} derived for reproduction number estimation in epidemiology~\cite{cori2013new,thompson2019improved} in dark gray, and estimates obtained from the variational estimator~\eqref{eq:practical-est} combined with the oracle-based regularization parameter selection strategy, for the two ground truth-dependent oracles $\mathcal{P}^\circ$ and $\mathcal{E}^\circ$ in red and green respectively, and for the two data-driven oracles $\overline{\mathsf{APURE}^{\mathcal{P}}_{\mc}}^N$ and $\overline{\mathsf{APURE}^{\mathcal{E}}_{\mc}}^N$ in  pink and orange respectively.
\underline{Third and fourth rows:} Robustified data-driven prediction and estimation risk estimates computed on a logarithmic grid of regularization parameters $\lambda$, accompanied with their Gaussian confidence regions computed from the  $N=10$ Monte Carlo vector realizations,  and exact prediction and estimation errors; vertical dashed lines indicate optimal hyperparameters selected for each oracle.}
\vspace{-3.5mm}
\end{figure}

\subsection{Estimation strategy}
\label{ssec:est-strat}

The estimation of the  reproduction coefficient is performed through a variational procedure consisting in the minimization of the penalized negative log-likelihood functional
\begin{align}
\label{eq:practical-est}
\widehat{\Xvect}(\Yvect;\param) = \underset{\Xvect \in \mathbb{R}_+^T}{\mathrm{argmin}} \, \, \mathcal{D}_{\alphavect}\left(\Yvect , \Xvect \odot\Psivect(\Yvect)\right) + \param \lVert \textbf{D}_2 \Xvect \rVert_1
\end{align}
where $\mathcal{D}$ is the Kullback-Leibler divergence customized to the generalized time-varying autoregressive Poisson model as described in Equation~\eqref{eq:DKL_gen}, $\textbf{D}_2$ is the discrete Laplacian, which, combined with the $\ell_1$-norm favors piecewise linear behavior of the estimate, and $\lambda > 0$ is a regularization parameter balancing the data-fidelity and the penalization~\cite{abry2020spatial,pascal2022nonsmooth}.
As the minimization problem~\eqref{eq:practical-est} is convex but nonsmooth,  it is solved using the Chambolle-Pock proximal algorithm~\cite{chambolle2011first, abry2020spatial,pascal2022nonsmooth} leveraging the closed-form expression of the proximity operator of the Kullback-Leibler divergence of Equation~\eqref{eq:DKL_gen}~\cite{abry2020spatial}.

\begin{proposition}
Let $\Yvect$ observations following the generalized time-varying autoregressive Poisson model described in Section~\ref{ssec:synthetic-data}. Then, the variational estimator $\widehat{\Xvect}(\Yvect;\param)$ defined in Equation~\eqref{eq:practical-est} satisfies the assumptions enunciated in Theorems~\ref{thm:pure-poisson} and \ref{thm:fdmc}.
Hence, for any $N \in \mathbb{N}^*$, the data-driven robustified oracles $\overline{\mathsf{APURE}_{\mc}^{\mathcal{E}}}^N$ and $\overline{\mathsf{APURE}_{\mc}^{\mathcal{P}}}^N$ are asymptotically unbiased estimators of the estimation and prediction risks respectively.
\end{proposition}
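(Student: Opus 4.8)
The plan is to verify, one by one, each hypothesis required by Theorems~\ref{thm:pure-poisson} and~\ref{thm:fdmc} for the specific estimator~\eqref{eq:practical-est}, after which the asymptotic unbiasedness of the robustified oracles $\overline{\mathsf{APURE}_{\mc}^{\mathcal{E}}}^N$ and $\overline{\mathsf{APURE}_{\mc}^{\mathcal{P}}}^N$ follows at once by combining those theorems with Proposition~\ref{prop:robust}. A preliminary step is to confirm that~\eqref{eq:practical-est} defines a genuine single-valued map $\yvect \mapsto \widehat{\Xvect}(\yvect;\param)$: the Kullback--Leibler data-fidelity~\eqref{eq:DKL_gen} is proper, lower semicontinuous and coercive on $\mathbb{R}_+^T$, and is strictly convex in $\Xvect$ at every $\yvect$ with positive components (each $\Psi_t(\yvect)>0$ and $\mathsf{d}_{\mathsf{KL}}(\Y_t/\alpha_t\,|\,\cdot\,)$ is strictly convex as soon as $\Y_t>0$), so the minimizer exists and is unique on the full-measure set of positive observations. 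On that same set the driven autoregressive Poisson setup of Section~\ref{ssec:synthetic-data} meets the requirements of Lemma~\ref{lem:Poisson-Stein}: the memory functions are linear as in~\eqref{eq:linear} with nonnegative serial-interval weights and positive initial state $\Y_0$, hence $\Psi_t(\Yvect)>0$ for every $t$, which simultaneously settles the nonvanishing condition needed for $\mathsf{APURE}^{\mathcal{E}}$ and the positivity used above; moreover $\partial_{\y_t}\Psi_s\equiv\psi_{s-t}$ is constant, so that $\lvert\partial_{\y_t}\Psi_s(\yvect)\,\alpha_t\rvert=\psi_{s-t}\,\alpha_t$ is negligible relative to the $O(1)$ quantity $\Psi_s(\yvect)$ as $\alphavect\to\boldsymbol{0}$, which is exactly Assumption~\ref{hyp:psi-alpha} in the small-scale regime.

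Next I would establish Assumption~\ref{hyp:Poisson}, the summability of $\kvect\mapsto\widehat{\X}_t(\alpha_1\kk_1,\dots;\param)\,\overline{\X}_t\Psi_t(\dots)$ against the autoregressive Poisson mass~\eqref{eq:APoisson}, for each coordinate. The key is a growth bound on the estimator obtained by evaluating the objective of~\eqref{eq:practical-est} at the maximum-likelihood point $\widehat{\Xvect}^{\mathsf{ML}}$: this bounds the optimal objective value by $\param\lVert\D\widehat{\Xvect}^{\mathsf{ML}}\rVert_1$, whence each Kullback--Leibler term is individually controlled, and coercivity of $\mathsf{d}_{\mathsf{KL}}$ forces $\widehat{\X}_t(\yvect;\param)$ to grow at most polynomially (up to logarithmic factors) in $\max_s\Y_s$. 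Since $\Psi_t$ is linear in the observations, the integrand grows at most polynomially in $\kvect$, and the factorial decay $\prod_s(\kk_s!)^{-1}$ in~\eqref{eq:APoisson} dominates any polynomial, yielding absolute summability.

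The hard part will be Assumption~\ref{hyp:diff}, the continuous differentiability of $\yvect\mapsto\widehat{\Xvect}(\yvect;\param)$, because the penalty $\param\lVert\D\Xvect\rVert_1$ is nonsmooth and the solution map generically exhibits kinks where the support of $\D\widehat{\Xvect}$ changes; the map is therefore \emph{not} $C^1$ on all of $\mathbb{R}_+^T$. The plan is to exploit the partial smoothness of $\lVert\D\,\cdot\,\rVert_1$ relative to the subspaces of fixed support: on the open set of observations for which the active set is locally constant and a strict-complementarity nondegeneracy condition holds, the minimizer solves a smooth reduced optimality system, and an implicit-function-theorem argument shows $\yvect\mapsto\widehat{\Xvect}(\yvect;\param)$ is $C^1$ there with an explicit Jacobian. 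I would then invoke the sensitivity analysis of analysis-type $\ell_1$ regularizers to guarantee that the exceptional set where differentiability fails is Lebesgue-negligible, and argue that this is all the derivation of Theorem~\ref{thm:fdmc} actually needs: the expectations $\mathbb{E}_{\Yvect,\mc}$ in~\eqref{eq:fdmc-pred}--\eqref{eq:fdmc-est} are unchanged when the integrand is modified on a null set, so the Finite Difference Monte Carlo identities survive the loss of global differentiability.

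With every hypothesis of Theorems~\ref{thm:pure-poisson} and~\ref{thm:fdmc} thus verified for~\eqref{eq:practical-est}, the estimator qualifies in Theorem~\ref{thm:fdmc}, so each $\mathsf{APURE}_{\mc^{(n)}}^{\mathcal{E}}$ and $\mathsf{APURE}_{\mc^{(n)}}^{\mathcal{P}}$ is an asymptotically unbiased estimate of the estimation and prediction risks respectively, and averaging over the $N$ independent draws $\mc^{(1)},\dots,\mc^{(N)}$ preserves this property by the linearity argument of Proposition~\ref{prop:robust}, which concludes.
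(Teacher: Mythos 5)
The paper states this proposition without any proof, so there is no argument of the authors' to compare yours against; your attempt is an honest reconstruction of what such a proof would have to contain, and the checks of Assumption~\ref{hyp:psi-alpha} (constant partial derivatives $\psi_{s-t}$ of the linear memory functions, negligible against $\Psi_s$ as $\alphavect\to\boldsymbol{0}$ — this mirrors the paper's own remark in Section~\ref{ssec:data-driven-Rhat}) and of the summability Assumption~\ref{hyp:Poisson} via a growth bound on the minimizer against the factorial decay of~\eqref{eq:APoisson} are along the right lines.

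The genuine gap is in your treatment of Assumption~\ref{hyp:diff}. You are right that $\yvect\mapsto\widehat{\Xvect}(\yvect;\param)$ cannot be $C^1$ on all of $\mathbb{R}_+^T$ because of the $\ell_1$-analysis penalty $\param\lVert\D\Xvect\rVert_1$ — so the proposition, read literally, needs a qualification that the paper does not supply — but your repair does not close the hole. You argue that the exceptional set where the active set changes is Lebesgue-negligible and that "the expectations $\mathbb{E}_{\Yvect,\mc}$ are unchanged when the integrand is modified on a null set." Under the scaled Poisson model~\eqref{eq:poiss_model}, however, the law of $\Yvect$ is \emph{discrete}, supported on the lattice $\prod_t \alpha_t\mathbb{N}$, which is itself Lebesgue-null; Lebesgue-negligibility of the kink set therefore gives no control whatsoever on its $\mathbb{P}_{\Yvect}$-measure, and the quantity $\partial_{\Yvect}\widehat{\Xvect}[\mc]$ appearing inside $\mathsf{APURE}_{\mc}^{\mathcal{P}}$ and $\mathsf{APURE}_{\mc}^{\mathcal{E}}$ must be defined at every lattice point charged by the model. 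To close this you would need either a genericity argument showing the lattice avoids the degenerate locus (e.g.\ for almost every $\alphavect$, or under a strict-complementarity condition holding at the observed data), or a retreat to the exact finite-difference form of Theorem~\ref{thm:pure-poisson}, which requires no derivative at all, treating Theorem~\ref{thm:fdmc} as an additional approximation whose validity is only checked empirically — which is in effect what the paper does. A secondary slip: you invoke strict convexity of the Kullback--Leibler term "on the full-measure set of positive observations," but $\mathsf{d}_{\mathsf{KL}}(0\,\vert\,\U)=\U$ is affine in $\U$, and $\mathbb{P}(\Y_t=0)>0$ for any fixed $\alphavect$, so uniqueness of the minimizer on coordinates with vanishing counts requires a separate argument (or again the limit $\alphavect\to\boldsymbol{0}$).
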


Given a realization of synthetic observations $\Yvect$, the fine-tuning of the regularization parameter $\param$ is performed through the oracle minimization strategy of Equation~\eqref{eq:orcl_lambda}.
Four oracles are considered: the exact estimation and prediction errors
\begin{align}
\begin{split}
\label{eq:true-errors}
\mathcal{E}^{\circ}(\Yvect; \param) = \lVert \widehat{\Xvect}(\Yvect;\param) - \overline{\Xvect} \rVert^2_2,  \quad 
 \mathcal{P}^{\circ}(\Yvect; \param) = \lVert \widehat{\Xvect}(\Yvect;\param) \odot \Psivect(\Yvect)- \overline{\Xvect}\odot \Psivect(\Yvect) \rVert^2_2
 \end{split}
\end{align}
which, by definition of $\mathcal{E}$ and $\mathcal{P}$, are ground truth-dependent unbiased estimates of the estimation and prediction risk respectively;
and the proposed $\overline{\mathsf{APURE}_{\mc}^{\mathcal{E}}}^N$ and $\overline{\mathsf{APURE}_{\mc}^{\mathcal{P}}}^N$ introduced in Section~\ref{sec:cure}, which are fully data-driven.
Oracles are minimized through exhaustive search over a logarithmic grid of $\lambda$ from $10^{-2} \times \mathrm{std}(\Yvect)$ to $10^4\times\mathrm{std}(\Yvect)$, providing four hyperparameter choices $\param_{\mathcal{O}}$,  and resulting in four estimates $\widehat{\Xvect}(\Yvect;\param_{\mathcal{O}})$.\footnote{The dependency of $\param_{\mathcal{O}}$ in $\Yvect$ is omitted for readability.\label{fn:lambda}}
Numerical simulations aim at assessing the quality of the estimates obtained from the proposed data-driven oracles compared to estimates based on ground truth oracles, which are not usable in real-world problems, such as epidemic monitoring described in Section~\ref{sec:covid}.

Furthermore,  these four estimates are compared to the Maximum Likelihood estimate of Equation~\eqref{eq:ML-eq} and to EpiEstim\footnote{\url{https://github.com/jstockwin/EpiEstimApp}}~\cite{cori2013new,thompson2019improved}, the state-of-the-art reproduction number estimator in epidemiology, obtained as the a posteriori mean of a Bayesian model composed of the Poisson autoregressive likehood~\eqref{eq:DKL_gen} together with a conjugated gamma prior of shape parameter $a = 1$ and scale parameter $b = 5$ on the reproduction coefficient and smooth behavior constraint, yielding
\begin{align}
\label{eq:epiestim}
\forall t \in \lbrace 1, \hdots, T \rbrace, \quad \widehat{\X}_t^{\mathrm{EpiEstim}} = \frac{a +\sum_{s = 0}^{\delta-1} \Y_{t-s} }{1/b +\sum_{s = 0}^{\delta-1} \Psi_{t-s}(\Yvect)} 
\end{align}
where $\delta = 7$ mimics the counts and infectiousness averaging over seven days which enforces temporal regularity~\cite{cori2013new}.

\begin{table*}

\setlength{\tabcolsep}{0.825mm}
{\footnotesize
\hspace{-6.5mm}\begin{tabular}{clccccccc}
\toprule

Metric & \multicolumn{1}{c}{Estimator}  &  $\log_{10}\alpha = 2$ & $\log_{10}\alpha = 2.5$ & $\log_{10}\alpha = 3$ & $\log_{10}\alpha = 3.5$ & $\log_{10}\alpha = 4$ &   $\log_{10}\alpha = 4.5$     &   $\log_{10}\alpha = 5$ \\
\midrule
\multicolumn{1}{c}{$\mmse \pm \mathsf{CI} $} & Max. likelihood~\eqref{eq:ML-eq} & $1.25\pm0.08$ & $3.17\pm0.34$ & $20.58\pm9.52$ & $28.38\pm8.87$ \,& $56.25\pm14.40 $ & $83.30\pm 32.98$ & $228.00\pm374.80$\\ 
\noalign{\vskip0.75mm}
& EpiEstim~\cite{cori2013new,thompson2019improved} & $1.26\pm0.04$ & $1.47\pm0.09$& \, $3.96\pm1.33$& $5.80\pm1.93$& $8.59\pm1.64$ & $11.20\pm3.82$ \, & $18.83\pm9.18$ \, \\
\noalign{\vskip0.5mm}
& Estimation error $\mathcal{E}^{\circ}$ &  $0.51 \pm 0.02$ & $0.57\pm 0.05$ & \, $1.06\pm  0.20$ & $1.23\pm 0.40$ & $1.76 \pm 0.52$ & $1.37\pm0.26$ & $2.35 \pm1.51$ \\ 
\noalign{\vskip1.5mm}
  &Prediction error $\mathcal{P}^{\circ}$  & $0.53\pm0.02$ & $0.59\pm 0.05$ & \, $1.21\pm 0.32$ & $2.38\pm 2.00$ & $1.89\pm 0.52$ & $1.50\pm0.27$ & $3.03\pm1.84$ \\  
  &$\overline{\mathsf{APURE}^{\mathcal{P}}_{\mc}}^N$  & $0.53\pm 0.02$ & $0.59\pm 0.05$ &\,  $1.20\pm 0.33$ & $1.40\pm 0.44$ & $2.17\pm 0.68$ & $1.53\pm0.29$ &  \, $8.80\pm12.90$ \\
 &$\overline{\mathsf{APURE}^{\mathcal{E}}_{\mc}}^N$ & $0.53\pm 0.02$ & $0.59\pm 0.05$ & \, $2.03\pm 1.28$ & $4.86\pm 4.10$ & $7.42\pm 4.42$ & $23.98\pm17.82$ &  $42.08\pm44.06$ \\
\midrule

\multicolumn{1}{c}{$\bias$ -- $\vari$}  & Max. likelihood~\eqref{eq:ML-eq} & $0.52$ -- $0.73$ & $0.59$ -- $2.58$ & \, $1.69$ -- $18.88$ & \, $1.94$ -- $26.44$ & \, $3.96$ -- $52.29$ &   \,\, $4.50$ -- $78.80$& \,  $15.88$ -- $262.10$\\
\noalign{\vskip0.5mm}
& EpiEstim~\cite{cori2013new,thompson2019improved} & $1.15$ -- $0.11$ & $1.16$ -- $0.31$ & $1.33$ -- $2.63$ & $1.36$ -- $4.45$ & $1.60$ -- $6.99$ & \, $1.30$ -- $9.90$ & \,  $1.48$ -- $17.35$\\
\noalign{\vskip0.5mm}
& Estimation error $\mathcal{E}^{\circ}$ & $0.49$ -- $0.02$ &    $0.50$ -- $0.06$ &  $0.73$ -- $0.33$ &  $0.55$ -- $0.68$  &  $0.85$ -- $0.91$ &\,  $0.55$ -- $0.82$ & $0.98$ -- $1.37$  \\ 
\noalign{\vskip1mm}
&Prediction error $\mathcal{P}^{\circ}$  & $0.49$ -- $0.04$ &  $0.50$ -- $0.09$  &  $0.66$ -- $0.55$ &  $0.82$ -- $1.56$  &  $0.77$ -- $1.11$ & \, $0.48$ -- $1.02$ &  $0.65$ -- $2.38$ \\
&$\overline{\mathsf{APURE}^{\mathcal{P}}_{\mc}}^N$ & $0.50$ --  $0.03$  &   $0.53$ -- $0.07$  &  $0.70$ -- $0.50$   &  $0.56$ -- $0.84$  & $0.91$ -- $1.26$ & \, $0.57$ -- $0.96$ & $1.04$ -- $7.75$ \\
&$\overline{\mathsf{APURE}^{\mathcal{E}}_{\mc}}^N$ & $0.50$ -- $0.03$  &   $0.51$ -- $0.08$  &  $0.72$ -- $1.31$  &  $0.95$ -- $3.91$  &  $0.99$ -- $6.43$ & \, \, $1.47$ -- $22.51$ & \,  $2.38$ -- $39.70$ \\
\bottomrule
\end{tabular}}

\vspace{-1mm}
\caption{\textbf{Performance of the variational estimate~\eqref{eq:practical-est} combined with oracle-based selection of the regularization parameter compared to  maximum likelihood estimate~\eqref{eq:ML-eq} and state-of-the-art EpiEstim estimator~\cite{cori2013new,thompson2019improved}.}
Ground truth-dependent oracles,  $\mathcal{E}^\circ$ and $\mathcal{P}^\circ$ defined in Equation~\eqref{eq:true-errors}, are compared with the proposed fully data-driven robustified $\overline{\mathsf{APURE}^{\mathcal{E}}_{\mc}}^N$ and $\overline{\mathsf{APURE}^{\mathcal{P}}_{\mc}}^N$ derived in Proposition~\eqref{prop:robust}, averaged over $N=10$ realizations of the Monte Carlo vector for seven logarithmically spaced scale parameters $\alpha$.
\underline{Second to seventh rows:} Minimal Mean Squared Error accompanied with 95$\%$ Gaussian confidence intervals.  \underline{Eighth to eleventh rows:} Estimation bias and variance.
Performances are computed on $Q=20$ realizations of synthetic observations drawn following the setup described in Section~\ref{ssec:synthetic-data}.
\label{tab:pure}}
\vspace{-3.5mm}
\end{table*}

\subsection{Performance evaluation}

For each scale parameter explored,  performances are evaluated on a collection of $Q = 20$ independent synthetic observations $\lbrace \Yvect^{(q)}, \, q = 1, \hdots, Q \rbrace$ generated following the setup of Section~\ref{ssec:synthetic-data}.  
For $\mathcal{O} \in \lbrace\mathcal{E}^\circ,\mathcal{P}^\circ,\overline{\mathsf{APURE}_{\mc}^{\mathcal{E}}}^N,\overline{\mathsf{APURE}_{\mc}^{\mathcal{P}}}^N\rbrace$, the hyperparameters selected through the oracle strategy applied to the $q$th realization are denoted\footref{fn:lambda}
\begin{align}
\param_{\mathcal{O}}^{(q)} \in \underset{\param \in \mathbb{R_+}}{\mathrm{Argmin}} \, \, \mathcal{O}(\Yvect^{(q)};\param).
\end{align}
Several performance criteria are considered.
First, the \emph{Minimal Mean Squared Error} over the $Q$ realizations:
\begin{align}
\label{eq:mmse}
\mmse = \frac{1}{Q} \sum_{q = 1}^Q \left\lVert \widehat{\Xvect}\left(\Yvect^{(q)};\param_\mathcal{O}^{(q)}\right) - \overline{\Xvect} \right\rVert^2_2,
\end{align}
quantifies the overall accuracy of the estimate obtained when selecting hyperparameters so as to minimize the oracle $\mathcal{O}$.
It is accompanied by its 95$\%$ Gaussian confidence interval:
\begin{align}
\label{eq:gci}
\mathsf{CI} =\frac{1.96}{ \sqrt{Q}} \times \frac{1}{Q}  \sum_{q = 1}^Q \left(\left\lVert \widehat{\Xvect}\left(\Yvect^{(q)};\param_\mathcal{O}^{(q)}\right) - \overline{\Xvect} \right\rVert^2_2 - \mmse\right)^2.
\end{align}
Let $\langle \widehat{\Xvect} \rangle_\mathcal{O}$ the mean estimate obtained using the oracle $\mathcal{O}$ over the $Q$ realizations,\footnote{The mean estimate writes $\displaystyle
\langle \widehat{\Xvect} \rangle_\mathcal{O} =  \frac{1}{Q} \sum_{q = 1}^Q \widehat{\Xvect}\left(\Yvect^{q)};\param_\mathcal{O}^{(q)}\right).$} then the Minimal Mean Squared Error further decomposes into a squared bias term and a variance term, $\mmse = \bias + \vari$, with
\begin{align}
\bias =  \left\lVert \langle \widehat{\Xvect} \rangle_\mathcal{O} - \overline{\Xvect} \right\rVert^2_2,  \quad
\vari =  \frac{1}{Q} \sum_{q = 1}^Q \left\lVert \widehat{\Xvect}\left(\Yvect^{(q)};\param_\mathcal{O}^{(q)}\right) - \langle \widehat{\Xvect} \rangle_\mathcal{O} \right\rVert^2
\end{align}
reported together with the $\mmse$ for further comparison.

\begin{remark}
The number of realizations $Q$ over which  performance  are computed has been chosen after observing that all aforementioned metrics together with their confidence intervals had stabilized after $Q = 20$ realizations.
\end{remark}

\subsection{Results}

Figure~2 compares the four proposed oracle-based estimation strategies, the Maximum Likelihood~\eqref{eq:ML-eq} and the state-of-the-art EpiEstim~\cite{cori2013new,thompson2019improved} estimators on one realization of generalized time-varying autoregressive Poisson synthetic observations for three values of the scale parameter, corresponding to low, medium and high noise levels.
Synthetic observations and memory terms are displayed in the first row; the underlying ground truth reproduction coefficient and its estimates obtained using the four different oracles, Maximum Likelihood strategy and EpiEstim estimator are plotted on the second row;  third (resp. fourth) row compares the prediction (resp. estimation) ground truth and data-dependent oracles, and the associated optimal hyperparameters.

First of all,  for all values of the scale parameter $\alpha$, both the Maximum likelihood and EpiEstim estimates achieve poor accuracy compared to the four oracle-based regularized estimates.
Second, for the three noise levels,  both the prediction and estimation data-dependent oracles, displayed as the orange curves in the third row plots and pink curves in the fourth row plots respectively, approximate very closely the true prediction and estimation errors, displayed as the red curves in the third row plots and green curves in the fourth row plots respectively, throughout the large range of regularization parameter explored.
For low and medium noise levels, first and second columns in Figure~\ref{fig:pure}, the four oracles are flat enough in the optimal regularization parameter region so that the small differences between the selected hyperparameters are not visible on the obtained estimates, displayed in the second row, which are all of equal quality.
When the noise level gets higher, all oracles are more picked around their minima and the hyperparameters selected by ground truth dependent or data-driven oracles are very close, yielding similarly good estimates.
As expected, the 95$\%$ Gaussian confidence regions around the robustified data-driven oracles,  computed from the $N = 10$ realizations of the Monte Carlo vector and displayed in deemed colors, thicken as the noise level increases, while remaining of reasonable size, demonstrating the stability of the parameter selection strategy relying on the data-driven oracles.

Table~\ref{tab:pure} provides systematic performances computed on $Q = 20$ realizations of the observations for each of the seven noise levels explored.
In agreement with Figure~\ref{fig:pure}, Maximum Likelihood and EpiEstim performance lie far behind those of the proposed oracle-based regularized strategies.
As expected, the larger $\alpha$, the larger the Minimal Mean Squared Error, consistently for all six estimators. 
Though, the estimation accuracy when using the two ground truth-dependent oracles,  fourth and fifth rows, and the robustified \emph{prediction} unbiased risk estimate, sixth row, increases very slowly with $\alpha$.
Furthermore, at fixed noise level, using either $\mathcal{E}^\circ$, $\mathcal{P}^\circ$ or  $\overline{\mathsf{APURE}^{\mathcal{P}}_{\mc}}^N$ leads to equivalent Minimal Mean Squared Error.
This shows, first, that the estimation accuracy is unaltered when replacing the exact \emph{estimation} error $\mathcal{E}^\circ$ by the exact \emph{prediction} error $\mathcal{P}^\circ$, and second, that the data-driven oracle $\overline{\mathsf{APURE}^{\mathcal{P}}_{\mc}}^N$ yields estimates of very similar quality.
Considering the performance of the robustified unbiased \emph{estimation} risk estimate, reported in the seventh row,  they remain similarly good as the three other oracles for low to medium noise levels, but then suddenly drop for $\alpha > 10^3$ while the associated 95$\%$ Gaussian confidence intervals are also thickening violently.
This shows that, despite the robustification procedure of Proposition~\ref{prop:robust}, the data-driven oracle $\overline{\mathsf{APURE}^{\mathcal{E}}_{\mc}}^N$ is unstable, which is probably due to the division by $\Psi_t(\Yvect)$ which order of magnitude varies significantly with $t$, between $10^3$ and  $10^{5}$ in these examples, as can be observed on Figure~\ref{fig:pure}, top row.
To gain further insight, the decomposition of the Minimal Mean Squared Error into the squared bias and the variance is reported in eighth to thirteenth rows of Table~\ref{tab:pure}. 
For low noise levels $\alpha < 10^3$,  third and fourth columns, the bias, which is intrinsic to all regularized estimation strategies of the form~\eqref{eq:MAP}, is responsible for almost all the estimation error.
When the noise level exceeds $\alpha = 10^3$, squared bias and variance contribute equally to the estimation error, as observed in fifth to eighth columns.
These results advocate, in a practical context, to use preferably the robustified unbiased \emph{prediction} risk estimate which appears both very accurate in the selection of the optimal regularization parameter and more robust to medium to high noise levels.
Finally, it is worth noting that for very high noise level $\alpha = 10^5$,  corresponding to the ninth column of Table~\ref{tab:pure}, even the most robust oracle-based estimators are impaired, both because the estimation task is extremely difficult and because such values of the scale parameter do not permit to satisfy Assumption~\ref{hyp:psi-alpha}.

\section{Application to epidemiology}
\label{sec:covid}
A major motivation of the present work is the need for data-driven hyperparameter fine-tuning strategies for recently proposed COVID-19 reproduction estimators leveraging the variational framework sketched in Equation~\eqref{eq:MAP}~\cite{abry2020spatial, pascal2022nonsmooth,du2023compared}.

\subsection{Weekly scaled Poisson epidemiological model}
\label{ssec:weekly-model}

The considered epidemiological model, briefly introduction in Section~\ref{sec:introduction}, in  Equation~\eqref{eq:model_epi},  was initially proposed in~\cite{cori2013new} and states that, conditionally to past infection counts $\Z_1, \hdots, \Z_{t-1}$, the number of new infections at time $t$, denoted $\Z_t$, follows a Poisson distribution of intensity equal to the product of the effective reproduction number at time $t$, $\R_t$, and the global infectiousness $\Phi_t(\Zvect) = \sum_{s \geq 1} \varphi_s \Z_{t-s}$ with $\varphi$ the serial interval distribution.\footnote{The \emph{serial interval} is the random delay between primary and secondary infections; its distribution encodes typical propagation time scale.}
Although very accurate for a posteriori consolidated data~\cite{cori2013new}, the daily version of this model appeared not suitable for real-time COVID-19 daily infection counts, an example of which is displayed in Figure~\ref{fig:daily}, gray curve.
Indeed, COVID-19 data are severely corrupted by administrative noise,  taking the form of missing counts during week-ends and holidays,  large cumulative counts on Mondays, pseudo-seasonalities, erroneous samples, to name but a few.
This noise has two major effects: first, reported COVID-19 daily infections counts contain outlier values,  notably during  week-ends as illustrated  in the gray curve in Figure~\ref{fig:daily}; second, the variance of the reported counts reflects both the intrinsic variance of the propagation process and the additional variance induced by the fluctuating administrative delays and errors.
To circumvent the presence of outlier samples, a classical strategy in epidemiology is to consider aggregated data~\cite{ferguson2016countering,charniga2021spatial,nash2023estimating}, e.g., at the scale of the week as illustrated in Figure~\ref{fig:daily}, black curve, which is far smoother and hence more realistic from an epidemiological point of view. 
To go further in the modeling of infection counts, the present work proposes to account for the increased variance observed in real COVID-19 infection counts through a scale parameter $\alpha > 1$ constant in time.  Up to the authors' knowledge,  scaled Poisson distributions have not yet been explored in the epidemiology literature, making this an original contribution.
Altogether, the proposed \emph{weekly scaled Poisson} epidemiological model writes
\begin{align}
\label{eq:modif-epi}
\Z_t \mid \Z_1, \hdots, \Z_{t-1} \sim \alpha \mathcal{P}\left(\frac{ \Phi_t(\Zvect)  \R_t}{\alpha} \right).
\end{align}
where time instants $t$ corresponds to weeks and the serial interval distribution used to compute $\Phi_t(\Zvect)$ is \emph{weekly} discretized.
The weekly discretized COVID-19 serial interval distribution can be obtained by coarsening the daily discretized distribution provided in~\cite{guzzetta2020,riccardo2020} into a weekly distribution, e.g., by performing an integration over one-week windows using the left rectangle method with a one-day integration step.
Under Model~\eqref{eq:modif-epi}, the expected number of infections at week $t$ is $\Phi_t(\Zvect)  \R_t$, unchanged compared to the standard Poisson model of Equation~\eqref{eq:model_epi}, but the variance is $\alpha \Phi_t(\Zvect)  \R_t$. If $\alpha > 1$, then the variance is larger by a factor $\alpha$ than the standard Poisson model variance.

\begin{figure}
\flushleft
\hspace{1.25cm}\begin{subfigure}{0.9\linewidth}
\centering
\includegraphics[width =0.7 \linewidth]{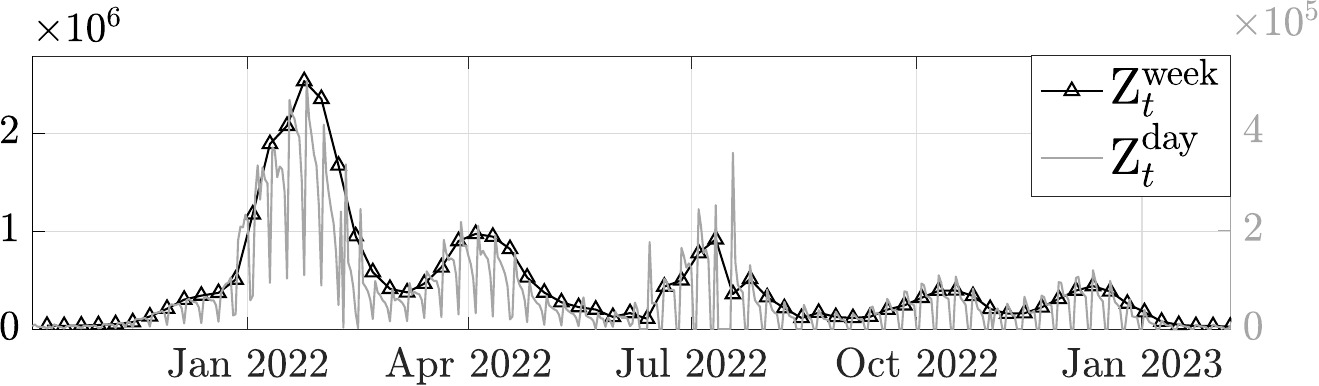}
\end{subfigure}

\vspace{-2mm}
\caption{\label{fig:daily}\textbf{COVID-19 daily vs. weekly infection counts.} French counts from Johns Hopkins University's repository,   covering 71 weeks from October 4, 2021 to February 6, 2023.}
\vspace{-3.5mm}
\end{figure}

\subsection{Data-driven reproduction number estimation strategy} 
\label{ssec:data-driven-Rhat}

Following~\cite{abry2020spatial,pascal2022nonsmooth}, $\R_t$ is assumed piecewise linear in time,  and the COVID-19 \emph{weekly} reproduction number is thus estimated from aggregated infection counts by plugging the weekly discretized propagation model introduced in Equation~\eqref{eq:modif-epi} into the variational estimator of Equation~\eqref{eq:practical-est} leading to
\begin{align}
\label{eq:covid-est}
\widehat{\Rvect}(\Zvect;\param) = \underset{\Rvect \in \mathbb{R}_+^T}{\mathrm{argmin}} \, \, \mathcal{D}_{\alphavect}\left(\Zvect , \Rvect \odot\Phivect(\Zvect)\right) + \param \lVert \textbf{D}_2 \Rvect \rVert_1.
\end{align}
The scale parameter, unknown in practice, is assumed constant in time. 
Remark that $\Phi_t(\Zvect)$ is linear in $\Zvect$, with $\sum_{s \geq 1} \varphi_s = 1$ by normalization of the serial interval distribution, hence if $s > t$, then $\partial_{\Z_t} \Phi_s(\Zvect)=\varphi_{s-t}$ is of order one.
Thus, to ensure that $\alpha$ is reflecting the inflated variance observed in COVID-19 data while satisfying Assumption~\ref{hyp:psi-alpha}, the following data-driven heuristics is proposed $\alpha = 0.1 \times \mathrm{std}(\Zvect)$, where $\mathrm{std}(\Zvect)$ is the empirical standard deviation of $\Zvect$.
The regularization parameter $\lambda$ controlling the level of regularization in Equation~\eqref{eq:covid-est} is selected in a data-driven manner through
\begin{align}
\label{eq:optimal-covid}
\lambda_{\overline{\mathsf{APURE}^{\mathcal{P}}_{\mc}}^N}  \in \underset{\param \in \mathbb{R_+}}{\mathrm{Argmin}} \, \, \overline{\mathsf{APURE}^{\mathcal{P}}_{\mc}}^N(\Zvect;\param)
\end{align}
leveraging the robustified unbiased prediction risk estimate of Equation~\eqref{eq:fdmc-apure}, whose accuracy and robustness have been demonstrated on synthetic data in Section~\ref{sec:numerical}.

\subsection{Experimental setup}

During the entire course of the COVID-19 pandemic, daily infection counts reported by National Health Agencies of 200+ countries worldwide were collected by Johns Hopkins University and made publicly available in an online repository.\footref{fn:jhu}
To demonstrate the universality of the proposed data-driven reproduction number estimation procedure,  four countries, belonging to different continents, and two time periods, corresponding to late and early epidemic stages, are considered.
Daily reported counts in France and India, from October 5, 2021 to February 6, 2023, and in Canada and Argentina, from December 22, 2020 to April 25, 2022,  are downloaded from JHU repository.\footnote{\label{fn:jhu}\url{https://coronavirus.jhu.edu/}}
Each sequence of daily counts is then aggregated at the scale of the week, yielding a vector of observed counts $\Zvect$ of length $T = 70$~weeks.
Finally, the associated global infectiousness $\Phivect(\Zvect)$ is computed using $\Phi_t(\Zvect) = \sum_{s \geq 1} \varphi_s \Z_{t-s}$, where $\varphi$ is the weekly discretized serial interval distribution introduced in Section~\ref{ssec:weekly-model}.
Top plots of Figure~\ref{fig:covid} show weekly infection counts and associated global infectiousness in France~\ref{sfig:France} and India~\ref{sfig:India}, from 2021 to 2023,  and Canada~\ref{sfig:Canada} and Argentina~\ref{sfig:Argentina}, from 2020 to 2022.

The minimization in Equation~\eqref{eq:covid-est} is performed using the Chambolle-Pock algorithm derived in~\cite{abry2020spatial,pascal2022nonsmooth},  as in Section~\ref{ssec:est-strat}.
Based on Section~\ref{sec:numerical}, the robustified unbiased prediction risk estimate is computed by averaging over $N=10$ realizations of the Monte Carlo vector, which yields a robust approximation of the prediction error, and accurate estimates when used as an oracle for hyperparameter selection.
The regularization parameter is chosen by solving~\eqref{eq:optimal-covid} through an exhaustive search over a logarithmic grid of hyperparameters $\lambda$ ranging from $10^{-2} \times \mathrm{std}(\Zvect)$ to $10^4 \times \mathrm{std}(\Zvect)$.

\begin{figure}[t!]
\centering
\begin{subfigure}{0.49\linewidth}
\includegraphics[width = \linewidth]{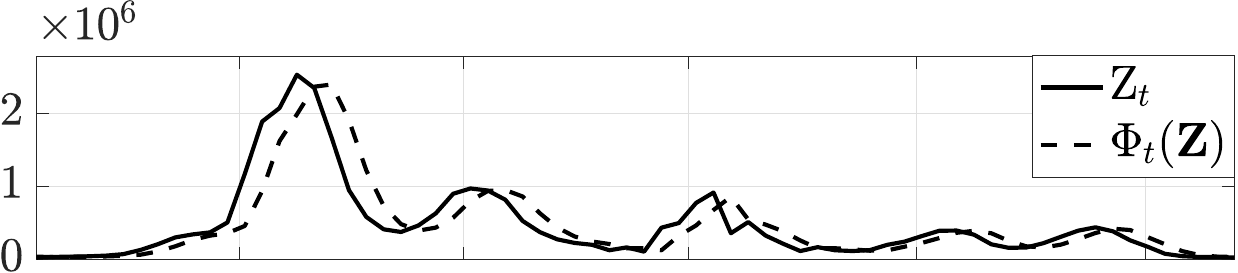}

\vspace{1mm}
\includegraphics[width = \linewidth]{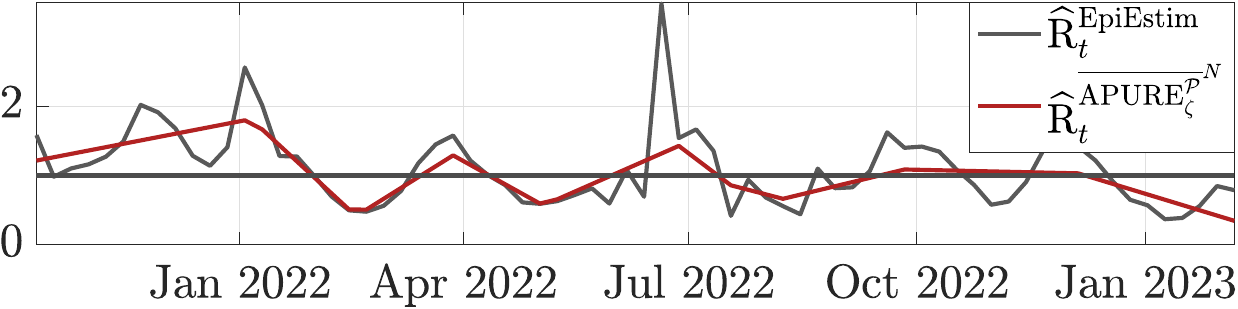}

\vspace{1mm}
\includegraphics[trim =  2.7cm 0mm 2.1cm 0mm, clip,width = \linewidth]{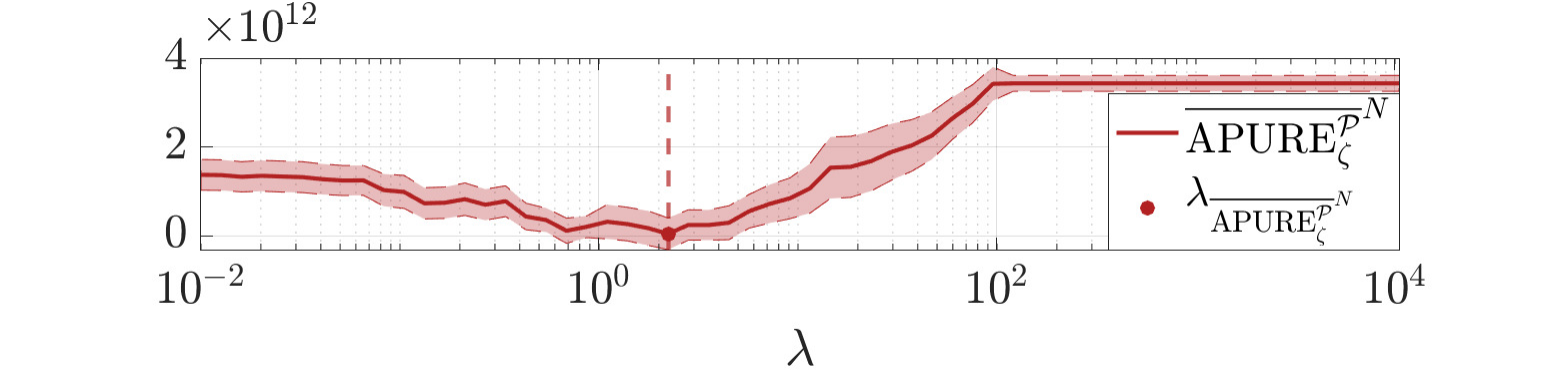}

\vspace{-2mm}
\subcaption{\label{sfig:France}France} 
\end{subfigure}
\begin{subfigure}{0.49\linewidth}
\includegraphics[width = \linewidth]{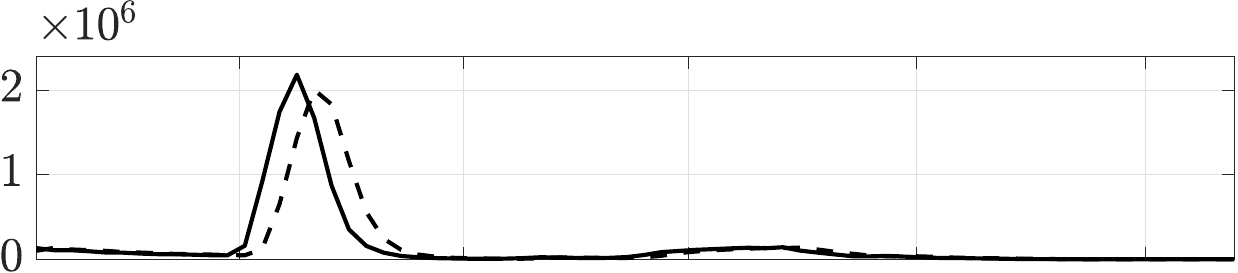}

\vspace{1mm}
\includegraphics[width = \linewidth]{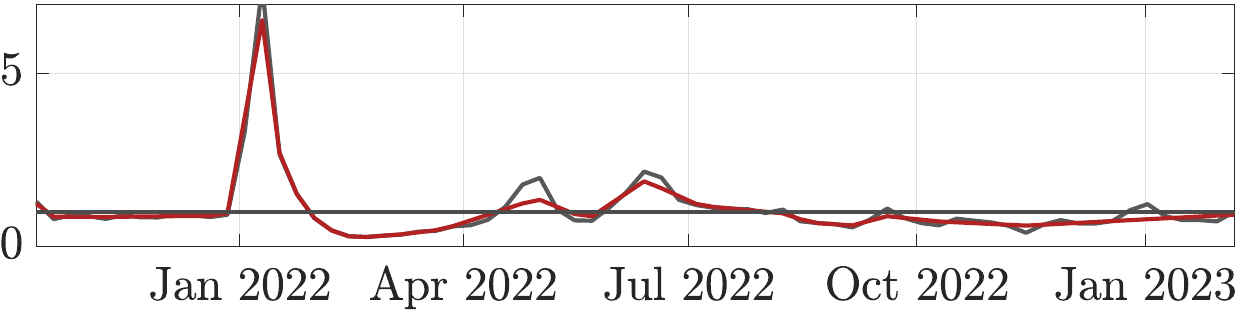}

\vspace{1mm}
\includegraphics[trim =  2.7cm 0mm 2.1cm 0mm, clip,width = \linewidth]{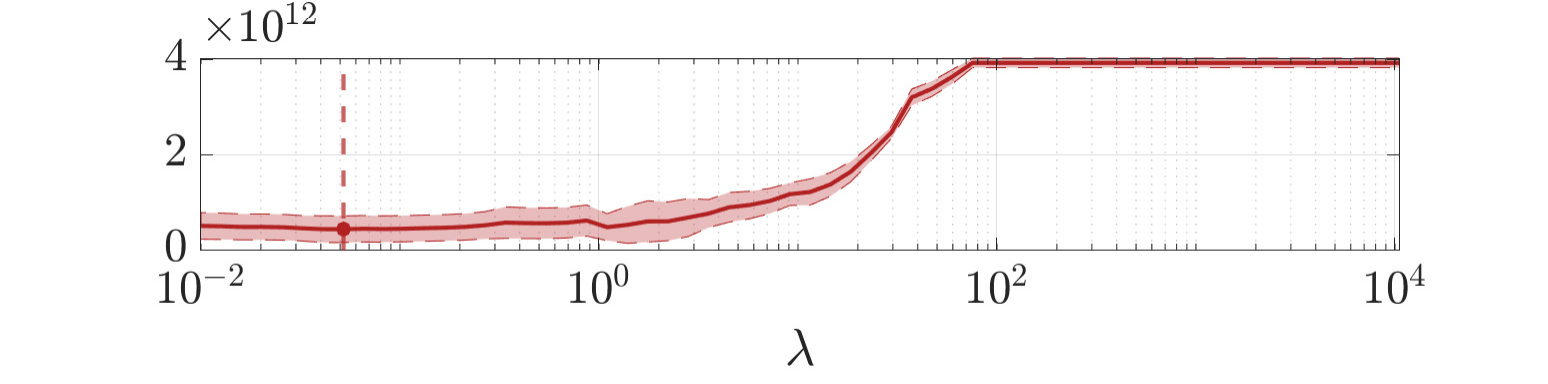}

\vspace{-2mm}
\subcaption{\label{sfig:India}India}
\end{subfigure}

\vspace{-1mm}
\begin{subfigure}{0.49\linewidth}
\includegraphics[width = \linewidth]{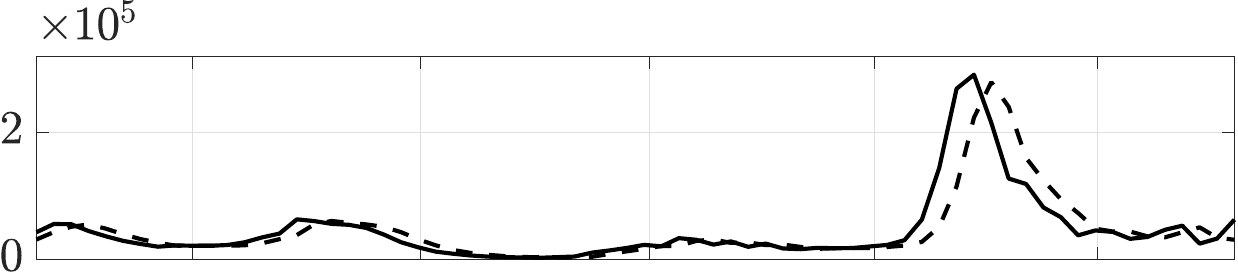}

\vspace{1mm}
\includegraphics[width = \linewidth]{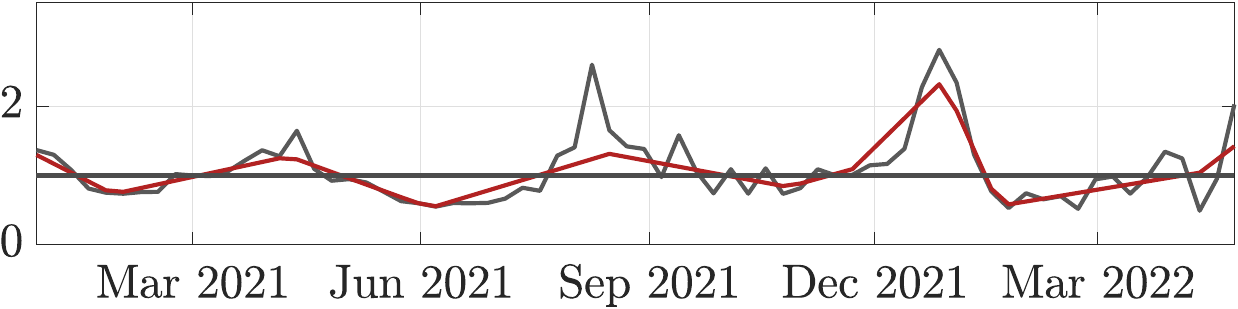}

\vspace{1mm}
\includegraphics[trim =  2.7cm 0mm 2.1cm 0mm, clip,width = \linewidth]{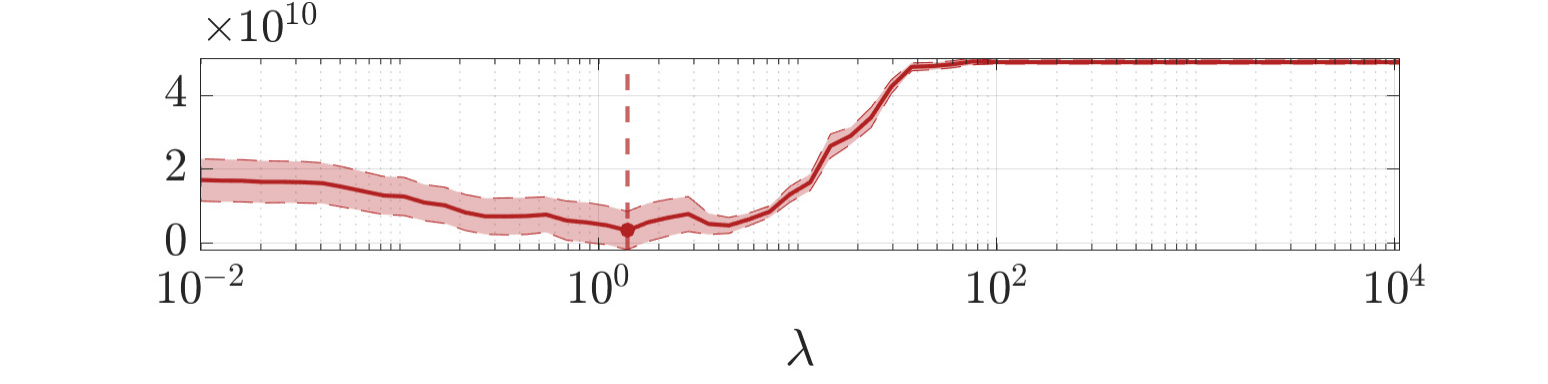}

\vspace{-2mm}
\subcaption{\label{sfig:Canada}Canada}
\end{subfigure}
\begin{subfigure}{0.49\linewidth}
\includegraphics[width = \linewidth]{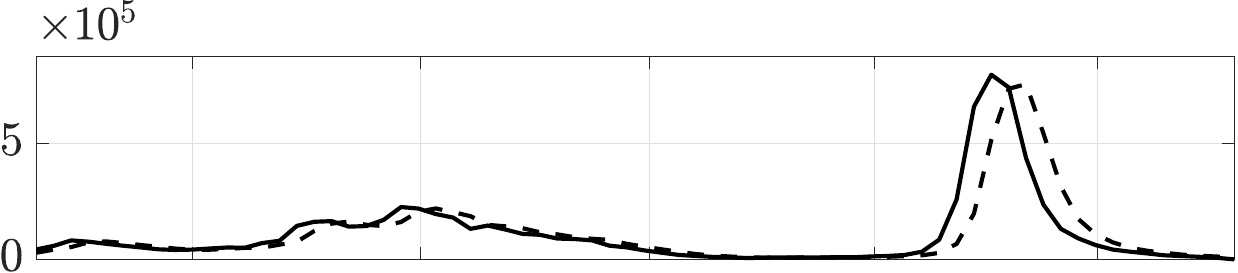}

\vspace{1mm}
\includegraphics[width = \linewidth]{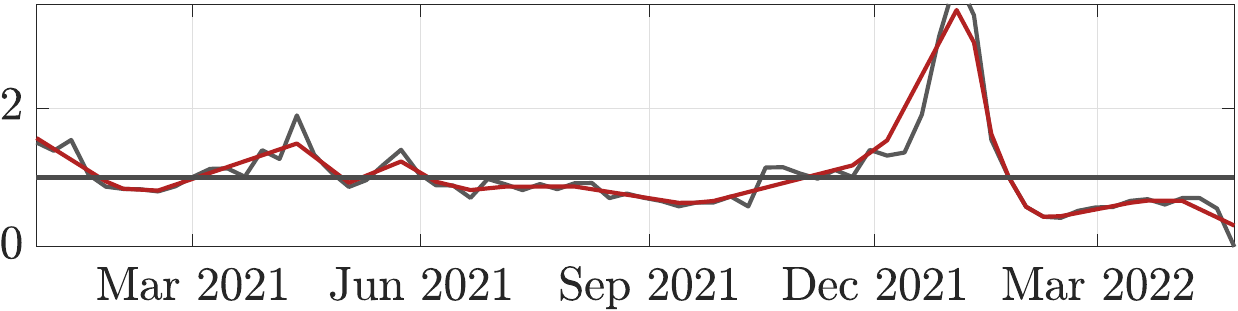}

\vspace{1mm}
\includegraphics[trim =  2.7cm 0mm 2.1cm 0mm, clip,width = \linewidth]{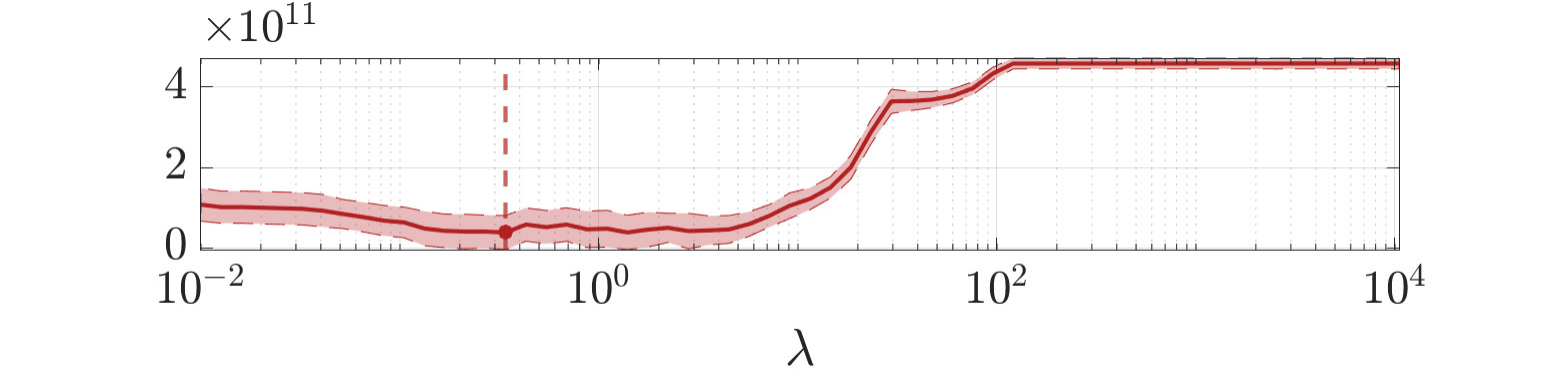}

\vspace{-2mm}
\subcaption{\label{sfig:Argentina}Argentina}
\end{subfigure}

\vspace{-3mm}
\caption{\label{fig:covid}\textbf{Data-driven estimation of the reproduction number of COVID-19 from weekly infection counts.} 
\underline{Top rows:} weekly aggregated infection counts. 
\underline{Middle rows:} estimated effective reproduction number with state-of-the-art EpiEstim~\cite{cori2013new,thompson2019improved} in dark gray \emph{versus} data-driven hyperparameter selection based on the robustified $\overline{\mathsf{APURE}^{\mathcal{P}}_{\mc}}^N$ in red. 
\underline{Bottom rows:} robustified autoregressive unbiased prediction risk estimate averaged over $N=10$ realizations of the Monte Carlo vector with $95\%$ Gaussian confidence interval for a logarithmically spaced range of regularization parameters and minimizing regularization parameter obtained through grid search.}
\vspace{-3.5mm}
\end{figure}

\subsection{Discussion}

For each country and pandemic phase configuration, the robustified unbiased prediction risk estimate as a function of the regularization parameter $\lambda$, accompanied by the associated 95$\%$ Gaussian confidence interval computed from the $N$ realizations of the Monte Carlo vector,  are displayed  in the bottom plots of Figure~\ref{fig:covid}.
The estimated optimal hyperparameters are indicated by the vertical dashed lines. 
For all four configurations, from Figure~\ref{sfig:France} to~\ref{sfig:Argentina}, the prediction risk estimate has a clearly identifiable minimizer.
The resulting reproduction number estimates are displayed as red curves in the middle plots of Figure~\ref{fig:covid}, together with the \emph{weekly} EpiEstim estimator, corresponding to Equation~\eqref{eq:epiestim} with $\delta = 1$ since data are already aggregated at the scale of the week, represented as dark gray curves. 
It is worth noting that for weekly aggregated counts, the state-of-the-art reproduction number estimator proposed in~\cite{cori2013new,thompson2019improved} almost exactly boils down to the Maximum Likelihood estimate~\eqref{eq:ML-est} since, first $\delta = 1$ induces no smoothing of data and, second  $a=1$ and $1/b = 1/5$ are very small compared to $\Z_t$ and $\Phi_t(\Zvect)$ which are of order $10^5$ to $10^6$ as shown in the top row plots in Figure~\ref{fig:covid}. 
Though, as far as the authors know, no other refined estimator of \emph{weekly} reproduction numbers has been proposed up to now.
The regularized estimates with data-driven hyperparameter selection, in red, vary more regularly than the EpiEstim estimates, in dark gray,  which exhibit unrealistic abrupt changes caused by administrative noise.
Hence, the regularized estimates  account more faithfully for the pandemic spread.
This is notably the case for France, Figure~\ref{sfig:France}, and Canada, Figure~\ref{sfig:Canada}, where the EpiEstim estimate presents rapid fluctuations leading to significant overestimation of the reproduction number,  e.g.,  in July 2022 in France and August 2021 in Canada.
This is all the more important to get accurate estimate of $\R_t$ as the sanitary measures have dramatic social and economical impact that decision makers need to balance.
In such context, both false alarms and missed events may have highly detrimental consequences.
Figure~\ref{sfig:India} shows that the proposed data-driven reproduction number estimator is able to capture both rapid bursts, corresponding to severe pandemic waves, as experimented by India in January 2022, and smaller waves, e.g., in June 2022, with similar accuracy.
Together with the quantitative assessment performed on synthetic data in Section~\ref{sec:numerical}, this qualitative assessment of the proposed data-driven reproduction number estimation strategy on real COVID-19 infection counts of different countries in various pandemic stages, demonstrates its ability to be used to monitor closely a viral epidemics, even in a context of degraded reporting suffering from outlier samples and inflated variance.
Finally, regarding computational complexity, estimating the weekly reproduction number during a time-period of $70$ weeks in a given country, that is obtaining one result among the four presented in Figure~\ref{fig:covid}, requires less than $50$ minutes on a laptop equipped with a 2 GHz Intel Core i5 processor. The proposed overall pipeline is thus easily usable for practical epidemic monitoring.

\section{Conclusion and perspectives}
\label{sec:conclusion}
A time-varying autoregressive model generalizing state-of-the-art viral epidemics models has been considered.
The estimation of the parameters of the model, namely the  time-varying \emph{reproduction coefficient},  has been framed as a nonstandard, highly nonstationary, inverse problem.
The proposed rigorous mathematical formulation enables to leverage the efficient and versatile variational framework to design accurate reproduction coefficient estimators  which are robust to high noise levels in the observations.
The major contribution consists in the design of asymptotically unbiased risk estimates, suited to the generalized autoregressive Poisson model, which are then plugged into an oracle strategy for fully data-driven fine-tuning of the  hyperparameters of the variational estimator, removing the major obstacle to its practical use.
The resulting data-driven time-varying estimation strategy has been assessed through intensive Monte Carlo simulations on synthetic data.
Taking advantage of the proposed extended mathematical framework, the standard epidemiological model for viral epidemics has been enriched to account for the low quality of COVID-19 data, leading to a novel \emph{weekly scaled Poisson} model. 
Finally, the data-driven time-varying estimation procedure is shown to yield very consistent estimation of the COVID-19 reproduction number in various countries and pandemic stages despite the low quality of reported data, demonstrating its practical applicability for epidemic monitoring in a crisis context.
The data-driven nature of the proposed epidemiological indicator estimation strategy and its low computational cost constitute major assets for its dissemination beyond signal processing as its use requires no expert knowledge.

Future work will consist in, first, implementing and assessing the unbiased risk estimates for the other noise models, namely additive Gaussian and multiplicative Gamma noises extending the generalized Stein estimators~\cite{eldar2008generalized}, in order to further enrich the proposed framework and to be able to apply the developed methodology to other models,  in epidemiology or beyond.
Second,  as in epidemiology the memory functions are often parametric, with parameters encapsulating the pathogen transmission characteristics, the proposed framework will be leveraged to perform simultaneously the extraction of the memory functions parameters and the fine-tuning of the estimator hyperparameters.

For the sake of reproducibility and dissemination, a Matlab toolbox is publicly available on the GitHub of the corresponding author at \url{https://github.com/bpascal-fr/APURE-Estim-Epi}.


\appendix


\section{Autoregressive Poisson Stein-like lemma}
\label{ssec:proof-lemma}

\begin{proof}[Proof of generalized time-varying autoregressive Poisson Stein-like lemma~\ref{lem:Poisson-Stein}]
Let $\Yvect \in \mathbb{R}_+^T$ random observations under the generalized time-varying autoregressive model~\eqref{eq:model} with ground truth reproduction number $\overline{\Xvect} \in \mathbb{R}_+^T$ and memory functions $\lbrace \Psi_t, \, t = 1, \hdots, T \rbrace$ satisfying Assumption~\ref{hyp:psi-alpha}, following a scaled Poisson distribution~\eqref{eq:poiss_model} of time-varying modulus $\alphavect\in (\mathbb{R}_+^*)^T$, and
a function $\Theta : \mathbb{R}_+^T \rightarrow\mathbb{R}$ satisfying Assumption~\ref{hyp:Poisson}.
First,  by Assumption~\ref{hyp:Poisson},  the expectation in the left-hand side of Equation~\eqref{eq:poisson-lemma} is well defined.
Hence, using the discrete probabilistic density function of the scaled Poisson distribution leads to
\begin{align}
\mathbb{E}_{\Yvect}  \left[ \Theta(\Yvect) \overline{\X}_t \Psi_t(\Yvect) \right] \label{eq:full_expect} 
=\sum_{\kk_1 = 0}^\infty \!  \hdots \!\! \sum_{\kk_T = 0}^\infty \!\! \Theta(\yvect) \overline{\X}_t \Psi_t(\yvect)  \prod_{s = 1}^T \frac{(\overline{\X}_s \Psi_s(\yvect))^{ \kk_s}}{\kk_s!}\mathrm{e}^{-\overline{\X}_s \Psi_s(\yvect)},
\end{align}
where  $\forall t, \, \y_t = \alpha_t \kk_t$, $\kk_t \in \mathbb{N}$.
Since $\Psi_s(\yvect)$ does not depend on $\y_s$ for $s \geq u$,  considering only the sums over $\kk_t, \hdots, \kk_T$, the above expression factorizes as
\begin{align}
\sum_{\kk_{t} = 0}^\infty \!  \hdots \!\! \sum_{\kk_T = 0}^\infty \!\!  \Theta(\yvect) \overline{\X}_t \Psi_t(\yvect)  \prod_{s = 1}^T \frac{(\overline{\X}_s \Psi_s(\yvect))^{ \kk_s}}{\kk_s!}\mathrm{e}^{-\overline{\X}_s \Psi_s(\yvect)} 
= \,   \prod_{s = 1}^{t-1} \frac{(\overline{\X}_s \Psi_s(\yvect))^{ \kk_s}}{\kk_s!}\mathrm{e}^{-\overline{\X}_s \Psi_s(\yvect)}  \label{eq:sum_kt}
 \times  \sum_{k_t = 0}^\infty \mathsf{G}(\yvect)  \overline{\X}_t \Psi_t(\yvect)  \frac{(\overline{\X}_t \Psi_t(\yvect))^{ \kk_t}}{\kk_t!} \mathrm{e}^{-\overline{\X}_t \Psi_t(\yvect)}  
 \end{align}
\begin{align}
\label{eq:def_Fy}
\text{where} \quad \mathsf{G}(\yvect) = \sum_{\kk_{t+1} = 0}^\infty \!  \hdots \!\! \sum_{\kk_T = 0}^\infty \!\! \Theta(\yvect) \prod_{u = t+1}^T \frac{(\overline{\X}_s \Psi_s(\yvect))^{ \kk_s}}{\kk_s!}\mathrm{e}^{-\overline{\X}_s \Psi_s(\yvect)} .
\end{align}
$\mathsf{G}(\yvect)$ is obtained by marginalizing over the variables $\y_{t+1},\hdots, \y_T$,  and thus it depends only on $\y_1, \hdots, \y_t$.
Then, the summation over $\kk_t$ appearing in~\eqref{eq:sum_kt} writes
\begin{align}
\label{eq:sum_ktp1}
 \sum_{\kk_t = 0}^\infty  \mathsf{G}(\yvect) \overline{\X}_t \Psi_t(\yvect)\frac{( \overline{\X}_t \Psi_t(\yvect))^{\kk_t}}{\kk_t!} \mathrm{e}^{-\overline{\X}_t \Psi_t(\yvect)}  
=   \sum_{\kk_t = 0}^\infty  \mathsf{G}(\yvect) \times (\kk_t + 1) \times  \frac{( \overline{\X}_t \Psi_t(\yvect))^{{\kk_t+1}}}{(\kk_t+1)!} \mathrm{e}^{-\overline{\X}_t \Psi_t(\yvect)}  
\end{align}
where, by Definition~\ref{def:model},  $\Psi_t(\yvect) = \Psi_t(\alpha_1 \kk_1, \hdots, \alpha_{t-1}\kk_{t-1})$\footnote{By convention $\Psi_1 = \Y_0$, where $\Y_0$ is a deterministic initialization.} is not depending on $\kk_t$, and $\mathsf{G}(\yvect) = \mathsf{G}(\alpha_1\kk_1, \hdots, \alpha_T \kk_T)$.
Renaming the summation variable in the right-hand side of Equation~\eqref{eq:sum_ktp1} so that $\kk_t + 1 \rightarrow \kk_t$, it follows
\begin{align}
\label{eq:sum_ktp2}
 \sum_{\kk_t = 0}^\infty  &\mathsf{G}(\yvect) \overline{\X}_t \Psi_t(\yvect)\frac{( \overline{\X}_t \Psi_t(\yvect))^{\kk_t}}{\kk_t!} \mathrm{e}^{-\overline{\X}_t \Psi_t(\yvect)}  
=   \sum_{\kk_t = 0}^\infty  \mathsf{G}^{-t}(\yvect)  \kk_t \frac{( \overline{\X}_t \Psi_t(\yvect))^{{\kk_t}}}{\kk_t!}  \mathrm{e}^{-\overline{\X}_t \Psi_t(\yvect)}
\end{align}
where $\Psi_t(\yvect)$ is unchanged as it depends only on the \emph{fixed} variables $\y_1, \hdots, \y_{t-1}$, and, by definition, $\mathsf{G}^{-t}(\yvect) = \mathsf{G}(\y_1, \hdots, \y_t - \alpha_t, \hdots, \y_t)$. 

\begin{remark}
The above discrete counterpart of integration by part on variable $\kk_t$ performed in Equations~\eqref{eq:sum_ktp1} and~\eqref{eq:sum_ktp2} can be seen as an application of the standard Poisson Stein's lemma counterpart~\cite{luisier2007new,le2014unbiased} on the function $\mathsf{G}$. Though, this reformulation is not enough to design Poisson Unbiased Risk Estimates for the generalized time-varying autoregressive model~\eqref{eq:model} as, contrary to the case of standard Poisson Unbiased Risk Estimate, neither the function $\mathsf{G}$ nor its translate versions $\mathsf{G}^{-t}$ can be reformulated easily as expectations of tractable quantities.
\end{remark}

Further computations are thus required.  By Assumption~\ref{hyp:psi-alpha}:
\begin{align}
\! \! \! \! \forall u \geq t+1, \quad \Psi_s(\widetilde{\yvect}) \simeq \Psi_s(\yvect) - \partial_{\y_t} \Psi_s(\yvect) \alpha_t \simeq  \Psi_s(\yvect),
\end{align}
which, when injected into the expression of $\mathsf{G}^{-t}(\yvect)$, obtained by replacing $\yvect = (\y_1, \hdots, \y_T)$ by $\widetilde{\yvect} = (\y_1, \hdots, \y_t - \alpha_t, \hdots, \y_T)$ in Equation~\eqref{eq:def_Fy},  yields $\mathsf{G}^{-t}(\yvect)\simeq$
\begin{align}
 \label{eq:sum_yt}
\! \! \!\sum_{\kk_{t+1} = 0}^\infty \!  \hdots \!\! \sum_{\kk_T = 0}^\infty \!\! \Theta^{-t}(\yvect) \prod_{u = t+1}^T \! \! \frac{(\overline{\X}_s \Psi_s(\yvect))^{ \kk_s}}{\kk_s!} \mathrm{e}^{-\overline{\X}_s \Psi_s(\yvect)}
\end{align}
where the change of variable $\yvect \rightarrow \widetilde{\yvect}$ only affects the term in $\Theta$ but not the Poisson densities.
Finally, injecting~\eqref{eq:sum_yt} into~\eqref{eq:full_expect}, yields 
\begin{align}
\mathbb{E}_{\Yvect}  \left[ \Theta(\Yvect) \overline{\X}_t \Psi_t(\Yvect) \right] \simeq \sum_{\kk_1 = 0}^\infty \!  \hdots \!\! \sum_{\kk_T = 0}^\infty \!\!  \Theta^{-t}(\yvect)  \kk_t \prod_{s = 1}^T \frac{(\overline{\X}_s \Psi_s(\yvect))^{ \kk_s}}{\kk_s!} \mathrm{e}^{-\overline{\X}_s \Psi_s(\yvect)}
\end{align}
with $\forall t, \, \y_t = \alpha \kk_t$, in which one recognizes $ \mathbb{E}_{\Yvect}  \left[ \Theta^{-t}(\Yvect) \Y_t \right]$ by definition of the expectation on $\Yvect$.
\end{proof}

\section{Autoregressive Poisson Unbiased Risk Estimate}
\label{ssec:apure-proof}
\begin{proof}[Proof of Theorem~\ref{thm:pure-poisson}]

Let $\mathcal{P}$ be the prediction risk defined in Equation~\eqref{eq:pred_risk}. The derivation of an unbiased estimate of $\mathcal{P}$ relies on the expansion of the prediction error:
\begin{align}
\left\lVert  \widehat{\Xvect}(\Yvect ; \paramvect) \odot \Psivect(\Yvect)  - \overline{\Xvect} \odot \Psivect(\Yvect) \right\rVert^2_2 =\left\lVert   \widehat{\Xvect}(\Yvect ; \paramvect)  \odot \Psivect(\Yvect) \right\rVert^2_2   - 2 \left\langle\widehat{\Xvect}(\Yvect ; \paramvect)  \odot \Psivect(\Yvect) ,   \overline{\Xvect} \odot \Psivect(\Yvect)\right\rangle + 
 \left\lVert \overline{\Xvect} \odot \Psivect(\Yvect)  \right\rVert^2_2. 
 \label{eq:expand_first}
\end{align}
The first term in the right-hand side of~\eqref{eq:expand_first} only depends on observations and hyperparameters, and hence appears as is in $\mathsf{APURE}^{\mathcal{P}}$ in Equation~\eqref{eq:apure}.
The second and third terms depend on the inaccessible ground truth $\overline{\Xvect}$, and hence need to be reformulated, taking care of not introducing any bias.
Considering the second term, 
\begin{align*}
\mathbb{E}_{\Yvect} \left[\left\langle\widehat{\Xvect}(\Yvect;\paramvect) \odot \Psivect(\Yvect) ,   \overline{\Xvect} \odot \Psivect(\Yvect)\right\rangle \right] 
=  \sum_{t = 1}^T \mathbb{E}_{\Yvect} \left[\widehat{\X}_t(\Yvect;\paramvect)  \Psi_t(\Yvect)  \overline{\X}_t  \Psi_t(\Yvect) \right].
\end{align*}
By hypothesis, the memory functions satisfy Assumption~\ref{hyp:psi-alpha}. Further, for any $\paramvect \in \paramset$, and any $t \in \lbrace 1, \hdots, T \rbrace$, the function $\Yvect \mapsto \widehat{\X}_t(\Yvect;\paramvect)\Psi_t(\Yvect)$ satisfies Assumption~\ref{hyp:Poisson}, hence the autoregressive Poisson lemma~\ref{lem:Poisson-Stein} applies and yields
\begin{align}
\begin{split}
 \mathbb{E}_{\Yvect} \left[\widehat{\X}_t(\Yvect;\paramvect)  \Psi_t(\Yvect)  \overline{\X}_t  \Psi_t(\Yvect) \right] 
  \underset{\alphavect \rightarrow \boldsymbol{0}}{=}  \mathbb{E}_{\Yvect} \left[\left(\widehat{\X}_t(\Yvect;\paramvect)  \Psi_t(\Yvect) \right)^{-t} \Y_t \right]
 \end{split}
\end{align}
where $\left(\widehat{\X}_t(\Yvect;\paramvect) \Psi_t(\Yvect)\right)^{-t} = \widehat{\X}_t^{-t}(\Yvect;\paramvect) \Psi_t^{-t}(\Yvect)$.
Since $\Psi_t$ does not depend on $\Y_t$,  $\Psi_t^{-t}(\Yvect)=\Psi_t(\Yvect)$, leading to
\begin{align}
\begin{split}
\label{eq:second-term-pred}
 \mathbb{E}_{\Yvect}  \left[\widehat{\X}_t(\Yvect;\paramvect)  \Psi_t(\Yvect)  \overline{\X}_t  \Psi_t(\Yvect) \right] 
 \underset{\alphavect \rightarrow \boldsymbol{0}}{=}  \mathbb{E}_{\Yvect} \left[\widehat{\X}_t^{-t}(\Yvect;\paramvect)  \Psi_t(\Yvect)  \Y_t \right]
 \end{split}
\end{align}
in which the reader recognizes the expression of the second term in the definition of $\mathsf{APURE}^{\mathcal{P}}$ in Equation~\eqref{eq:apure}.
As for the third term of Equation~\eqref{eq:expand_first}, it writes
\begin{align}
\mathbb{E}_{\Yvect} \left[  \left\lVert \overline{\Xvect} \odot \Psivect(\Yvect)  \right\rVert^2_2 \right] = \sum_{t = 1}^T  \mathbb{E}_{\Yvect} \left[ \lvert \overline{\X}_t \Psi_t(\Yvect)\rvert^2 \right].
\end{align}
By hypothesis, the memory functions satisfy Assumption~\ref{hyp:psi-alpha}. Moreover for all $t \in \lbrace 1, \hdots, T \rbrace$,  the function $\Yvect \mapsto \overline{\X}_t \Psi_t(\Yvect)$ satisfies Assumption~\ref{hyp:Poisson}, hence the autoregressive Poisson lemma~\ref{lem:Poisson-Stein} applies and yields
\begin{align}
 \mathbb{E}_{\Yvect} \left[ ( \overline{\X}_t \Psi_t(\Yvect))^2 \right] \underset{\alphavect \rightarrow \boldsymbol{0}}{=}  \mathbb{E}_{\Yvect} \left[  \left(\overline{\X}_t \Psi_t(\Yvect)\right)^{-t}   \Y_t \right]
\end{align}
where $\left(\overline{\X}_t \Psi_t(\Yvect)\right)^{-t} = \overline{\X}_t \Psi_t^{-t}(\Yvect)$.
But, since $\Psi_t$ does not depend on $\Y_t$,  one has $\overline{\X}_t \Psi_t^{-t}(\Yvect)= \overline{\X}_t \Psi_t(\Yvect)$, and hence
 \begin{align*}
\mathbb{E}_{\Yvect} \left[ ( \overline{\X}_t \Psi_t(\Yvect))^2 \right]  \underset{\alphavect \rightarrow \boldsymbol{0}}{=}  \mathbb{E}_{\Yvect} \left[   \overline{\X}_t \Psi_t(\Yvect) \Y_t  \right] =  \mathbb{E}_{\Yvect} \left[  \Y_t   \overline{\X}_t \Psi_t(\Yvect) \right].
\end{align*}
Then, remarking that the function $\Yvect \mapsto \Y_t$ satisfies Assumption~\ref{hyp:Poisson},  the autoregressive Poisson Stein's lemma~\ref{lem:Poisson-Stein} applies once again, and it follows that
\begin{align}
\begin{split}
\label{eq:third-term-pred}
\mathbb{E}_{\Yvect} \left[ ( \overline{\X}_t \Psi_t(\Yvect))^2 \right] \underset{\alphavect\rightarrow \boldsymbol{0}}{=}  \mathbb{E}_{\Yvect} \left[   \left( \Y_t - \alpha_t \right)  \Y_t\right].
\end{split}
\end{align}
Equations \eqref{eq:second-term-pred} and~\eqref{eq:third-term-pred}, combined with the expansion provided in Equation~\eqref{eq:expand_first} demonstrates the asymptotic unbiasedness of $\mathsf{APURE}^{\mathcal{P}}$ stated in Equation~\eqref{eq:apure-thm1}.

Assuming that $\forall t\in \lbrace 1, \hdots, T\rbrace, \,  \forall \Yvect \in \mathbb{R}^T, \, \Psi_t(\Yvect) \neq 0$,  the estimation risk $\mathcal{E}$ of Equation~\eqref{eq:est_risk} is well-defined and the estimation error can be expanded as:
\begin{align}
\left\lVert  \widehat{\Xvect}(\Yvect ; \paramvect) - \overline{\Xvect} \right\rVert^2_2 =\left\lVert   \widehat{\Xvect}(\Yvect ; \paramvect)   \right\rVert^2_2 
- 2 \left\langle\widehat{\Xvect}(\Yvect ; \paramvect) ,   \overline{\Xvect} \right\rangle + 
 \left\lVert \overline{\Xvect}  \right\rVert^2_2. 
 \label{eq:expand_est}
\end{align}
The first term of the expansion, which is fully data-dependent, is kept as is in the definition of $\mathsf{APURE}^{\mathcal{E}}$ in Equation~\eqref{eq:est_risk}. The second and  third terms depend on the ground truth and have to be carefully reformulated, while avoiding to introduce any bias.
The second term writes
\begin{align*}
\mathbb{E}_{\Yvect} &\left[\left\langle\widehat{\Xvect}(\Yvect;\paramvect) ,   \overline{\Xvect} \right\rangle \right] =  \sum_{t = 1}^T \mathbb{E}_{\Yvect} \left[\widehat{\X}_t(\Yvect;\paramvect)  \overline{\X}_t  \right].
\end{align*}
By hypothesis, the memory functions satisfy Assumption~\ref{hyp:psi-alpha}. Further,  for any $\paramvect \in \paramset$, and any $t \in \lbrace 1, \hdots, T \rbrace$, the function $\Yvect \mapsto \widehat{\X}_t(\Yvect;\paramvect)/\Psi_t(\Yvect)$ is well-defined and satisfies Assumption~\ref{hyp:Poisson}, hence the autoregressive Poisson lemma~\ref{lem:Poisson-Stein} applies and yields
\begin{align}
\label{eq:second-term-est}
 \mathbb{E}_{\Yvect} \left[\widehat{\X}_t(\Yvect;\paramvect)  \overline{\X}_t \right]  =\, \,  \, \mathbb{E}_{\Yvect} \left[\frac{ \widehat{\X}_t(\Yvect;\paramvect)}{\Psi_t(\Yvect)} \overline{\X}_t \Psi_t(\Yvect) \right]
  \underset{\alphavect \rightarrow \boldsymbol{0}}{=}  \mathbb{E}_{\Yvect} \left[\left(\frac{\widehat{\X}_t(\Yvect;\paramvect)}{  \Psi_t(\Yvect)} \right)^{-t} \Y_t \right]
   \underset{\alphavect \rightarrow \boldsymbol{0}}{=}  \mathbb{E}_{\Yvect} \left[\frac{\widehat{\X}_t^{-t}(\Yvect;\paramvect)}{  \Psi_t(\Yvect)}  \Y_t \right]
\end{align}
since $\Psi_t(\Yvect)$ does not depend on $\Y_t$. The third term of Equation~\eqref{eq:expand_est} writes
\begin{align}
\mathbb{E}_{\Yvect} \left[  \left\lVert \overline{\Xvect}  \right\rVert^2_2 \right] = \sum_{t = 1}^T  \mathbb{E}_{\Yvect} \left[   \overline{\X}_t^2 \right] = \sum_{t = 1}^T  \mathbb{E}_{\Yvect} \left[  \frac{\overline{\X}_t}{\Psi_t(\Yvect)} \overline{\X}_t \Psi_t(\Yvect)  \right].
\end{align}
By hypothesis, the memory functions satisfy Assumption~\ref{hyp:psi-alpha}. Moreover, for any $\paramvect \in \paramset$, and any $t \in \lbrace 1, \hdots, T \rbrace$, the function $\Yvect \mapsto \widehat{\X}_t(\Yvect;\paramvect)/\Psi_t(\Yvect)$ is well-defined and satisfies Assumption~\ref{hyp:Poisson}, hence the autoregressive Poisson lemma~\ref{lem:Poisson-Stein} applies,  using that $\Psi_t^{-t}(\Yvect)  = \Psi_t(\Yvect)$, it leads to
\begin{align}
\mathbb{E}_{\Yvect} \left[  \frac{\overline{\X}_t}{\Psi_t(\Yvect)} \overline{\X}_t \Psi_t(\Yvect)  \right] \underset{\alphavect \rightarrow \boldsymbol{0}}{=} \mathbb{E}_{\Yvect} \left[  \frac{\overline{\X}_t}{\Psi_t(\Yvect)} \Y_t \right]
\underset{\alphavect \rightarrow \boldsymbol{0}}{=} \mathbb{E}_{\Yvect} \left[  \frac{\Y_t}{\Psi_t^2(\Yvect)} \overline{\X}_t \Psi_t(\Yvect)\right]
\end{align}
By hypothesis, the memory functions satisfy Assumption~\ref{hyp:psi-alpha}.
Further remarking that the function $\Yvect \mapsto\Psi_t(\Yvect)$ satisfies Assumption~\ref{hyp:Poisson},  the autoregressive Poisson lemma~\ref{lem:Poisson-Stein} applies once again,  and using that $\Psi_t^{-t}(\Yvect)  = \Psi_t(\Yvect)$,  one gets
\begin{align}
\label{eq:third-term-est}
\mathbb{E}_{\Yvect} \left[  \frac{\overline{\X}_t}{\Psi_t(\Yvect)} \overline{\X}_t \Psi_t(\Yvect)  \right]&\underset{\alphavect \rightarrow \boldsymbol{0}}{=}  \mathbb{E}_{\Yvect} \left[ \left(  \frac{\Y_t}{\Psi_t^2(\Yvect)}\right)^{-t} \Y_t \right] \underset{\alphavect \rightarrow \boldsymbol{0}}{=}  \mathbb{E}_{\Yvect} \left[   \frac{(\Y_t - \alpha_t)\Y_t}{\Psi_t^2(\Yvect)} \right].
\end{align}
Equations \eqref{eq:second-term-est} and~\eqref{eq:third-term-est}, combined with the expansion provided in Equation~\eqref{eq:expand_est} demonstrates the asymptotic unbiasedness of $\mathsf{APURE}^{\mathcal{P}}$ stated in Equation~\eqref{eq:apure-est-thm1}.
\end{proof}


\section{Finite Difference Monte Carlo Estimators}
\label{sec:fdmc-proof}

\begin{proof}[Proof of Theorem~\ref{thm:fdmc}]
The Finite Difference Monte Carlo strategy applied to the prediction risk estimate $\mathsf{APURE}^{\mathcal{P}}$ of Equation~\eqref{eq:apure} consists in rewriting the second term in the definition of $\mathsf{APURE}^{\mathcal{P}}$ in Equation~\eqref{eq:apure} in a tractable way. The proof thus focuses on this term and aims at demonstrating that
\begin{align}
 \mathbb{E}_{\Yvect}  \left[ \sum_{t=1}^T \widehat{\X}_t^{-t} (\Yvect;\paramvect) \Psi_t(\Yvect) \Y_t\right] =    \mathbb{E}_{\Yvect,\mc} \left[ \sum_{t=1}^T  \widehat{\X}_t(\Yvect;\paramvect) \Psi_t(\Yvect)  \Y_t  \right.
 \quad - \left \langle   \mathrm{diag}(\boldsymbol{\alpha} \odot \Psivect(\Yvect) )  \partial_{\Yvect} \widehat{\Xvect} [\mc] , \mathrm{diag}(\Yvect)\mc \right\rangle \left], \textcolor{white}{\sum_{t=1}^T}\right.
\end{align}
which can be shown by alternatively demonstrating that for any $\paramvect\in\paramset$ and $t\in \lbrace 1, \hdots, T \rbrace$
\begin{align}
\label{eq:new-fdmc}
\mathbb{E}_{\Yvect} \left[ \widehat{\X}_t^{-t} (\Yvect;\paramvect) \Psi_t(\Yvect) \Y_t\right] = 
\mathbb{E}_{\Yvect,\mc} \left[   \widehat{\X}_t(\Yvect;\paramvect) \Psi_t(\Yvect)  \Y_t  -    \alpha_t  \Psi_t(\Yvect) \left( \partial_{\Yvect} \widehat{\Xvect} [\mc]\right)_t  \Y_t   \mcc_t  \right]
\end{align}
where $\partial_{\Yvect} \widehat{\Xvect} [\mc] \in \mathbb{R}^T$ is the differential of $\widehat{\Xvect}(\Yvect;\paramvect)$ with respect to the variable $\Yvect$ at $(\Yvect;\paramvect)$ applied to the $T$-dimensional Monte Carlo vector $\mc$ where for the sake of conciseness the point $(\Yvect;\paramvect)$ at which the differential is applied is omitted.\footnote{Remind that the differential of a function $f : \mathbb{R}^T \rightarrow \mathbb{R}^T$ at a given point $\Zvect \in \mathbb{R}^T$ is a linear application $\partial_{\yvect} f (\Zvect) [\boldsymbol{\cdot}] : \mathbb{R}^T \rightarrow \mathbb{R}^T$. }

To prove~\eqref{eq:new-fdmc}, first remark that, in the limit of small scale parameter $\alpha_t \rightarrow 0$,  Assumption~\ref{hyp:diff} implies that
\begin{align}
\label{eq:fd}
\widehat{\X}_t^{-t}(\Yvect;\paramvect) \underset{\alphavect\rightarrow \boldsymbol{0}}{=} \widehat{\X}_t(\Yvect) - \alpha_t \dfrac{\partial \widehat{\X}_t}{\partial \Y_t}.
\end{align}
The first term in the right-hand side of Equation~\eqref{eq:new-fdmc} stems directly from the first term in Equation~\eqref{eq:fd}. Then, since $\mc$ is a standard Gaussian vector, by definition $\mathbb{E}_{\mc} \left[ \mcc_s \mcc_t \right] = \delta_{s,t}$ where $\delta_{s,t}$ is the Kronecker delta,\footnote{By definition, $\forall s,t \in \mathbb{N}$, $\delta_{s,s} = 1$ and if $s\neq t$, $\delta_{s,t} = 0$.}
\begin{align}
\begin{split}
\label{eq:mc}
\dfrac{\partial \widehat{\X}_t}{\partial \Y_t}  =  \sum_{s = 1}^T \dfrac{\partial \widehat{\X}_t}{\partial \Y_s} \mathbb{E}_{\mc}\left[\mcc_s \mcc_t\right] = \mathbb{E}_{\mc}\left[ \sum_{s = 1}^T \dfrac{\partial \widehat{\X}_t}{\partial \Y_s} \mcc_s \mcc_t\right],
\end{split}
\end{align}
in which one recognizes the $t$th component of the differential of $\widehat{\Xvect}$ with respect to $\Yvect$ applied to the vector $\mc$
\begin{align*}
  \sum_{s = 1}^T \dfrac{\partial \widehat{\X}_t}{\partial \Y_s} \mcc_s =\left( \partial_{\Yvect} \widehat{\Xvect} [\mc]\right)_t.
\end{align*}
Then, multiplying Equations~\eqref{eq:fd} and~\eqref{eq:mc} by $\Psi_t(\Yvect)  \Y_t $ and combining them, one gets
\begin{align}
\begin{split}
\mathbb{E}_{\mc}\left[\widehat{\X}_t^{-t}(\Yvect;\paramvect)\Psi_t(\Yvect)  \Y_t \right] 
\underset{\alphavect\rightarrow \boldsymbol{0}}{=} \mathbb{E}_{\mc}\left[ \widehat{\X}_t(\Yvect)\Psi_t(\Yvect)  \Y_t - \alpha_t \left( \partial_{\Yvect} \widehat{\Xvect} [\mc]\right)_t \mcc_t \Psi_t(\Yvect)  \Y_t\right].
\end{split}
\end{align}
Taking the expectation with respect to $\mathbb{E}_{\Yvect}$ on both sides demonstrates~\eqref{eq:new-fdmc}, and finally shows the asymptotic unbiasedness of $\mathsf{APURE}^{\mathcal{P}}_{\mc}$ stated in Equation~\eqref{eq:fdmc-pred}.

Assuming that $\forall t\in \lbrace 1, \hdots, T\rbrace, \,  \forall \Yvect \in \mathbb{R}^T, \, \Psi_t(\Yvect) \neq 0$,  a similar proof holds for the estimation risk estimate.
Applying the Finite Difference Monte Carlo strategy to the estimation risk estimate $\mathsf{APURE}^{\mathcal{E}}$ of Equation~\eqref{eq:apure-est} amounts to rewrite the second term in the definition of $\mathsf{APURE}^{\mathcal{E}}$ in Equation~\eqref{eq:apure-est} in a tractable way. The proof thus focuses on this term and aims at demonstrating that
\begin{align}
 \mathbb{E}_{\Yvect} \left[ \sum_{t=1}^T \frac{\widehat{\X}_t^{-t} (\Yvect;\paramvect)}{ \Psi_t(\Yvect)} \Y_t\right] = \mathbb{E}_{\Yvect,\mc} \left[ \sum_{t=1}^T \sum_{t=1}^T \frac{ \widehat{\X}_t(\Yvect;\paramvect) }{ \Psi_t(\Yvect)}  \Y_t \right.\nonumber 
 \quad -\left \langle \mathrm{diag}(\alphavect \centerdot/ \Psivect(\Yvect)) \, \partial_{\Yvect} \widehat{\Xvect} [\mc],  \mathrm{diag}(\Yvect)\mc \right\rangle \left], \textcolor{white}{\sum_{t=1}^T}\right.
\end{align}
which can be shown by alternatively demonstrating that for any $\paramvect\in\paramset$ and $t\in \lbrace 1, \hdots, T \rbrace$
\begin{align}
\label{eq:new-fdmc-est}
 \mathbb{E}_{\Yvect} \left[\frac{ \widehat{\X}_t^{-t} (\Yvect;\paramvect) }{\Psi_t(\Yvect)} \Y_t\right] = 
 \mathbb{E}_{\Yvect,\mc} \left[ \frac{  \widehat{\X}_t(\Yvect;\paramvect) }{\Psi_t(\Yvect) } \Y_t  -   \frac{ \alpha_t }{\Psi_t(\Yvect) }\left( \partial_{\Yvect} \widehat{\Xvect} [\mc]\right)_t   \Y_t  \mcc_t  \right].
\end{align}

The first term in the right-hand of Equation~\eqref{eq:new-fdmc-est} stems directly from the first term in the Taylor expansion of Equation~\eqref{eq:fd}.
The second term Equation~\eqref{eq:fd} when injected in~\eqref{eq:new-fdmc-est} and rewritten leveraging the Monte Carlo strategy of Equation~\eqref{eq:mc} yields exactly the second term in Equation~\eqref{eq:new-fdmc-est}.
Taking the expectation with respect to $\mathbb{E}_{\Yvect}$ on both sides demonstrates~\eqref{eq:new-fdmc-est}, and finally shows the asymptotic unbiasedness of $\mathsf{APURE}^{\mathcal{E}}_{\mc}$ stated in Equation~\eqref{eq:fdmc-est}.
\end{proof}

\setlength{\bibsep}{4pt}

\bibliographystyle{plain}
\bibliography{biblio}

\end{document}